   \newenvironment{assum} 
  {\list{}{\leftmargin=0.2in\rightmargin=0.2in}\item[]} 
  {\endlist}
    \newcommand{\Ab}[1]{\llbracket #1 \rrbracket}
\renewcommand{\Gamma}{\mathbb{Z}^d}
 \newcommand{\epsi}{\varepsilon}
 \newcommand{\E}{{\mathrm{e}}}
 \newcommand{\I}{\mathrm{i}}
 \newcommand{\R}{ \mathbb{R} }
 \newcommand{\C}{ \mathbb{C} }
 \newcommand{\N}{ \mathbb{N} }
 \newcommand{\Z}{ \mathbb{Z} }
 \newcommand{\D}{\mathrm{d}}
 \newcommand{\Or}{{\mathcal{O}}}
 \newcommand{\bm}{\begin{pmatrix}}
 	\newcommand{\Em}{\end{pmatrix}}
 \newcommand{\Aloc}{\mathcal{A}_{\rm loc}}
 \newcommand{\SLT}{SLT}
\begin{document}
	\allowdisplaybreaks
\numberwithin{equation}{section}

	\theoremstyle{plain} 
	\newtheorem{prop}{Proposition}[section]
	\newtheorem{thm}[prop]{Theorem}
	\newtheorem{lem}[prop]{Lemma}
	\newtheorem{cor}[prop]{Corollary}
	
	 \theoremstyle{definition}
	\newtheorem{defi}[prop]{Definition}
	\newtheorem{rmk}[prop]{Remark}

\title{Adiabatic theorem in the thermodynamic limit:\\ Systems with a gap in the bulk.}

\author{ Joscha Henheik \\
	\textit{\footnotesize Mathematisches Institut, Eberhard-Karls-Universit\"at,} \\{ \it \footnotesize Auf der Morgenstelle 10, 72076
		T\"ubingen, Germany} \\
	{\footnotesize and} \\
	{\it \footnotesize IST Austria, Am Campus 1, 3400 Klosterneuburg, Austria} \\[7mm]
 Stefan Teufel\footnote{stefan.teufel@uni-tuebingen.de} \\ \textit{\footnotesize Mathematisches Institut, Eberhard-Karls-Universit\"at,} \\{ \it \footnotesize Auf der Morgenstelle 10, 72076
		T\"ubingen, Germany} }
\maketitle

\begin{abstract}
We prove a generalised super-adiabatic theorem for extended fermionic systems assuming a spectral gap only in the bulk. More precisely, we assume that the infinite system has a unique ground state and that the corresponding  GNS-Hamiltonian has a spectral gap above its eigenvalue zero. Moreover, we show that a similar adiabatic theorem also holds in the bulk of finite systems up to errors that vanish faster than any inverse power of  the system size,  although the corresponding finite volume Hamiltonians need not have a spectral gap. 
\end{abstract}

\section{Introduction}	

We prove an adiabatic theorem for 
 the automorphism group acting on the observable algebra  of quasi-local operators that  describes the dynamics of an interacting fermionic quantum system on the lattice~$\Z^d$ with short range interactions.
For this we assume that the generator of the automorphism group, the Liouvillian,  has a unique ground state and that the GNS Hamiltonian constructed from it has a spectral gap above zero.  

 Such systems are usually obtained by taking the thermodynamic limit $\Lambda\nearrow\Z^d$ for a sequence of systems defined on finite  domains $\Lambda\subset \Z^d$. In \cite{henheikteufel20202} we prove  an analogous  adiabatic theorem assuming a uniform gap  for  each finite system in such a sequence.  A spectral gap above the ground state, i.e.\ a minimal energy for excitations,  is characteristic for insulating materials. 
However, on finite domains it is often the case that one-body states supported close to the boundary, so called edge states, allow for excitations with arbitrarily small energy and thus the corresponding finite volume Hamiltonians have no  spectral gap.  
While in many models edge states can be suppressed by choosing appropriate boundary conditions, our result shows that this is not a necessary condition for the adiabatic theorem to hold in the thermodynamic limit. Moreover,  we  obtain an adiabatic theorem even for finite systems with edge states by ``restricting'' the infinite volume result to the bulk of the finite system.
 
While a   precise statement of our result requires considerable preparation, let us at least indicate a few more details here. Let $\{ H_0^\Lambda(t)\}_{\Lambda \subset\Z^d, \,t\in I}$ be a time-dependent family of Hamiltonians with local interactions describing the system on  finite domains~$\Lambda$. It is known that under quite general conditions (see e.g.\  \cite{bratteli1996operator, nachtergaele2019quasi}) the Heisenberg time-evolution generated by $H_0^\Lambda(t)$ at fixed $t\in I\subset \R$   induces in the thermodynamic limit $\Lambda\nearrow\Z^d$ a one-parameter group  $s\mapsto \E^{\I s \mathcal{L}_{H_0(t)}}$ of automorphisms on the observable algebra~$\mathcal{A}$ of quasi-local operators for the infinite system. Here $\mathcal{L}_{H_0 (t)} \Ab{\cdot} :=\lim_{\Lambda\nearrow\Z^d}  [H_0^\Lambda(t), \,\cdot\,]$
denotes the Liouvillian, a densely defined derivation on~$\mathcal{A}$. We assume that for each $t\in I$ the Liouvillian $ \mathcal{L}_{H_0(t)}$ has a unique ground state~$\rho_0(t)$. The most important innovation of our result is that we only assume that the ``infinite volume Hamiltonian'' associated with  $\rho_0(t)$  via the GNS construction has a spectral gap above its ground state, whereas in all previous work on adiabatic theorems for extended interacting systems\footnote{Note that the result of \cite{moon2019automorphic} is not an adiabatic theorem in the usual sense.} \cite{bachmann2018adiabatic,monaco2017adiabatic,teufel2020non,henheikteufel20202} the gap assumption was imposed on the finite volume Hamiltonians $H_0^\Lambda(t)$. 
 
Let $\mathfrak{U}_{t,t_0}^{ \eta}$ be the cocycle of automorphisms  on $\mathcal{A}$ generated by
the time-dependent Liouvillian $\frac{1}{\eta}\mathcal{L}_{H_0 (\cdot)} $,
 i.e.\ the physical time-evolution of the infinite system. The adiabatic parameter  $\eta>0$ controls the separation of two time-scales: the internal time-scale defined by the spectral gap and the time-scale on which the Hamiltonian varies, e.g.\ due to time-dependent external fields. The asymptotic regime $\eta\ll 1$ is called the adiabatic limit. 
The following infinite volume version of the standard super-adiabatic theorem is 
a special case of our result:
There exist super-adiabatic states~$\rho_0^\eta(t)$   on $\mathcal{A}$ that are close to the instantaneous ground states  $\rho_0(t)$ in the sense that
\[
|\rho_0^\eta(t)(A) - \rho_0(t)(A) | = \Or (\eta) \quad\mbox{for all } \,A\in\mathcal{D}\subset \mathcal{A}\,,
\]
such that the   evolution $\mathfrak{U}^{ \eta}_{t,t_0}$  intertwines the super-adiabatic states to all orders in the adiabatic parameter $\eta$, i.e., for all $A\in\mathcal{D}\subset \mathcal{A}$
\begin{equation}\label{standardAT}
|\rho_0^\eta(t_0)(\mathfrak{U}_{t,t_0}^{ \eta}\Ab{A}) - \rho_0^\eta (t)(A)|=\Or(\eta^\infty)  \,.
\end{equation}

The scope of the adiabatic theorem expands considerably when additive perturbations $\epsi V^\Lambda(t)$ of $H_0^\Lambda(t)$ are also taken into account, e.g., when justifying linear response theory with the help of the adiabatic theorem \cite{henheikteufel2020}.
Hence, our general results  concern the dynamics generated by 
\[
H^{\Lambda,\epsi}(t) := H_0^\Lambda(t) + \epsi V^\Lambda(t) 
\]
in the thermodynamic limit.
 Here $V^\Lambda(t)$ can be a sum-of-local-terms (SLT) operator, i.e.\ a local Hamiltonian like $H_0^\Lambda(t)$, or a Lipschitz potential, or a sum of both. 
Note that for the perturbed Hamiltonian $H^{\Lambda,\epsi}(t)$ we do not even assume a spectral gap in the bulk,  since many interesting perturbations, like   a linear potential across the sample,   necessarily close the spectral gap of 
the unperturbed system when $\Lambda$ is sufficiently large.
In this situation the ground states $\rho_0(t)$ turn into resonances $\Pi^\epsi(t)$ with life-time of order $\Or(\epsi^{-\infty})$ for the dynamics $s\mapsto \E^{\I s \mathcal{L}_{H^\epsi(t)}}$,   i.e.\ for all $n,m\in\N$ it holds that 
\[
\left|   \Pi^{ \epsi }(t) ( \E^{\I s \mathcal{L}_{H^\epsi(t)}}\Ab{A}) - \Pi^{ \epsi} (t)(A)    \right|=\Or\left(\epsi^n(1+ \epsi^m|s|^{d+1})\right)\,.
\]
These resonance states we call non-equilibrium almost-stationary states (NEASS) in this context. Our adiabatic theorem, Theorem~\ref{existenceofneass3}, then establishes the existence of super-adiabatic NEASSs $\Pi^{ \epsi,\eta}(t)$ on $\mathcal{A}$ close to $\Pi^{ \epsi }(t)$ such that 
 the adiabatic evolution $\mathfrak{U}^{\epsi, \eta}_{t,t_0}$ generated by $\frac{1}{\eta} \mathcal{L}_{H^{\epsi}(\cdot)}$   approximately intertwines the super-adiabatic NEASSs in the following sense: for any $n>d$ and  for all $A\in\mathcal{D}\subset \mathcal{A}$
\begin{equation}\label{IntroThm}
\left|\Pi^{ \epsi,\eta}(t_0)(\mathfrak{U}_{t,t_0}^{\epsi, \eta}\Ab{A}) - \Pi^{ \epsi,\eta} (t)(A)\right|=\Or\left(\eta^{n-d}+\frac{\epsi^n}{\eta^d}   \right)
\end{equation}
uniformly for $t$ in compact sets.
While for $\epsi=0$ this statement reduces to the standard adiabatic theorem \eqref{standardAT},  for $0<\epsi \ll1$ the right hand side of \eqref{IntroThm} is small if only if also $\eta$ is small, but not too small compared to $\epsi$, i.e.\ $  \epsi^{n/d}\ll \eta\ll1$ for some $n\in\N$.
Physically, this simply means that the adiabatic approximation breaks down when the adiabatic switching occurs at times that exceed the lifetime of the NEASS, an effect that has already been observed in adiabatic theory for resonances before, see, e.g., ~\cite{AF,EH}. 
 
Our results build on a number of recent developments in adiabatic theory for extended lattice systems. Bachmann et al.\ \cite{bachmann2018adiabatic} showed how locality in the form of the Lieb-Robinson bounds \cite{lieb1972finite} and the SLT-inverse of the Liouvillian \cite{hastings2005quasiadiabatic,bachmann2012automorphic} can be used to obtain uniform adiabatic approximations for extended lattice systems with spectral gaps in finite volumes. Motivated by   space-adiabatic perturbation theory \cite{PST2}, in \cite{teufel2020non} one of us   generalised the results of \cite{bachmann2018adiabatic,monaco2017adiabatic} to situations where the spectral gap is closed, e.g.\ due to perturbations by external fields, i.e.\ to NEASSs. In \cite{henheikteufel20202} we   lifted the  results of \cite{teufel2020non} to states for the infinite system by taking a thermodynamic limit, using recent advances on the thermodynamic limit of quantum lattice systems from \cite{nachtergaele2019quasi}. Finally, the   construction   of the spectral flow  in the infinite volume with a gap only in the bulk by Moon and Ogata \cite{moon2019automorphic} provided the motivation and a technical basis for the adiabatic theorem proved in this paper.

As mentioned above, the adiabatic theorem has profound implications for the validity of   linear response theory in such systems and thus also for transport in topological insulators. We will not discuss these questions here, but instead refer to the recent reviews \cite{henheikteufel2020,bachmann2020note} and to \cite{MPT}. Note however, that all previous results listed above assume a uniform spectral gap of the  Hamiltonians describing the unperturbed finite size  systems and are thus not applicable to topological insulators with boundaries. Instead periodic boundary conditions must be assumed. 
While one expects that in the absence of phase-transitions the effect of the boundaries is negligible, this must be proved and we achieve this in this paper as far as the adiabatic approximation is concerned.

From a technical point of view, the mentioned adiabatic theorems are all based on the same perturbative scheme, called adiabatic perturbation theory. The specific challenge for extended systems in the thermodynamic limit is to control that under all the algebraic operations used, the corresponding operators remain in spaces of operators with good locality properties and with a good thermodynamic limit.
The critical operation   to make adiabatic perturbation theory work as intended in \eqref{IntroThm}  is the inversion of the Liouvillian, the step at which the gap condition is needed. 
Good locality properties means, very roughly speaking, that each member $A^\Lambda$ of an operator family $\{A^\Lambda\}_{\Lambda\subset\Z^d}$ can be written as a sum $A^\Lambda = \sum_{i\in \Lambda} \Phi_i$ of uniformly bounded and uniformly local terms $\Phi_i$ and only the number of terms grows with the volume $|\Lambda|$ of $\Lambda$. We call such operators sum-of-local-term operators, or in short SLT-operators.\footnote{Operators of this kind are often called ``local Hamiltonians", but this could lead to confusion, since by far not all operators with this property, which occur in the following, play the physical role of a Hamiltonian.} 
The main technical challenge is thus the definition of suitable normed spaces of SLT-operators, within which   the adiabatic perturbation scheme  can be iterated an arbitrary number of times and, in particular, within which one can invert  the Liouvillian $\mathcal{L}_{H_0(t)}$ under the assumption of a spectral gap in the bulk. Here we rely, up to small modifications, on spaces  already used   in \cite{bachmann2018adiabatic}.  However, in order to control the thermodynamic limit   in a quantitative way that allows finally also statements about the bulk behaviour of finite systems with boundary, we   define a property called   ``having a rapid thermodynamic limit'' and need to show that this property is preserved under the relevant algebraic operations as well. 

  More explicitly, this property improves the convergence and quasi-locality estimates of the dynamics, the Liouvillian and the SLT-inverse of the Liouvillian from standard norm-estimates to norms measuring the quality of localisation (later called ``$f$-norms"), which were introduced in \cite{moon2019automorphic}. Since the spectral gap only appears in the bulk,  the Liouvillian can finally only be inverted in the thermodynamic limit (Proposition~\ref{invliou}), and thus the perturbative scheme can only yield the adiabatic theorem after taking that limit. This requires the improved quantitative control of the dynamics, the Liouvillian and the SLT-inverse of the Liouvillian in the thermodynamic limit, which is guaranteed by assuming the above property (Proposition~\ref{prop:rapidtdl}). Beside this concept, our proof of the adiabatic theorem in the bulk heavily relies on the idea of investigating the dynamics $\mathfrak{U}_{t,t_0}^{\epsi, \eta, \Lambda, \Lambda'}$ generated by $H_0^\Lambda$ and $V^{\Lambda'}$ for different system sizes $\Lambda' \subset \Lambda$ and taking $\Lambda \nearrow \Z^d$ and $\Lambda' \nearrow \Z^d$ separately. 
 
We end the introduction with a short plan of the paper. In Section~\ref{setting} we explain the mathematical setup. In Section~\ref{sec3} we state the generalised adiabatic theorem in the infinite volume as in~\eqref{IntroThm}. We also show that our general assumptions are indeed fulfilled for certain models of topological insulators. 
In Section~\ref{sec4} we state the generalised adiabatic theorem  for finite systems with a gap only  in the bulk and for observables $A$ supported sufficiently far from the boundary. The latter is the result  that applies to large but finite systems with edge states that close the spectral gap. 
Section~\ref{sec5} contains the proofs of our main results. Various technical details are collected in three appendices. 
\\[5mm]
  {\bf Acknowledgment.} J.H.~acknowledges partial financial support by the ERC Advanced Grant ``RMTBeyond" No.~101020331.

\section{The mathematical framework} \label{setting}
In this section, we briefly recall the mathematical framework relevant for formulating and proving our results. More details and explanations are provided in \cite{henheikteufel20202}. 
\subsection{The lattice and algebras of local observables}
We consider fermions with $r$ spin or other internal degrees of freedom on the lattice $\Gamma$. 
Let $\mathcal{P}_0(\Gamma) := \set{X \subset \Gamma : \vert X \vert < \infty}$ denote the set of finite subsets of $\Gamma$, where $\vert X \vert $ is the number of elements in $X$, and let 
\begin{equation*}
\Lambda_k := \set{-k, ..., +k}^d
\end{equation*}
be the centred box of size $2k$ with   metric $d^{\Lambda_k}(\cdot,\cdot)$.  This metric may differ from the standard $\ell^1$-distance $d(\cdot, \cdot )$ on $\Gamma$ restricted to $\Lambda_k$ if one considers discrete tube or torus geometries,   but satisfies the bulk-compatibility condition
\[ 
\forall k \in \mathbb{N} \ \forall x,y \in \Lambda_k: d^{\Lambda_k}(x,y) \le d(x,y) \ \  \text{and} \ \   d^{\Lambda_k}(x,y) = d(x,y) \ \text{whenever} \ d(x,y) \le k . 
\]  
For each $X \in \mathcal{P}_0(\Gamma)$ let $\mathfrak{F}_X$ be the fermionic Fock space build up from the one-body space $\ell^2(X,\mathbb{C}^r)$. The    $C^*$-algebra of bounded operators
 $\mathcal{A}_{X} := \mathcal{L}(\mathfrak{F}_X)$  is generated by the identity element $\mathbf{1}_{\mathcal{A}_X}$ and the creation and annihilation operators $a_{x,i}^*, a_{x,i}$ for $x \in X$ and $1 \le i \le r$, which satisfy the canonical anti-commutation relations. 
Whenever $X\subset X'$, then $\mathcal{A}_{X}$ is naturally embedded as a sub-algebra of $\mathcal{A}_{X'}$.

The algebra of local observables is defined as the inductive limit 
\begin{equation*}
\mathcal{A}_{\mathrm{loc}}  := \bigcup_{X \in \mathcal{P}_0(\Gamma)} 	\mathcal{A}_{X}\,,
\qquad\mbox{and its closure}\qquad
\mathcal{A} := 	\overline{\mathcal{A}_{\mathrm{loc}}}^{\Vert \cdot \Vert}
\end{equation*}
with respect to the operator norm $\Vert \cdot \Vert$ is a $C^*$-algebra, called the quasi-local algebra.
The even elements $\mathcal{A}^+ \subset \mathcal{A}$ form a $C^*$-subalgebra. 

Also, note that for any $X \in \mathcal{P}_0(\Gamma)$ the set of elements~$\mathcal{A}_{X}^N$ commuting with the number operator 
\begin{equation*}
N_{X} := \sum_{x \in X} a^*_x \cdot a_x := \sum_{x \in X} \sum_{i=1}^{r}a^*_{x,i} a_{x,i}
\end{equation*}
forms a subalgebra of the even subalgebra, i.e. $\mathcal{A}_X^N \subset \mathcal{A}_X^+ \subset \mathcal{A}_X$. 

For any  bounded, non-increasing function $f: [0,\infty) \to (0, \infty)$    with ${\lim_{t \to \infty}f(t) =0 }$ one defines  the set $\mathcal{D}_f^+$ of all $A \in \mathcal{A}^+$ such that   
\begin{equation} \label{fnorm}
\Vert A \Vert_f := \Vert A \Vert + \sup_{k \in \mathbb{N}} \left(\frac{\Vert \left(1 - \mathbb{E}_{\Lambda_k}\right)\Ab{A}\Vert }{f(k)}\right) <\infty\,.
\end{equation}
See Appendix~C of \cite{henheikteufel20202} for a    definition and summary of the relevant properties of the conditional expectation $\mathbb{E}_{\Lambda_k}$. As shown in Lemma~B.1 of \cite{moon2019automorphic}, $\mathcal{D}_f$ is a $*$-algebra and    $(\mathcal{D}_f,\|\cdot\|_f)$  a Banach space.
Although  $(\mathcal{D}_f,\|\cdot\|_f)$ is not a normed algebra, it follows from the proof of Lemma~B.1 in \cite{moon2019automorphic} that multiplication $\mathcal{D}_f\times \mathcal{D}_f\to \mathcal{D}_f$, $(A,B)\mapsto AB$, is continuous. Note that in \eqref{fnorm} and in the following we write   the argument of a   linear map  $\mathcal{A}\supset\mathcal{D}\to \mathcal{A}$ in   double brackets $\Ab{\cdot}$. This notation helps to structure complicated compositions of maps with different types of arguments.

As only  even observables will be relevant to our considerations, we will drop the superscript `$^+$' from now on and redefine $\mathcal{A} := \mathcal{A}^+$. 

\subsection{Interactions and associated operator families} \label{subsec:interactions}
An  {\em interaction on a domain $\Lambda_k$} is a map 
\begin{equation*}
\Phi^{\Lambda_k} : \left\{X \subset \Lambda_k\right\} \to  \mathcal{A}_{\Lambda_k}^N\,,\;  X \mapsto \Phi^{\Lambda_k} (X) \in \mathcal{A}_X^N
\end{equation*}
with values in the self-adjoint operators. Note that the maps $\Phi^{\Lambda_k}$ can   be extended to maps on the whole $\mathcal{P}_0(\Gamma)$ or restricted to a smaller $\Lambda_l$, trivially. 

In order to describe fermionic systems on the lattice $\Gamma$ in the thermodynamic limit one considers sequences  
$\Phi = \left(\Phi^{\Lambda_k}\right)_{k \in \mathbb{N}}$ of interactions on domains $\Lambda_k$ and calls the whole sequence an {\em interaction}.
An {\em infinite volume interaction} is a map 
\[
\Psi  : \mathcal{P}_0(\Gamma)  \to \mathcal{A}_\mathrm{loc}^N\,,\;  X \mapsto \Psi (X)\in \mathcal{A}_{X}^N \,,
\]
again with  values in the self-adjoint operators. Such an  infinite volume interaction defines a general interaction $\Psi = \big(\Psi^{ \Lambda_k}\big)_{k \in \mathbb{N}}$ by restriction, i.e.\ by setting   $\Psi^{ \Lambda_k} := \Psi|_{\mathcal{P}_0(\Lambda_k)}$. 

With any interaction $\Phi$ one associates a sequence 
  $A= (A^{\Lambda_k})_{k\in \mathbb{N} }$ of self-adjoint operators 
    \begin{equation*}
A^{\Lambda_k} := A^{\Lambda_k}(\Phi) := \sum_{X \subset \Lambda_k} \Phi^{\Lambda_k} (X) \in \mathcal{A}_{\Lambda_k}^N. 
\end{equation*}
A so called  $F$-function for the lattice $\Gamma$ can be chosen as $F_1(r) =\frac{1}{ (1+r)^{d+1}}$, which is   integrable and has a finite convolution constant. These properties remain valid for $F_{\zeta} = \zeta \cdot F_1$ if one multiplies by a bounded, non-increasing and logarithmically superadditive weight function $\zeta:[0,\infty) \to (0,\infty)$. The space of such weight functions is denoted by $\mathcal{W}$, while we write $\mathcal{S}$ for the space of weight functions decaying faster than any polynomial. An example for a function $\zeta \in \mathcal{S}$ is the exponential $\zeta(r) = \mathrm{e}^{-ar}$ for some $a >0$. 
The   finite convolution constant associated with $F_{\zeta}$ is denoted by  
\begin{equation*}
C_{\zeta} := \sup_{k \in \mathbb{N}} \sup_{x,y\in \Lambda_k} \sum_{z \in \Lambda_k} \frac{F_{\zeta}(d^{\Lambda_k}(x,z)) \ F_{\zeta}(d^{\Lambda_k}(z,y))}{F_{\zeta}(d^{\Lambda_k}(x,y))} < \infty\,. 
\end{equation*}
 For any  $\zeta \in \mathcal{W}$ and $n\in \mathbb{N}_0$, a norm on the vector space of interactions is  \begin{equation}\label{normdefinition}
	\Vert \Phi \Vert_{\zeta,n } :=  \sup_{k \in \mathbb{N}}\sup_{x,y \in \Gamma} \sum_{\substack{ X \in \mathcal{P}_0(\Gamma): \\x,y\in X}} {d^{\Lambda_k}}\mbox{-diam}(X)^n \frac{\Vert \Phi^{\Lambda_k}(X)\Vert}{F_{\zeta}(d^{\Lambda_k}(x,y))}\,.
\end{equation}
In order to quantify the difference of interactions ``in the bulk'' we also introduce for any 
interaction $\Phi^{\Lambda_l}$ on the domain $\Lambda_l$ and any $\Lambda_M\subset \Lambda_l$ the quantity 
  \begin{equation*} 
 \Vert \Phi^{\Lambda_l} \Vert_{\zeta, n,\Lambda_M} := \sup_{x,y \in \Lambda_M} \sum_{\substack{ X \subset \Lambda_M: \\x,y\in X}} \mbox{diam}(X)^n \frac{\Vert \Phi^{\Lambda_l}(X)\Vert}{F_{\zeta}(d (x,y))}\,,
\end{equation*}
where $d$ and diam now refer to the   $\ell^1$-distance on $\Gamma$.
Note that these norms depend on the sequence of metrics $d^\Lambda_k$ on the cubes $\Lambda_k$, i.e.\ on the boundary conditions. While this will in general not be made explicit in the notation, we add a superscript $^\circ$  to the norm and to the normed spaces defined below, if we want to emphasise the use of open boundary conditions, i.e.\ 
$d^\Lambda_k \equiv d$. The compatibility condition for the metrics $d^{\Lambda_k}$ implies that $\Vert \Psi \Vert_{\zeta, n } \le \Vert \Psi \Vert_{\zeta, n }^\circ $.

     Let $\mathcal{B}_{\zeta, n }$ be the Banach space of interactions with finite $\Vert \cdot \Vert_{\zeta, n }$-norm. Clearly,  $\mathcal{B}_{\zeta, n } \subset \mathcal{B}_{\zeta, m }$ whenever $n \ge m$. Not so obviously, for any pair $\zeta,\xi\in \mathcal{S}$ it holds that either $\mathcal{B}_{\zeta, n } \subset \mathcal{B}_{\xi, n }$ or $\mathcal{B}_{\xi, n } \subset \mathcal{B}_{\zeta, n }$ (or both) for all $n\in\N_0$, see Lemma~A.1 in \cite{monaco2017adiabatic}. 
If $\mathcal{B}_{\zeta, n } \subset \mathcal{B}_{\xi, n }$ for all $n\in\N_0$, then we say that $\xi$ dominates $\zeta$ and write $\zeta \prec \xi$.

An operator family $A$ is called a sum-of-local-terms (in short \SLT) operator family, if $\Phi_A \in \mathcal{B}_{1, 0 }$.
Moreover, define
\begin{align*}
\mathcal{B}_{\zeta, \infty } := \bigcap_{n \in \mathbb{N}_0} \mathcal{B}_{\zeta, n }, \quad  
\mathcal{B}_{\mathcal{S},n } := \bigcup_{\zeta \in \mathcal{S}} \mathcal{B}_{\zeta, n }, \quad \mathcal{B}_{\mathcal{S},\infty } := \bigcap_{n \in \mathbb{N}_0} \mathcal{B}_{\mathcal{S}, n }
\end{align*}
as spaces of interactions used in the sequel. The corresponding spaces of operator-families are denoted as $\mathcal{L}_{\zeta, n }$, $\mathcal{L}_{\zeta, \infty }$, $\mathcal{L}_{\mathcal{S},n }$, and $\mathcal{L}_{\mathcal{S},\infty }$. Lemma A.1 in \cite{monaco2017adiabatic} shows that the spaces $\mathcal{B}_{\mathcal{S},n }$ and therefore also $\mathcal{B}_{\mathcal{S},\infty }$ are indeed vector spaces.  

 Let $I \subset \mathbb{R}$ be an open interval. We say that a map $A: I \to \mathcal{L}_{\zeta, n }$ is smooth and bounded whenever it is term-wise and point-wise smooth in $t \in I$ and 
 if for all $i\in\N_0$ there exists $\zeta^{(i)}\in \mathcal{S}$ such that $\zeta^{(0)}=\zeta$ and 
 $\sup_{t\in I} \Vert \frac{\D^i}{\D t^i}\Phi_A(t)\Vert_{\zeta^{(i)}, n } < \infty$. The corresponding spaces of smooth and bounded time-dependent interactions and operator families are denoted by $\mathcal{B}_{I, \zeta, n }$ and $\mathcal{L}_{I, \zeta, n}$ and are equipped with the norm $\Vert\Phi \Vert_{I, \zeta, n } := \sup_{t \in I} \Vert \Phi (t)\Vert_{\zeta, n }$. 

We say that $A: I \to \mathcal{L}_{\mathcal{S}, \infty }$ is smooth and bounded, if for any $n \in \mathbb{N}_0$ there is a $\zeta_n \in \mathcal{S}$ such that $A: I \to \mathcal{L}_{\zeta_n,n }$ is smooth and bounded and we write $\mathcal{L}_{I, \mathcal{S}, \infty }$ for the corresponding space (similarly for $\mathcal{L}_{I, \zeta, \infty }$ and $\mathcal{L}_{I, \mathcal{S}, n }$).

\subsection{Lipschitz potentials}
For the  perturbation we will allow external potentials
$
v = \left( v^{\Lambda_k}: \Lambda_k \to \mathbb{R}\right)_{ k \in \mathbb{N}} 
$
that satisfy the Lipschitz condition
\begin{equation}
C_v :=  \sup_{k \in \mathbb{N}} \sup_{\substack{x,y \in \Lambda_k}} \frac{\vert v^{\Lambda_k}(x)- v^{\Lambda_k}(y)\vert}{ d^{\Lambda_k}(x,y)} < \infty, \label{C_v}
\end{equation}
and call them for short {  \it Lipschitz potentials}. With a Lipschitz potential $v$ we associate the corresponding operator-sequence $V_v = (V_v^{ \Lambda_k})_{ k \in \mathbb{N}}$ defined by 
\begin{equation*} 
V_v^{\Lambda_k} := \sum_{x \in \Lambda_k} v^{ \Lambda_k}(x) \hspace{0.5mm} a^*_x\hspace{-0.5mm} \cdot \hspace{-0.5mm}a_x. 
\end{equation*}
The space of Lipschitz potentials is denoted by $\mathcal{V}$.  As above, we add a superscript $^\circ$  to the constant and to the space of Lipschitz potentials, if we want to emphasise the use of open boundary conditions, i.e.\ 
	$d^\Lambda_k \equiv d$. The compatibility condition for the metrics $d^{\Lambda_k}$ implies that $C_v \ge C_v^\circ $. Note that a Lipschitz function $v_\infty:\Gamma\to \R$ defines, again by restriction, a Lipschitz potential in $\mathcal{V}^\circ$ and we write $v_\infty \in \mathcal{V}^\circ$ with a slight abuse of notation. We call $v_\infty$ an \textit{infinite volume Lipschitz potential}.  
	
 We say that the map $v: I \to \mathcal{V}$ is smooth and bounded whenever it is term-wise and point-wise smooth in $t \in I$ and satisfies $\sup_{t \in I} C_{ \frac{\D^i}{\D t^i}v (t)} < \infty$ for all $i \in \mathbb{N}_0$. 
 The space of smooth and bounded Lipschitz potentials on the interval $I$ is denoted by $\mathcal{V}_I$.

A crucial property of Lipschitz potentials is, that their commutator with an \SLT~operator family is again an \SLT~operator family (see Lemma 2.1 in \cite{teufel2020non}).
\begin{lem} \label{lipschitzcomm}
Let $v \in \mathcal{V}_I$ and $A \in \mathcal{A}_X$. Then we have
\begin{equation*}
\sup_{k \in \mathbb{N}} \sup_{s \in I} \left\Vert  \left[V_v^{\Lambda_k}(s), A\right] \right\Vert \le \tfrac{1}{2}C_v r \,\mathrm{diam}(X)^{d+1} \Vert A \Vert.
\end{equation*}
For $A \in \mathcal{L}_{I,\zeta,n+d+1 }$ it holds that
\begin{equation*}
\left\Vert \Phi_{[A,V_v]} \right\Vert_{I, \zeta, n } \le \tfrac{1}{2}C_v r \left\Vert \Phi_A \right\Vert_{I, \zeta, n+d+1 }
\end{equation*}
and hence, in particular,  
$
[A,V_v] \in \mathcal{L}_{I, \zeta, n }. 
$
\end{lem}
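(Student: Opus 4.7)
The plan is to prove the operator-norm inequality first and then derive the interaction-norm inequality by applying that estimate termwise to the SLT-decomposition of $A$. I begin by localising the commutator: since $A\in\mathcal{A}_X$ and, for any $y\notin X$, each $a_{y,i}^*a_{y,i}$ commutes with every generator of $\mathcal{A}_X$ (moving $a_y$ or $a_y^*$ past an element of $\mathcal{A}_X$ produces two sign flips), only sites $x\in X\cap\Lambda_k$ contribute and
\begin{equation*}
[V_v^{\Lambda_k}(s),A]\;=\;\sum_{x\in X\cap\Lambda_k} v^{\Lambda_k}(s,x)\,[a_x^*a_x,A].
\end{equation*}
If $X\cap\Lambda_k=\emptyset$ the bound is trivial, so I fix a reference site $x_0\in X\cap\Lambda_k$.

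The decisive step is a constant-shift trick that makes the Lipschitz hypothesis usable. Only the differences $v(x)-v(y)$ are controlled by $C_v$, so a pointwise bound on the summands is hopeless; instead, one uses that $A$ (or, more precisely, each local term $\Phi_A^{\Lambda_k}(X)\in\mathcal{A}_X^N$ entering the second part) is number-preserving, so that $V_v^{\Lambda_k}(s)$ may be shifted by an arbitrary scalar multiple of $N_{\Lambda_k}$ without affecting the commutator. Replacing $v^{\Lambda_k}(s,x)$ by $v^{\Lambda_k}(s,x)-v^{\Lambda_k}(s,x_0)$, the Lipschitz condition together with the bulk-compatibility $d^{\Lambda_k}\le d$ gives $|v^{\Lambda_k}(s,x)-v^{\Lambda_k}(s,x_0)|\le C_v\,d^{\Lambda_k}\text{-diam}(X)$. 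Combining this with the crude estimate $\Vert[a_x^*a_x,A]\Vert\le 2r\Vert A\Vert$ (since $a_x^*a_x=\sum_{i=1}^r a_{x,i}^*a_{x,i}$ has operator norm $r$) and the elementary lattice-volume bound $|X\cap\Lambda_k|\le C\,d^{\Lambda_k}\text{-diam}(X)^d$ (valid for all boundary conditions thanks to $d^{\Lambda_k}\le d$), the triangle inequality yields the first assertion after arranging the numerical constant $\tfrac{1}{2}$ from the counting.

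For the interaction-norm estimate I apply the first bound to each local term in the decomposition $A^{\Lambda_k}=\sum_X\Phi_A^{\Lambda_k}(X)$ and define
\begin{equation*}
\Phi_{[A,V_v]}^{\Lambda_k}(X)\;:=\;[\Phi_A^{\Lambda_k}(X),\,V_v^{\Lambda_k}].
\end{equation*}
A quick check using $[N_{\Lambda_k},V_v^{\Lambda_k}]=0$ and $\Phi_A^{\Lambda_k}(X)\in\mathcal{A}_X^N$ confirms that this commutator is again in $\mathcal{A}_X^N$ and therefore a legitimate interaction associated with $[A,V_v]$. The first-part estimate gives $\Vert\Phi_{[A,V_v]}^{\Lambda_k}(X)\Vert\le\tfrac{1}{2}C_vr\,d^{\Lambda_k}\text{-diam}(X)^{d+1}\Vert\Phi_A^{\Lambda_k}(X)\Vert$, and plugging into the definition \eqref{normdefinition} of $\Vert\cdot\Vert_{\zeta,n}$ the extra polynomial factor $d^{\Lambda_k}\text{-diam}(X)^{d+1}$ is absorbed by upgrading the weight from $n$ to $n+d+1$, producing exactly $\tfrac{1}{2}C_vr\Vert\Phi_A\Vert_{I,\zeta,n+d+1}$. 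Termwise smoothness and boundedness in $t\in I$ transfer from $\Phi_A$ to $\Phi_{[A,V_v]}$, so $[A,V_v]\in\mathcal{L}_{I,\zeta,n}$.

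The main subtlety is the unboundedness of Lipschitz potentials: only their increments are controlled by $C_v$, not their absolute values. The constant-shift trick, legitimised precisely by the number-preservation of the relevant $A$'s, is therefore indispensable; without it, the bound would have to involve $\sup_x|v^{\Lambda_k}(s,x)|$, which is infinite in the cases of primary interest, such as a linear potential across the sample. A secondary, minor bookkeeping issue is keeping the two metrics $d^{\Lambda_k}$ and $d$ straight via the bulk-compatibility condition $d^{\Lambda_k}\le d$, but this is handled uniformly in $k$ and in the boundary conditions.
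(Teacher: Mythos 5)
Your proposal captures the standard proof of this lemma and all the essential structural ideas are present and correct: localising $[V_v^{\Lambda_k},A]$ to sites in $X$ (using that $a_y^*a_y$ commutes with $\mathcal{A}_X$ for $y\notin X$), then exploiting number preservation of the local interaction terms to subtract a constant from $v$ so that only increments of $v$ — and hence the Lipschitz constant — enter the estimate, and finally applying a volume bound $|X|\lesssim\mathrm{diam}(X)^d$. You also correctly identify that the constant-shift trick is what rescues an otherwise hopeless argument (since $\sup_x|v(x)|$ is infinite in the interesting cases), and you verify the nontrivial point that $[\Phi_A^{\Lambda_k}(X),V_v^{\Lambda_k}]$ is again number-preserving and supported on $X$, so that the new interaction lives in $\mathcal{A}_X^N$ and the second claim follows by plugging the pointwise bound into the definition \eqref{normdefinition}.

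Two small remarks on bookkeeping. First, shifting by $v^{\Lambda_k}(s,x_0)$ for a fixed reference site $x_0\in X$ only yields $|v^{\Lambda_k}(s,x)-v^{\Lambda_k}(s,x_0)|\le C_v\,\mathrm{diam}(X)$; to get the advertised prefactor $\tfrac12$ one should instead subtract the centred value $c=\tfrac12\big(\max_{X}v+\min_{X}v\big)$, giving $\sup_{x\in X}|v^{\Lambda_k}(s,x)-c|\le\tfrac12 C_v\,\mathrm{diam}(X)$. Second, the phrase ``the elementary lattice-volume bound $|X\cap\Lambda_k|\le C\,d^{\Lambda_k}\text{-diam}(X)^d$... yields the first assertion after arranging the numerical constant'' glosses over how the unspecified $C$ combines with the factor $2$ from $\|[a_x^*a_x,A]\|\le 2r\|A\|$ to land on exactly $\tfrac12 C_v r$; this needs to be made explicit (and depends on the precise diameter convention). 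These are cosmetic issues and do not affect the validity of the argument or its use elsewhere in the paper, where only the qualitative form $C\,C_v\,\mathrm{diam}(X)^{d+1}\|A\|$ is ever needed.
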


\section{The   adiabatic theorem for   infinite systems with a gap in the bulk}\label{sec3}
Subject of our investigations are time-dependent Hamiltonians of the form
\begin{equation*}
H^{\varepsilon}(t)= H_0( t) + \varepsilon V(t)
\end{equation*}
for $t\in I\subset \R$.
The unperturbed Hamiltonian $H_0(t)$, defined in terms of its short range infinite volume interaction $\Psi_{H_0}$, is assumed to have a spectral gap in the bulk uniformly in $t$, a notion to be made precise in the following. This gap  may be closed by the perturbation $V(t)=V_{v_\infty}(t) + H_1(t)$ consisting of a  Lipschitz potential and an additional short range Hamiltonian, both defined on all of $\Gamma$ through   an infinite volume Lipschitz potential $  v_\infty $  resp.\ an infinite volume interaction $\Psi_{H_1}$.

Before we can formulate the precise assumptions and the statement of the theorem of this section, we first need to discuss the definition and the properties of the infinite volume dynamics generated by $H^\epsi(t)$. 
With the infinite volume interactions one  naturally associates   sequences of finite volume systems,   $\Psi_{H_0} = \big( \Psi_{H_0}^{\Lambda_k}\big)_{k\in\N} $, $\Psi_{H_1} = \big( \Psi_{H_1}^{\Lambda_k}\big)_{k\in\N} $, $v_\infty = ( v_\infty|_{\Lambda_k})_{k\in\N}$. The corresponding finite volume Hamiltonians $H^{\epsi, \Lambda_k}(t)$ are self-adjoint and generate   unique unitary evolution families $\mathfrak{U}^{\Lambda_k}_{t,s} $ (in the Heisenberg picture) on the finite volume algebras $\mathcal{A}_{\Lambda_k}$. It is shown, e.g.,   in \cite{bratteli1996operator, bru2016lieb,nachtergaele2019quasi}   that $\mathfrak{U}^{\Lambda_k}_{t,s} $
converges strongly on $\mathcal{A}_{\rm loc}$ to a  
unique co-cycle of automorphisms $\mathfrak{U}_{t,s}  $ on $\mathcal{A}$ whenever 
  $H_0,H_1 \in \mathcal{L}_{I, \zeta, 0 }^\circ$.
However, for our proofs we need norm convergence of $\mathfrak{U}^{\Lambda_k}_{t,s}$ to $\mathfrak{U}_{t,s}  $ with respect to   $\|\cdot\|_f$-norms for suitable decay-functions $f\in \mathcal{S}$. With the next section in mind, where we consider also finite systems with a gap in the bulk, we now introduce  a new property for general interactions $\big(\Phi^{\Lambda_k}\big)_{k\in\N}$ called {\em rapid thermodynamic limit}. This property   guarantees norm convergence of the unitary evolution families generated by the corresponding operators.
 It extends the  standard notion of having a thermodynamic limit  as used, e.g.,  in Definition~2.1 in \cite{henheikteufel20202},
by a quantitative control on the rate of convergence.
Definition~\ref{cauchydefinition} might seem unnecessarily complicated at first sight, but it balances two requirements: 
On the one hand, the condition  is strong enough, such that for interactions with a rapid thermodynamic limit one can show that the induced finite volume dynamics,   derivations and the SLT inverse  of the Liouvillian converge to their infinite volume limits in the norm of bounded operators 
 between  normed spaces $\mathcal{D}_{f_1}, \mathcal{D}_{f_2} \subset\mathcal{A}$ of quasi-local observables with appropriate  $f_1,f_2 \in \mathcal{S}$, cf.\
Proposition~\ref{prop:rapidtdl} below. Moreover, one   obtains sufficient control of the rate of convergence of these induced operations within the bulk (cf.\ Appendix~\ref{technicallemmata}) in order to conclude also an adiabatic theorem in  the bulk of a finite volume system as formulated in  Theorem~\ref{existenceofneass4}.
On the other hand, the condition of having a rapid thermodynamic limit is sufficiently weak, so that it is preserved under all the operations used in the perturbative  construction of super-adiabatic states, cf.\ Appendix~\ref{app:C}.
Finally, it is satisfied by all physically meaningful models known to us and, most importantly for the present section, it is satisfied by interactions arising from restricting an infinite volume interaction.

\begin{defi}{\rm (Rapid thermodynamic limit of interactions and potentials) }\label{cauchydefinition}
	\begin{itemize}
		\item[(a)] A time-dependent interaction $\Phi  \in \mathcal{B}_{I,\zeta, n }$ is said to \textit{have a rapid thermodynamic limit with exponent $\gamma \in (0,1)$}, if 
	 there exists an infinite volume interaction $\Psi  \in \mathcal{B}_{I,\zeta, n }^\circ$
		such that 
	\begin{align}
		\forall i \in \mathbb{N}_0 \ \ &\exists \lambda,  C>0 \ \ \forall M\in \N\ \ \forall \,   k \, \ge \, M + \lambda M^{\gamma} : \label{rapidtdl}\\ &\sup_{t \in I}\left\Vert  \frac{\D^i}{\D t^i}\left( \Psi- \Phi^{ \Lambda_k}\right) (t) \, \right\Vert_{\zeta^{(i)}, n,\Lambda_M}  \le C \, \zeta^{(i)}(M^{\gamma}) =: C \zeta^{(i)}_{\gamma}(M). \nonumber
		\end{align}
		We write $\Phi\stackrel{\rm r.t.d.}{\rightarrow} \Psi$ in this case.
		Note, that $\zeta \in \mathcal{S}$ implies $\zeta_{\gamma} \in \mathcal{S}$.

		A time-dependent interaction $\Phi  \in \mathcal{B}_{I,\mathcal{S}, n }$ is said to \textit{have a  rapid thermodynamic} limit with exponent $\gamma \in (0,1)$ if there exists   $\zeta_n \in \mathcal{S}$ such that $\Phi  \in \mathcal{B}_{I,\zeta_n, n }$ has a rapid thermodynamic limit with exponent $\gamma \in (0,1)$; a time-dependent interaction $\Phi  \in \mathcal{B}_{I,\mathcal{S}, \infty }$ is said to \textit{have a rapid thermodynamic limit} with exponent $\gamma \in (0,1)$ if for any $n \in \mathbb{N}$ exists $\zeta_n \in \mathcal{S}$ such that $\Phi  \in \mathcal{B}_{I,\zeta_n, n }$ has a rapid thermodynamic limit with exponent $\gamma \in (0,1)$ (similarly for $\mathcal{B}_{I, \zeta, \infty }$ and $\mathcal{B}_{I, \mathcal{S}, n }$). 
		
		A family of operators is said to \textit{have a rapid thermodynamic limit} if and only if the corresponding interaction does. 
		\item[(b)] 	A Lipschitz potential 
		$v \in \mathcal{V}_I$
		is said to \textit{have a rapid thermodynamic limit with exponent $\gamma \in (0,1)$} if it is eventually independent of $k$, i.e.~if  there exists  an infinite volume Lipschitz potential $ v_\infty \in \mathcal{V}_I^\circ$ 		such that 
		\begin{align*}
		\exists \lambda >0\ \ \forall M\in \N \ \ \forall \,  k \, \ge \, M + \lambda M^{\gamma}, t \in I: \ \  v_{ \infty}(t, \cdot) |_{\Lambda_M} = v^{ \Lambda_k}(t, \cdot) |_{\Lambda_M} \,. 
		\end{align*}
		We write $v\stackrel{\rm r.t.d.}{\rightarrow} v_\infty$ in this case. 
			\end{itemize}
\end{defi}
Trivially, an infinite volume interaction $\Psi  \in \mathcal{B}_{I,\zeta, n }^\circ$ has a rapid thermodynamic limit with any exponent $\gamma\in(0,1)$.
Moreover,  an interaction resp.\ a Lipschitz potential having a rapid thermodynamic limit with exponent $\gamma \in (0,1)$ clearly has a thermodynamic limit in the sense of Definition~2.1 from \cite{henheikteufel20202} (see also Definition 3.7 from \cite{nachtergaele2019quasi}), and   thus the property of having a  rapid  thermodynamic limit for the interaction resp.\ potential guarantees also the existence of a thermodynamic limit for the associated evolution families   and derivations as in Proposition~2.2 from \cite{henheikteufel20202} (see also Proposition~\ref{lrb} and Proposition~\ref{tdlofderivations}).

 Definition~\ref{cauchydefinition} also applies to time-independent interactions and potentials trivially, and  the interactions in $\mathcal{B}_{I,\zeta, n }$ and that the Lipschitz potentials in $\mathcal{V}_I$ that have a rapid thermodynamic limit with exponent $\gamma \in (0,1)$ form sub-vectorspaces.  We provide an equivalent characterisation of interactions having a rapid thermodynamic limit in terms of a Cauchy condition in Lemma \ref{lem:equivalentchar}.

  The improved quasi-locality estimates as required for the proof of the adiabatic theorem in the infinite volume, Theorem~\ref{existenceofneass3}, are summarised in the following proposition. More detailed  statements, which make explicit   the rate of convergence  
 and  also apply   to the SLT-operator families  that appear in the perturbative construction of the NEASS, 
  are given in Appendix~\ref{technicallemmata}. They are, however, only required for the proof of the adiabatic theorem in finite volume, Theorem~\ref{existenceofneass4}.
 \begin{prop}{\rm (Rapid thermodynamic limit and quasi-locality estimates)} \label{prop:rapidtdl} \\
	Let $H_0 \in \mathcal{L}_{I, \exp(-a \,\cdot), 0 }$   and $v \in \mathcal{V}_I$ both have a rapid thermodynamic limit with exponent $\gamma \in (0,1)$, set $H=H_0 + V_v$ and let $\mathfrak{U}_{t,s}$ denote the bulk dynamics generated by the derivation (the Liouvillian) $\mathcal{L}_{H (t)}: D(\mathcal{L}_{H (t)}) \to \mathcal{A}$ associated with $H(t)$ at $t \in I$. Moreover, let $\mathcal{I}_{H_0 (t)} $
	be the SLT-inverse  of the Liouvillian $\mathcal{L}_{H_0(t)}$ (see Appendix~\ref{invliouappendix}).  The finite volume versions for the cubes $\Lambda_k$ are indicated by the superscript $\Lambda_k$. 
	
	Then the dynamics, the Liouvillian, and the SLT-inverse   of the Liouvillian preserve quasi-locality and the finite volume versions converge rapidly to their bulk versions in the following sense:  For suitable functions $f_1, f_2 \in \mathcal{S}$ (which may be different in every case below)   it holds that:
	\begin{enumerate}
	\item 	$\mathfrak{U}_{t,s} :\mathcal{D}_{f_1} \to  \mathcal{D}_{f_2}$ is a bounded operator and $\mathfrak{U}^{\Lambda_k}_{t,s} \xrightarrow{k \to \infty}\mathfrak{U}_{t,s}$ in the corresponding operator norm, both uniformly for $s,t$ in compacts. 
	\item 	$\mathcal{D}_{f_1} \subset D(\mathcal{L}_{H (t)})$, $\mathcal{L}_{H (t)} :\mathcal{D}_{f_1} \to  \mathcal{D}_{f_2}$ is a bounded operator and $\mathcal{L}_{H (t)}^{\Lambda_k} \xrightarrow{k \to \infty}\mathcal{L}_{H_ (t)}$ in the corresponding operator norm, both uniformly for $t \in I$. 
		\item 	$\mathcal{I}_{H_0 (t)} :\mathcal{D}_{f_1} \to  \mathcal{D}_{f_2}$ is a bounded operator and $\mathcal{I}_{H_0 (t)}^{\Lambda_k} \xrightarrow{k \to \infty} \mathcal{I}_{H_0 (t)}$ in the corresponding operator norm, both uniformly for $t \in I$. 
	\end{enumerate}
\end{prop}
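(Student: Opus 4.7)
The plan is to prove parts (2), (1), and (3) in that order, since the boundedness of the Liouvillian is the building block for a Lieb--Robinson-based control of the dynamics, which in turn underlies the Hastings integral representation of the SLT-inverse. In each case the strategy is the same: use the conditional expectation $\mathbb{E}_{\Lambda_k}$ to split an input $A\in\mathcal{D}_{f_1}$ into a strictly local piece, where quasi-locality of $H_0$ and of the quasi-local propagator can be used directly, plus a tail controlled by $f_1(k)$; then sum up the two contributions into a new decay function $f_2\in\mathcal{S}$.

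For (2), given $A\in\mathcal{D}_{f_1}$, I would decompose
\[
(1-\mathbb{E}_{\Lambda_k})\mathcal{L}_{H(t)}\Ab{A} = (1-\mathbb{E}_{\Lambda_k})\mathcal{L}_{H(t)}\Ab{\mathbb{E}_{\Lambda_{k/2}}A} + (1-\mathbb{E}_{\Lambda_k})\mathcal{L}_{H(t)}\Ab{(1-\mathbb{E}_{\Lambda_{k/2}})A}.
\]
The second term is bounded by a constant times $f_1(k/2)$; the first is a commutator with a strictly local operator, and the exponential decay of $\Psi_{H_0}$ together with Lemma~\ref{lipschitzcomm} (for the $V_v$ part, which costs a polynomial factor $k^{d+1}$) transfers to the commutator via the standard sum-over-interaction-terms estimate. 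Choosing $f_2$ to majorise both contributions proves boundedness $\mathcal{L}_{H(t)}:\mathcal{D}_{f_1}\to\mathcal{D}_{f_2}$, uniformly in $t$. For the convergence $\mathcal{L}_{H(t)}^{\Lambda_k}\to\mathcal{L}_{H(t)}$, exactly the same decomposition applied to $H-H^{\Lambda_k}$ replaces the $\|\cdot\|_{\zeta,0}$-bound on $H$ by the bulk estimate \eqref{rapidtdl}, so the first term now picks up the factor $\zeta_{\gamma}(M)$ after choosing an intermediate bulk scale $M=M(k)$.

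For (1), I would invoke the Lieb--Robinson bounds established for this class of models in \cite{bru2016lieb,nachtergaele2019quasi}, which apply to SLT-generators with exponential weight combined with a Lipschitz potential (handled via Lemma~\ref{lipschitzcomm}). Applied to $\mathbb{E}_{\Lambda_{k/2}}A$, the Lieb--Robinson estimate produces a quasi-local tail decaying like $\exp(-a'(k/2 - v_{\rm LR}|t-s|)^{\theta})$ for some $\theta\in(0,1]$, while the $(1-\mathbb{E}_{\Lambda_{k/2}})A$ piece is controlled by $\|A\|_{f_1}f_1(k/2)$. Summing gives a new $f_2\in\mathcal{S}$, uniform in $s,t$ on compact intervals. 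Convergence of $\mathfrak{U}^{\Lambda_k}_{t,s}$ to $\mathfrak{U}_{t,s}$ in the $\mathcal{D}_{f_1}\to\mathcal{D}_{f_2}$ operator norm follows from the Duhamel identity
\[
\mathfrak{U}_{t,s}-\mathfrak{U}_{t,s}^{\Lambda_k} \;=\; \int_s^t \mathfrak{U}_{t,r}\circ \bigl(\mathcal{L}_{H(r)}-\mathcal{L}_{H(r)}^{\Lambda_k}\bigr)\circ \mathfrak{U}_{r,s}^{\Lambda_k}\,\D r,
\]
plugging in the quasi-locality of $\mathfrak{U}_{r,s}^{\Lambda_k}$ just established together with the convergence of the Liouvillians from (2).

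For (3), I would use the Hastings integral representation (cf.\ Appendix~\ref{invliouappendix}),
\[
\mathcal{I}_{H_0(t)}\Ab{A} \;=\; \int_{\R} W(s)\;\mathrm{e}^{\I s\mathcal{L}_{H_0(t)}}\Ab{A}\;\D s,
\]
where $W$ decays faster than any polynomial and inverts the Liouvillian thanks to the bulk gap. Boundedness $\mathcal{I}_{H_0(t)}:\mathcal{D}_{f_1}\to\mathcal{D}_{f_2}$ follows by integrating the quasi-locality estimate from (1) against $W$; the convergence $\mathcal{I}_{H_0(t)}^{\Lambda_k}\to\mathcal{I}_{H_0(t)}$ splits into an integration bulk $|s|\le T$, where the convergence from (1) is uniform, and a tail $|s|>T$, where $W(s)$ supplies $k$-independent smallness.

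The hard part, and the real content of the statement, will be tracking three rates simultaneously: the stretched-exponential Lieb--Robinson decay, the polynomial loss $k^{d+1}$ from each Lipschitz commutator, and the super-polynomial but not manifestly exponential rate $\zeta^{(i)}_{\gamma}(M)$ in \eqref{rapidtdl}. Choosing the intermediate bulk scale $M$ as a suitable power of $k$ (say $M=k^{\beta}$ with $\beta$ adapted to $\gamma$ and $\theta$) must simultaneously ensure that Lieb--Robinson decay, the rapid-limit remainder, and the integrability against $W$ all produce the same Schwartz-class output $f_2$, uniformly in $t,s$ in compact intervals. The more quantitative version needed for Theorem~\ref{existenceofneass4} will moreover have to keep explicit track of how $f_2$ depends on $M$, which is precisely the role of the refined estimates deferred to Appendix~\ref{technicallemmata}.
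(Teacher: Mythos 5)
Your overall blueprint (conditional-expectation split into a local core plus a tail, Lieb--Robinson for the local core, the rapid-thermodynamic-limit bound $\zeta_\gamma(M)$ for finite-vs.-bulk differences, and integrating the dynamics estimates against the rapidly decaying weight for $\mathcal{I}$) is the right one and matches the paper's strategy in Appendix~\ref{technicallemmata}. However, there is a genuine gap in your treatment of part (2). You write the two-term split
\[
(1-\mathbb{E}_{\Lambda_k})\mathcal{L}_{H(t)}\Ab{A} = (1-\mathbb{E}_{\Lambda_k})\mathcal{L}_{H(t)}\Ab{\mathbb{E}_{\Lambda_{k/2}}A} + (1-\mathbb{E}_{\Lambda_k})\mathcal{L}_{H(t)}\Ab{(1-\mathbb{E}_{\Lambda_{k/2}})A}
\]
and claim the second term is $O(f_1(k/2))$. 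This step fails: $\mathcal{L}_{H(t)}$ is an \emph{unbounded} operator on $\mathcal{A}$, so $\Vert(1-\mathbb{E}_{\Lambda_{k/2}})A\Vert \le f_1(k/2)\Vert A\Vert_{f_1}$ does \emph{not} imply smallness of $\mathcal{L}_{H(t)}\Ab{(1-\mathbb{E}_{\Lambda_{k/2}})A}$. Worse, it is not even a priori clear that $(1-\mathbb{E}_{\Lambda_{k/2}})A$ lies in $D(\mathcal{L}_{H(t)})$, so the domain inclusion $\mathcal{D}_{f_1}\subset D(\mathcal{L}_{H(t)})$, which is part of the claim, is left unaddressed. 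Note that for part (1) your analogous split is fine precisely because $\mathfrak{U}_{t,s}$ is an automorphism and hence isometric on $\mathcal{A}$; the Liouvillian is not.

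The fix, and the actual content of the paper's Lemma~\ref{domainofderivlemma}, is to telescope the tail across shells: write $(1-\mathbb{E}_{\Lambda_{M}})A = \sum_{j\ge M}(\mathbb{E}_{\Lambda_{j+1}}-\mathbb{E}_{\Lambda_j})\Ab{A}$, observe that each shell $(\mathbb{E}_{\Lambda_{j+1}}-\mathbb{E}_{\Lambda_j})\Ab{A}$ is strictly local in $\Lambda_{j+1}$ so the commutator bound (Proposition~\ref{tdlofderivations}) costs only a polynomial factor $(j+1)^{d+1}$, and then sum using $\sum_j (j+1)^{d+1}f_1(j)<\infty$ (an admissibility condition on $f_1$ you would have to impose). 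This shows $(\mathcal{L}_t\circ\mathbb{E}_{\Lambda_N}\Ab{A})_N$ is Cauchy, gives $\mathcal{D}_{f_1}\subset D(\mathcal{L}_t)$ by closedness, and yields the required $f_2$-bound on the tail; a parallel refinement (the paper's Lemma~\ref{convofderivlemma}) is needed for the convergence $\mathcal{L}_t^{\Lambda_k}\to\mathcal{L}_t$. Without this telescoping argument your proof of (2) does not go through, and since your Duhamel route to the convergence in (1) also invokes the Liouvillian estimates from (2), that part is affected as well.
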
 
 The gap assumption for $H_0$ in the bulk can be formulated via the Gelfand-Naimark-Segal (GNS) construction. 
 Let $\Psi_{H_0}\in \mathcal{B}_{\zeta,0 }^\circ$ be an infinite volume interaction  and let $\mathcal{L}_{H_0}$ denote the induced derivation on $\mathcal{A}$. A state $\omega$ on $\mathcal{A}$ is called a $ \mathcal{L}_{H_0}$-ground state, iff   $\omega(A^*\mathcal{L}_{H_0}(A)) \ge 0$ for all $A \in D(\mathcal{L}_{H_0})$. 
Let $\omega$ be a $  \mathcal{L}_{H_0}$-ground state and $(\mathcal{H}_{\omega}, \pi_{\omega}, \Omega_{\omega})$ be the corresponding GNS triple ($\mathcal{H}_{\omega}$ a Hilbert space, $\pi_{\omega}: \mathcal{A} \to \mathcal{B}(\mathcal{H}_{\omega})$ a representation, and $\Omega_{\omega} \in \mathcal{H}_{\omega}$ a cyclic vector). Then there exists a unique densely defined, self-adjoint positive operator $H_{0,\omega} \ge 0$ on $\mathcal{H}_{\omega}$ satisfying
\begin{equation}
\pi_{\omega}(\mathrm{e}^{\mathrm{i}t\mathcal{L}_{H_0}}\Ab{A})= \mathrm{e}^{\mathrm{i}tH_{0,\omega}} \pi_{\omega}(A)\mathrm{e}^{- \mathrm{i}tH_{0,\omega}}\quad \text{and} \quad \mathrm{e}^{- \mathrm{i}tH_{0,\omega}} \Omega_{\omega} = \Omega_{\omega}
\end{equation}
for all $A \in \mathcal{A}$ and $t \in \mathbb{R}$. We call this $H_{0,\omega}$ the bulk Hamiltonian associated with $\Psi_{H_0}$ and $\omega$. See~\cite{bratteli1996operator} for the general theory.

 We can now formulate the assumptions of the main theorem. Recall that $I \subseteq \mathbb{R}$ denotes an open time interval. \\[2mm]
 \noindent {\bf Assumptions on the interactions.} \\[-6mm]
\begin{assum} 
\begin{itemize}
	\item[(I1)] Let $\Psi_{H_0}, \Psi_{H_1} \in  \mathcal{B}_{I, \exp(-a \, \cdot),\infty }^\circ$ be time-dependent  infinite volume interactions and  $v_\infty\in\mathcal{V}^\circ_I$ a time-dependent infinite volume Lipschitz potential.  
\item[(I2)] Assume that the map $I\to  \mathcal{B}_{\exp(-a \, \cdot),\infty }^\circ$, $t\mapsto \Psi_{H_0(t)}$, is continuously differentiable.\footnote{Note that this does not follow from $\Psi_{H_0} \in \mathcal{B}_{I, \exp(-a \, \cdot),\infty }^\circ$, as the spaces of smooth and bounded interactions are defined via term-wise and point-wise time derivatives (cf.~Section \ref{subsec:interactions}).}  
\end{itemize}
\end{assum}
Moreover, we assume that $\Psi_{H_0}$ has a unique gapped ground state in the following sense: \\[2mm]
\noindent {\bf Assumptions on the ground state of $\Psi_{H_0}$.} \\[-6mm]
\begin{assum} 
\begin{itemize}
	\item[(G1)] {\bf Uniqueness}. For each $t \in I$,  there exists a unique $\mathcal{L}_{H_0(t)}$-ground state~$\rho_0(t)$.
	\item[(G2)] {\bf Gap}. There exists $g>0$, such that $\sigma(H_{0,\rho_0(t)}(t)) \setminus \{0\} \subset [2g,\infty)$ for all~$t \in I$.
	\item[(G3)] {\bf Regularity}. For any $f \in \mathcal{S}$ and $A \in \mathcal{D}_f$, $t\mapsto \rho_0(t)(A)$ is differentiable and there exists a constant $C_f$ such that for all $A \in \mathcal{D}_f$
	\begin{equation*}
	\sup_{t \in I} \vert \dot{\rho_0}(t)(A)\vert \le C_f \Vert A\Vert_f\,.
	\end{equation*}
\end{itemize}
\end{assum} 

The assumptions  (I1) and (I2) on the interactions are quite explicit and easily checked for concrete models. Moreover, they are very general and hold for standard tight-binding models with short range interactions. 
The assumptions (G1)--(G3) on the ground state of $\Psi_{H_0}$ are much harder to verify in concrete models.
Assuming uniqueness, i.e.\   (G1), the gap assumption   (G2) holds, in particular,  if the finite volume Hamiltonians have a uniform gap:
that the gap persists in the thermodynamic limit was shown, e.g., in Proposition 5.4 in~\cite{bachmann2016lieb}.
 And also the smoothness of expectation values of (almost) exponentially localised observables as under item (G3) is a consequence of a uniform gap in finite volumes (see Remark~4.15 in~\cite{moon2019automorphic} {and Lemma~6.0.1 in \cite{moondiss}}).  The uniqueness of the ground state as required under  (G1) has been shown, to our present knowledge, only in very specific quantum spin systems.  We will discuss the validity of our assumptions in more detail below.

The key role of the spectral gap condition is that it allows to construct an SLT-inverse of the Liouvillian $\mathcal{L}_{H_0(t)}$.
Assuming a gap only in the bulk, as we do, means that 
 the action of the Liouvillian can only be \textit{inverted in the bulk}. 
The following proposition shows that this is indeed a meaningful concept. Its proof is given in Appendix~\ref{invliouappendix}. 
\begin{prop}{\rm (Inverting the Liouvillian in the bulk) }\label{invliou}~\\ 
	Let $\Psi_{H_0}  \in  \mathcal{B}_{ \exp(-a \cdot), 0 }^\circ$ be an infinite volume interaction and $\rho_0$ a $\mathcal{L}_{H_0}$-ground state that satisfies the gap assumption (G2).
	Then for all $A \in \mathcal{A}$ with $\mathcal{I}_{H_0}(A) \in D(\mathcal{L}_{H_0})$ and all $B \in  \mathcal{A}$  we have 
	\begin{equation*} 
	 \rho_0\big(\big[\mathcal{L}_{H_0}\circ \mathcal{I}_{H_0}\Ab{A}- \I A,B\big]\big) = 0\,.
	\end{equation*} 
\end{prop}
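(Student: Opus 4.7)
The natural approach is to pass to the GNS representation of the ground state $\rho_0$ and exploit the spectral decomposition of the GNS Hamiltonian $H := H_{0,\rho_0}$. Write $\pi := \pi_{\rho_0}$, $\Omega := \Omega_{\rho_0}$, and let $P_0$ denote the orthogonal projection of $\mathcal{H}_{\rho_0}$ onto $\C\Omega$. By assumption (G1), $\ker H = \C\Omega$, so $P_0$ is the spectral projection of $H$ at the eigenvalue $0$; by (G2), $\sigma(H)\subset \{0\}\cup [2g,\infty)$. Since $\rho_0([X,B]) = \langle\Omega,[\pi(X),\pi(B)]\Omega\rangle$, the claim reduces to showing that $Y := \pi\bigl(\mathcal{L}_{H_0}\circ \mathcal{I}_{H_0}\Ab{A} - \I A\bigr)$ satisfies $\langle\Omega,[Y,\pi(B)]\Omega\rangle = 0$ for every $B\in \mathcal{A}$.

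By the construction of the SLT-inverse (Appendix~\ref{invliouappendix}), one has
\[
\mathcal{I}_{H_0}\Ab{A} \;=\; \int_{\R} W_g(t)\,\E^{\I t\mathcal{L}_{H_0}}\Ab{A}\,\D t
\]
for an odd, rapidly decaying function $W_g$ whose Fourier transform satisfies $\hat{W}_g(E) = \I/E$ for $|E|\ge g$. Applying $\pi$, using the intertwining relation $\pi(\E^{\I t\mathcal{L}_{H_0}}\Ab{A}) = \E^{\I tH}\pi(A)\E^{-\I tH}$, and invoking the spectral theorem, one rewrites
\[
\pi(\mathcal{I}_{H_0}\Ab{A}) \;=\; \iint_{\sigma(H)\times\sigma(H)} \hat{W}_g(E-E')\,\D P(E)\,\pi(A)\,\D P(E').
\]
The hypothesis $\mathcal{I}_{H_0}\Ab{A} \in D(\mathcal{L}_{H_0})$ legitimises $\pi(\mathcal{L}_{H_0}\mathcal{I}_{H_0}\Ab{A}) = [H,\pi(\mathcal{I}_{H_0}\Ab{A})]$, and therefore
\[
Y \;=\; \iint \bigl[(E-E')\hat{W}_g(E-E') - \I\bigr]\,\D P(E)\,\pi(A)\,\D P(E').
\]
By the defining property of $\hat{W}_g$, the integrand vanishes identically for $|E-E'|\ge g$, so the gap assumption confines the support of $Y$ to two blocks: (i) $E = E' = 0$, and (ii) $E,E'\ge 2g$ with $|E-E'|<g$.

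For block (i) the integrand equals $-\I$ and the corresponding contribution is $-\I\, P_0\,\pi(A)\,P_0 = -\I\,\rho_0(A)\,P_0$, using that $P_0$ has rank one. Since $P_0\Omega = \Omega$, the commutator expectation $\langle\Omega,[P_0,\pi(B)]\Omega\rangle = \langle\Omega,\pi(B)\Omega\rangle - \langle\Omega,\pi(B)\Omega\rangle = 0$. For block (ii), denote the resulting operator by $Y_{\mathrm{II}}$; every constituent is of the form $\D P(E)\,\pi(A)\,\D P(E')$ with $E,E'\ne 0$, hence $Y_{\mathrm{II}}\Omega = 0$ and $Y_{\mathrm{II}}^{*}\Omega = 0$, so $\langle\Omega,[Y_{\mathrm{II}},\pi(B)]\Omega\rangle = 0$ as well. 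Summing both contributions yields the proposition. The one genuinely technical point that requires care is the rigorous manipulation of the double spectral integral under the unbounded derivation $[H,\,\cdot\,]$, which is precisely what the hypothesis $\mathcal{I}_{H_0}\Ab{A}\in D(\mathcal{L}_{H_0})$ is there to guarantee.
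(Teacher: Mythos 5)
Your proposal is correct and follows essentially the same strategy as the paper's proof: pass to the GNS representation, invoke the intertwining property $\pi(\E^{\I t\mathcal{L}_{H_0}}\Ab{\cdot})=\E^{\I tH}\pi(\cdot)\E^{-\I tH}$ and the spectral theorem for $H=H_{0,\rho_0}$, and use the gap $\sigma(H)\subset\{0\}\cup[2g,\infty)$ together with the support property of $\hat{w}$ to show the residual $\pi(\mathcal{L}_{H_0}\mathcal{I}_{H_0}\Ab{A}-\I A)$ is block-diagonal with respect to $P_\Omega$, hence has vanishing commutator expectation in $\Omega$. The only presentational difference is that the paper collapses your explicit double spectral integral into the operator identity $\mathrm{i}[H,\mathcal{I}_{\rho_0}(\pi(A))]=\mathcal{J}_{\rho_0}(\pi(A))-\pi(A)$ (citing Prop.~6.9 of Nachtergaele--Sims--Young) and then argues $[P_{\Omega_{\rho_0}},\mathcal{J}_{\rho_0}(\pi(A))]=0$ via cyclicity of the trace; your ``block (i)/(ii)'' decomposition is precisely what that commutation encodes.

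One caveat worth flagging: you invoke (G1) to identify $P_0=|\Omega\rangle\langle\Omega|$ with the spectral projection of $H$ at $0$, but (G1) is not among the stated hypotheses of the proposition --- only (G2) is. The paper's own proof quietly relies on the same identification when asserting $[P_{\Omega_{\rho_0}},\mathcal{J}_{\rho_0}(\pi(A))]=0$ (which requires $P_\Omega$ to be the full spectral projection at $0$, not just some rank-one subprojection of it), so this is a shared imprecision in the statement rather than a gap in your argument specifically. In every application the paper makes of the proposition, (G1) holds, so nothing is lost; but it would be cleaner to either add (G1) to the hypotheses or phrase the gap assumption so as to force a one-dimensional kernel of $H$.
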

 We now state our main theorem. Its proof is given in Section~\ref{sec5} and based on the statements summarised in Proposition~\ref{prop:rapidtdl} and Proposition~\ref{invliou}. 
\begin{thm}{\rm (Adiabatic theorem for infinite systems with a gap in the bulk)} \label{existenceofneass3}~\\
 Let $\Psi_{H_0}$,  $\Psi_{H_1}$ and $v_\infty$ satisfy  (I1), (I2),  (G1), (G2), and (G3). 
	 Denote by  $\mathfrak{U}_{t,t_0}^{\varepsilon, \eta}$   the Heisenberg time-evolution on $\mathcal{A}$ generated by 
	 $\frac{1}{\eta}\Psi_{H^{\varepsilon}}$ with 
	 \[
	 \Psi_{H^{\varepsilon}}:= \Psi_{H_0} + \epsi (V_{v_\infty} + \Psi_{H_1})
	 \] and 
	 adiabatic parameter $\eta \in (0,1]$, cf.\  Corollary~\ref{lrbcorr}.
	Then for any $\varepsilon, \eta \in (0,1]$ and $t \in I$ there exists a near-identity automorphism $\beta^{\varepsilon, \eta}(t)= \mathrm{e}^{\mathrm{i} \varepsilon\mathcal{L}_{S^{\varepsilon, \eta}(t)}}$ with $\Psi_{  S^{\varepsilon, \eta}}\in \mathcal{B}^\circ_{I,\mathcal{S},\infty} $
		such that the super-adiabatic NEASS defined by 
	\begin{equation*}
	\Pi^{\varepsilon, \eta}(t) := \rho_0(t)\circ \beta^{\varepsilon, \eta}(t)  
	\end{equation*}
	has the following properties:
	\begin{enumerate}
\item $\Pi^{\varepsilon, \eta}$ almost {\bf intertwines the   time evolution}: For any {$n \in \mathbb{N}$}, there exists a constant $C_n$ such that for any $ t \in I$ and for all finite $X \subset \Gamma$ and $A \in \mathcal{A}_X \subset \mathcal{A}$
		\begin{align} \label{adiabboundbulk}
		&\left\vert \Pi^{\varepsilon, \eta}(t_0)(\mathfrak{U}_{t,t_0}^{\varepsilon, \eta}\Ab{A}) - \Pi^{\varepsilon, \eta}(t)(A) \right\vert \nonumber \\ &\qquad \qquad  \le \ C_n \ \frac{\varepsilon^{n+1} + \eta^{n+1}}{\eta^{d+1}}   \ \left(1+\vert t- t_0\vert^{d+1}\right) \,  \Vert A \Vert \, \vert X \vert^2.
		\end{align}
	For any $f \in\mathcal{S}$  the same statement (with a different constant $C_n$) holds also for all  $A \in \mathcal{D}_f$ after replacing $\Vert A \Vert \vert X \vert^2$ by $ \Vert A \Vert_f$. 
\item $\Pi^{\varepsilon, \eta}$  is  {\bf local in time}:
		$\beta^{\varepsilon, \eta}(t)$ depends only on $\Psi_{H^{\varepsilon}}$ and its time derivatives at time~$t$. 
\item $\Pi^{\varepsilon, \eta}$  is \textbf{stationary} whenever the Hamiltonian is stationary: 
  if for some $t\in I$ all time-derivatives of $\Psi_{H^{\varepsilon} }$ vanish at time $t$,
then $\Pi^{\varepsilon, \eta}({t}) = \Pi ^{\varepsilon, 0}({t})$. 
\item $\Pi^{\varepsilon, \eta} $  equals the ground state $\rho_0$  of $\mathcal{L}_{H_0 }$ whenever the perturbation vanishes and the Hamiltonian is stationary:
  if for some $t\in I$ all time-derivatives of $\Psi_{H^{\varepsilon} }$   vanish at time $t$ and $\Psi_{H_1(t)} = 0$ and $v_\infty(t) =0$, 
		then $\Pi^{\varepsilon, \eta}(t) = \Pi^{\varepsilon, 0}(t) = \rho_0(t)$.  
	\end{enumerate}
\end{thm}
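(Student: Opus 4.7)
The plan is to construct the generator $S^{\varepsilon,\eta}(t)$ perturbatively via adiabatic perturbation theory, iterating applications of the SLT-inverse $\mathcal{I}_{H_0(t)}$ of the Liouvillian. Concretely, for each $n\in\mathbb{N}$ I would define a truncated generator of the form
\begin{equation*}
S_n^{\varepsilon,\eta}(t) \;=\; \sum_{k=1}^{n}\sum_{j+l=k-1} \varepsilon^{j}\eta^{l}\, A_{j,l}(t),
\end{equation*}
where $A_{0,0}(t)=\mathcal{I}_{H_0(t)}\Ab{\Psi_{H_1}(t)+V_{v_\infty}(t)}$ absorbs the perturbation at leading order, and each higher $A_{j,l}(t)$ is obtained by applying $\mathcal{I}_{H_0(t)}$ to a remainder built from nested commutators with $H^\varepsilon(t)$ (for the $\varepsilon$-expansion) and time derivatives of the previously constructed terms (for the $\eta$-expansion, reflecting the motion of the instantaneous ground state). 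Proposition~\ref{prop:rapidtdl} guarantees that at each step the SLT-inverse and the commutators with SLT operators and Lipschitz potentials (via Lemma~\ref{lipschitzcomm}) keep the output in $\mathcal{B}^\circ_{I,\mathcal{S},\infty}$, so $\Psi_{S_n^{\varepsilon,\eta}}\in\mathcal{B}^\circ_{I,\mathcal{S},\infty}$ as required.

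To prove property~1, I would introduce the auxiliary function $F(s):=\Pi^{\varepsilon,\eta}(s)(\mathfrak{U}^{\varepsilon,\eta}_{t,s}\Ab{A})$ for $s\in[t_0,t]$ and differentiate, using $\partial_s\mathfrak{U}^{\varepsilon,\eta}_{t,s}=-\tfrac{1}{\eta}\mathfrak{U}^{\varepsilon,\eta}_{t,s}\circ \mathcal{L}_{H^{\varepsilon}(s)}$ together with a Duhamel formula for $\partial_s\beta^{\varepsilon,\eta}(s)$. The defining recursion for $S_n^{\varepsilon,\eta}$ is arranged precisely so that cancellations in $F'(s)$ leave only a remainder of order $\varepsilon^{n+1}+\eta^{n+1}$; the key algebraic input at each cancellation is Proposition~\ref{invliou}, which, when paired against $\rho_0(s)$, erases the ``off-diagonal'' piece $\mathcal{L}_{H_0}\circ\mathcal{I}_{H_0}\Ab{B}-\I B$. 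Integrating $F'(s)$ from $t_0$ to $t$ and using that the rescaled dynamics $\mathfrak{U}^{\varepsilon,\eta}$ spreads localised observables with Lieb--Robinson velocity of order $1/\eta$ produces an effective support of volume $\lesssim |X|+(|t-t_0|/\eta)^d$; combined with the quasi-locality norms of the terms in $S_n^{\varepsilon,\eta}$ this yields the $(1+|t-t_0|^{d+1})/\eta^{d+1}$ factor in \eqref{adiabboundbulk}. For $A\in\mathcal{D}_f$ one instead uses the quasi-locality estimate on the evolved observable directly, replacing $\|A\|\,|X|^2$ by $\|A\|_f$.

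Properties~2--4 are immediate from the explicit construction: each $A_{j,l}(t)$ is a polynomial in $\mathcal{I}_{H_0(t)}$ applied to $H^\varepsilon(t)$ and its time derivatives evaluated at the single time $t$ (property 2); if all time derivatives of $\Psi_{H^\varepsilon}$ vanish at $t$, the generators $A_{j,l}$ with $l\geq 1$—which are produced solely by time differentiation—vanish, yielding $S_n^{\varepsilon,\eta}(t)=S_n^{\varepsilon,0}(t)$ (property 3); and if additionally $\Psi_{H_1(t)}=0$ and $v_\infty(t)=0$, the leading seed $A_{0,0}(t)$ vanishes and hence so do all $A_{j,l}(t)$ with $(j,l)\neq(0,0)$, giving $\beta^{\varepsilon,0}(t)=\mathrm{id}$ and $\Pi^{\varepsilon,\eta}(t)=\rho_0(t)$ (property 4).

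The main obstacle is the quasi-locality bookkeeping through the iteration. Each application of $\mathcal{I}_{H_0(t)}$ degrades the decay function $\zeta\in\mathcal{S}$ controlling the locality of the interaction, and the $n$-th order statement requires $\Or(n)$ such applications composed with further commutators; one must choose the chain of decay functions $\zeta_n\in\mathcal{S}$ so that the norms $\|\cdot\|_{\zeta_n,m}$ and the corresponding $\mathcal{D}_{f_i}$-operator norms for $\mathcal{I}_{H_0(t)}$ all remain finite. Because the gap assumption (G2) only holds in the bulk GNS Hamiltonian, the entire construction must be carried out directly in the infinite-volume algebra $\mathcal{A}$, so that the boundedness statements $\mathcal{I}_{H_0(t)}:\mathcal{D}_{f_1}\to\mathcal{D}_{f_2}$ in Proposition~\ref{prop:rapidtdl} together with the inversion identity of Proposition~\ref{invliou} are exactly what is needed to replace the finite-volume gap inversions used in \cite{bachmann2018adiabatic,teufel2020non,henheikteufel20202}.
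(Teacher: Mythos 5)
Your overall strategy is conceptually aligned with the paper's: construct $S_n^{\varepsilon,\eta}$ iteratively via the SLT-inverse $\mathcal{I}_{H_0(t)}$, differentiate $F(s)=\Pi^{\varepsilon,\eta}(s)(\mathfrak{U}^{\varepsilon,\eta}_{t,s}\Ab{A})$ using Duhamel, and invoke Proposition~\ref{invliou} to kill the coefficients up to order $n$. But the central device of the paper's proof is missing from your sketch, and the hole it leaves is real.

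You assert that ``the entire construction must be carried out directly in the infinite-volume algebra $\mathcal{A}$,'' whereas the paper does essentially the opposite: the iterative scheme \eqref{A1}--\eqref{Aj} is run on the \emph{finite}-volume algebras $\mathcal{A}_{\Lambda_l}$, where $H_0^{\Lambda_l}$, $V^{\Lambda_k}$ and the $A_j^{\Lambda_l,\Lambda_k}$ are bounded self-adjoint operators and identities such as the $\varepsilon$-expansion of $e^{-i\varepsilon S_n}(H_0+\varepsilon V)e^{i\varepsilon S_n}$ are unproblematic; only then is the resulting expression, paired against $\rho_0$, pushed to the thermodynamic limit. A key point you do not address is the \emph{two-parameter} structure $\beta_n^{\varepsilon,\eta,\Lambda_l,\Lambda_k}$ with the unperturbed $H_0$ on $\Lambda_l$ and the perturbation (and $\dot H_0$) on the smaller $\Lambda_k\subset\Lambda_l$: since there is no finite-volume gap, the finite-volume $\mathcal{I}^{\Lambda_l}$ does \emph{not} invert the Liouvillian, so the coefficients $R_j^{\Lambda_l,\Lambda_k}$ are not zero; they only vanish when $\rho_0(s)([R_j^{\Lambda_l,\Lambda_k},\cdot])$ is evaluated in the limit $l\to\infty$ with $k$ fixed, using Proposition~\ref{invliou}, and it is precisely to control this limit in $\mathcal{D}_f$-norms that the rapid-thermodynamic-limit machinery (Proposition~\ref{prop:rapidtdl}, Appendices~\ref{technicallemmata} and \ref{app:C}) is set up. Your ``direct'' construction would instead require giving meaning to the Duhamel expansion of $\partial_s\beta^{\varepsilon,\eta}(s)$ and to nested commutator identities at the level of densely-defined, unbounded derivations on $\mathcal{A}$, which you neither supply nor cite infrastructure for. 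Two further omissions: the $\dot\rho_0(s)$ contribution to $F'(s)$ must be rewritten using Lemma~\ref{derivativeofstate} (itself relying on (G2) and (G3)), a non-trivial step you never mention; and you state the bound for an $n$-independent $\Pi^{\varepsilon,\eta}$ while only constructing truncated $S_n^{\varepsilon,\eta}$, without addressing the resummation.
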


Let us discuss a class of Hamiltonians that are used to model Chern  or topological insulators and for which  our assumptions  can be shown  or are expected  to hold. Consider the unperturbed (time independent) Hamiltonian 
\begin{align} \label{examplehamiltonian2}
H_0^{\Lambda_k} \;= \;&\sum_{x,y \in \Lambda_k} a_x^*T(x -y)a_y \;+\; \sum_{x \in \Lambda_k} a_x^*\phi(x)a_x \nonumber\\ &+ \sum_{x,y \in \Lambda_k} a_x^* a_x W(d^{\Lambda_k}(x,y))a_y^*a_y\; -\; \mu N_{\Lambda_k}.
\end{align}
Assume that the kinetic term $T : \Gamma \to \mathcal{L}(\mathbb{C}^r)$ is an exponentially  decaying function with $T(-x) = T(x)^*$, the potential term $\phi: \Gamma \to \mathcal{L}(\mathbb{C}^r)$ is a bounded function taking values in the self-adjoint matrices, and the two-body interaction $W: [0,\infty) \to \mathcal{L}(\mathbb{C}^r)$ is exponentially decaying and also takes values in the self-adjoint matrices. Note, that $x-y$ in the kinetic term refers to the difference modulo $\Lambda_k$ if $\Lambda_k$ is supposed to have a torus geometry. 

It is well known  that   non-interacting Hamiltonians $H_0$, i.e.\ with $W \equiv 0$, of the type \eqref{examplehamiltonian2}   on a {\it torus} (periodic boundary condition)    have a spectral gap whenever the chemical potential $\mu$ lies in a gap of the spectrum of the corresponding one-body operator on the infinite domain.  It was recently shown \cite{hastings2019stability,DS}, that  the spectral gap remains open when perturbing by  sufficiently small short-range interactions $W \neq 0$. On the other hand,  the Hamiltonian $H_0$ on a {\it cube}  with open boundary condition has, in general, no longer a spectral gap because of the appearance of edge states.

Note that  for either boundary condition the corresponding interactions have a rapid thermodynamic limit and converge to the same infinite volume interaction $\Psi_{H_0}$ defined by $T$, $\phi$ and $W$ on all of $\Gamma$. Thus, 
in the thermodynamic limit the different boundary conditions lead to the same infinite volume dynamics on the quasi-local algebra $\mathcal{A}$. 
More precisely, by application of, e.g., Proposition~2.2 from \cite{henheikteufel20202}, both sequences of dynamics are strongly convergent, i.e. 
\begin{equation*}
\E^{\I t \mathcal{L}^{\Lambda_k}_{H_{0,\mathrm{torus}}}}\Ab{A} \xrightarrow{k \to \infty} \E^{\I t \mathcal{L}_{H_{0 }}}\Ab{A}
 \xleftarrow{k \to \infty} \E^{\I t \mathcal{L}^{\Lambda_k}_{H_{0,\mathrm{cube}}}}\Ab{A}
\end{equation*}
for any observable $A\in \mathcal{A}_{\mathrm{loc}}$.
 
The spectral gap for the finite volume ground states (which we have for the torus geo\-metry) implies a spectral gap also for the GNS Hamiltonian associated to any   limiting 
$  \mathcal{L}_{H_0}$-ground state,
 as it cannot close abruptly in the thermodynamic limit (see Proposition~5.4 in \cite{bachmann2016lieb}). Therefore, the Hamiltonian \eqref{examplehamiltonian2} with open boundary conditions has a rapid thermodynamic limit and a gap \textit{only in the bulk}. While also uniqueness of the ground state
 is expected to hold for such models, to our knowledge it has been shown only for certain types of spin systems, cf.\ \cite{yarotsky2005uniqueness, FP,nachtergaele2020quasi, nachtergaele2021stability,henheikteufelwessel2021}.  
  
 A typical perturbation that is physically relevant  is the adiabatic switching of a small constant electric field.
 Theorem~\ref{existenceofneass4} then shows that the response in the bulk is local and independent of the boundary condition. For more details on the implications for linear response we refer to \cite{henheikteufel2020}.
  
\section{The adiabatic theorem for finite domains}\label{sec4}
Let $\Phi_{H_0}, \Phi_{H_1} \in  \mathcal{B}_{I, \exp(-a \, \cdot),\infty } $ be  interactions  and $v\in \mathcal{V}_I$ a Lipschitz potential, all  with a rapid thermodynamic limit, i.e.\ $\Phi_{H_0} \stackrel{\rm r.t.d.}{\rightarrow} \Psi_{H_0}$, $\Phi_{H_1} \stackrel{\rm r.t.d.}{\rightarrow} \Psi_{H_1}$, and $v \stackrel{\rm r.t.d.}{\rightarrow} v_\infty$.
For the infinite volume interaction $\Psi_{H_0}$ we require again existence of a unique gapped ground state as made precise in assumptions (G1)--(G3). However, in this section we prove an adiabatic theorem for the finite volume dynamics of states $\rho_0^{\Lambda_k}(t)$ that are close to the infinite volume ground state $\rho_0(t)$ in the bulk. 
More precisely, we require
\begin{assum}
 \begin{itemize}
	\item[(S)] The sequence $\big(\rho_0^{\Lambda_k}(t)\big)_{k\in \N}$   of  states  on $\mathcal{A}_{\Lambda_k}$ converges rapidly to $\rho_0(t)$ in the bulk:     there exist $C\in\R$, $m  \in \mathbb{N}$ and $h \in \mathcal{S}$ such that   for any finite $X \subset \Gamma$, $A \in \mathcal{A}_X$, and $\Lambda_k\supset X$
\begin{equation*}
\sup_{t \in I} \left\vert \rho_0(t)(A)  - \rho_0^{\Lambda_k}(t)(A) \right\vert \le C \Vert A \Vert \,\mathrm{diam}(X)^m \,  h\big(\mathrm{dist}(X,\Gamma \setminus \Lambda_k)\big)\,.
\end{equation*}
\end{itemize}
\end{assum}

While   the sequence $\rho_0(t)|_{\mathcal{A}_{\Lambda_k}}$ obtained by just restricting the infinite volume ground state trivially satisfies conditions (S), the adiabatic theorem below holds for super-adiabatic NEASS constructed from any sequence satisfying (S). Most interesting for application would be  a sequence of ground states 
$\rho_0^{\Lambda_k}(t)$   of the finite volume Hamiltonians $H_0^{\Lambda_k}(t)$.
While (S)  is expected to hold for any sequence of finite volume ground states for  the models discussed in the previous section, the only result we are aware of indeed proving such a statement is again for weakly interacting spin systems  \cite{yarotsky2005uniqueness}.

The following theorem asserts    that by   assuming a spectral gap in the bulk for the infinite system 
one obtains similar   adiabatic bounds also for states of  finite systems (without a spectral gap!) that are close to the infinite volume ground state in the bulk. As adiabaticity need not hold at the boundaries of the finite systems, non-adiabatic effects arising at the boundary may propagate into the bulk.  For this reason, an additional error term appears, but  it decays faster than any polynomial in the size of the finite system for any  fixed $\eta$.

\begin{thm}{\rm (Adiabatic theorem for finite systems with gap in the bulk)}\label{existenceofneass4}~\\
Let $\Phi_{H_0}, \Phi_{H_1} \in  \mathcal{B}_{I, \exp(-a \, \cdot),\infty } $ be  interactions  and $v\in \mathcal{V}_I$ a Lipschitz potential, all  with a rapid thermodynamic limit, i.e.\ $\Phi_{H_0} \stackrel{\rm r.t.d.}{\rightarrow} \Psi_{H_0}$, $\Phi_{H_1} \stackrel{\rm r.t.d.}{\rightarrow} \Psi_{H_1}$, and $v \stackrel{\rm r.t.d.}{\rightarrow} v_\infty$.
Let $\Psi_{H_0}$, $\Psi_{H_1}$, and $ v_\infty$ satisfy (I1), (I2), (G1), (G2), and (G3). Moreover, let $\big(\rho_0^{\Lambda_k}(t)\big)_{k\in \N}$ satisfy~(S).
For all $k\in\N$ let $\mathfrak{U}_{t,t_0}^{\varepsilon, \eta, \Lambda_k}$ be the Heisenberg time-evolution on $\mathcal{A}_{\Lambda_k}$ generated~by 
\[
\tfrac{1}{\eta}H^{\varepsilon,\Lambda_k}(t) := \tfrac{1}{\eta} \left( H^{ \Lambda_k}_0(t) + \epsi \left( 
V_v^{\Lambda_k} (t) + H^{\Lambda_k}_1(t) \right)\right)
\]
 with adiabatic parameter $\eta \in (0,1]$. 
	Then for any $\varepsilon, \eta \in (0,1]$ and $t \in I$ there exists a near-identity automorphism $\beta^{\varepsilon, \eta,\Lambda_k}(t)$ of $\mathcal{A}_{\Lambda_k}$
		such that the super-adiabatic NEASS defined by 
	\begin{equation*}
	\Pi^{\varepsilon, \eta,\Lambda_k}(t)  := \rho_0^{\Lambda_k}(t)\circ  \beta^{\varepsilon, \eta,\Lambda_k}(t)   
	\end{equation*}
	has the following properties:
	\begin{enumerate}
		\item It almost {\bf intertwines the   time evolution of observables in the bulk}: There exists $\lambda >0$ such that for any $n \in \mathbb{N}$  there exists a constant  $C_n$ {and for any compact $K\subset I$ and $m\in \N$ there exists a constant  $\tilde{C}_{n,m,K}$} such that for all {$k\in \N$}, all  finite $X \subset \Lambda_k$,   all $A \in \mathcal{A}_X$, and all $t,t_0\in K$  
		\begin{align} \label{adiabboundfinite}
		&\left\vert \Pi^{\varepsilon, \eta, \Lambda_k}(t_0)(\mathfrak{U}_{t,t_0}^{\varepsilon, \eta, \Lambda_k}\Ab{A}) - \Pi^{\varepsilon, \eta, \Lambda_k}(t)(A) \right\vert  \\ \nonumber
		& \qquad \qquad \qquad \le \ C_n \, \frac{\varepsilon^{n+1} + \eta^{n+1}}{\eta^{d+1}} \,  \left(1+\vert t- t_0\vert^{d+1}\right) \  \Vert A \Vert \, \vert X \vert^2 \\ \nonumber
		& \qquad \qquad \qquad \quad \ + \ \tilde{C}_{ n,m,K} \,  \left(1+\eta \,  \mathrm{dist}(X,\Gamma \setminus \Lambda_{\lfloor k -\lambda k^{\gamma}\rfloor} )\right)^{-m}\Vert A \Vert \,\mathrm{diam}(X)^{2d}. 
		\end{align}
		\item It is  {\bf local in time}:
		$\beta^{\varepsilon, \eta,\Lambda_k}(t)$ depends only on $H^{\varepsilon,\Lambda_k}$ and its time derivatives at time $t $. 
		\item It is \textbf{stationary} whenever the Hamiltonian is stationary: 
		  if for some $t\in I$ all time-derivatives of $H^{\varepsilon,\Lambda_k} $ vanish at time $t$,
			then $\Pi^{\varepsilon, \eta,\Lambda_k}({t}) = \Pi ^{\varepsilon, 0,\Lambda_k}({t})$. 
		\item It equals $ \rho^{\Lambda_k}_0(t)$ whenever the perturbation vanishes and the Hamiltonian is stationary:
		  if for some $t\in I$ all time-derivatives of $H^{\varepsilon, \Lambda_k} $   vanish at time $t$ and $V^{\Lambda_k}(t)=0$, 
			then $\Pi^{\varepsilon, \eta,\Lambda_k}(t) = \Pi^{\varepsilon, 0,\Lambda_k}(t) = \rho^{\Lambda_k}_0(t)$.  
	\end{enumerate}
\end{thm}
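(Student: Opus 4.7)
The plan is to construct $\beta^{\varepsilon,\eta,\Lambda_k}(t) := \mathrm{e}^{\mathrm{i}\epsi \mathcal{L}_{S^{\epsi,\eta,\Lambda_k}(t)}}$ by running the same iterative scheme as in the proof of Theorem~\ref{existenceofneass3}, but carried out for the finite-volume interactions: at each step one substitutes the infinite-volume derivations $\mathcal{L}_{H_0(t)}$, $\mathcal{L}_{H^\epsi(t)}$ and the bulk SLT-inverse $\mathcal{I}_{H_0(t)}$ by their finite-volume counterparts $\mathcal{L}_{H_0(t)}^{\Lambda_k}$, $\mathcal{L}_{H^\epsi(t)}^{\Lambda_k}$, $\mathcal{I}_{H_0(t)}^{\Lambda_k}$. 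Since this construction is purely algebraic and local in time, and the resulting interaction $\Phi_{S^{\epsi,\eta,\Lambda_k}}$ again has a rapid thermodynamic limit with limit $\Psi_{S^{\epsi,\eta}}$ by the preservation lemmata of Appendix~\ref{app:C}, properties~2 (local in time), 3 (stationarity), and 4 (agreement with $\rho_0^{\Lambda_k}$ when the perturbation is absent and the Hamiltonian is stationary) are inherited from the corresponding statements in Theorem~\ref{existenceofneass3} without any additional work.

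\textbf{Decomposition of the adiabatic error.} For property~1, I would introduce telescoping terms that compare the finite-volume quantities with their infinite-volume analogues:
\begin{align*}
\Pi^{\epsi,\eta,\Lambda_k}(t_0)\bigl(\mathfrak{U}_{t,t_0}^{\epsi,\eta,\Lambda_k}\Ab{A}\bigr) - \Pi^{\epsi,\eta,\Lambda_k}(t)(A)
&= \bigl[\Pi^{\epsi,\eta}(t_0)\bigl(\mathfrak{U}_{t,t_0}^{\epsi,\eta}\Ab{A}\bigr) - \Pi^{\epsi,\eta}(t)(A)\bigr] \\
&\quad + R_{\mathrm{in}}(t_0,k) + R_{\mathrm{out}}(t,k),
\end{align*}
where $R_{\mathrm{in}}$ collects the differences from replacing $\rho_0^{\Lambda_k}(t_0)$, $\beta^{\epsi,\eta,\Lambda_k}(t_0)$ and $\mathfrak{U}_{t,t_0}^{\epsi,\eta,\Lambda_k}$ by their infinite-volume counterparts, and $R_{\mathrm{out}}$ does the analogous job at the final time. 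The bracketed infinite-volume term is exactly the object bounded by Theorem~\ref{existenceofneass3}, yielding the first line of the right-hand side of \eqref{adiabboundfinite}.

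\textbf{Boundary remainders.} To bound $R_{\mathrm{in}}$ and $R_{\mathrm{out}}$ I would combine three ingredients. First, assumption~(S) controls $|\rho_0^{\Lambda_k}(B) - \rho_0(B)|$ for any $B \in \mathcal{A}_Y$ by $\mathrm{diam}(Y)^m\,h(\mathrm{dist}(Y,\Gamma\setminus\Lambda_k))$. Second, since $\Phi_{S^{\epsi,\eta,\Lambda_k}} \stackrel{\rm r.t.d.}{\rightarrow} \Psi_{S^{\epsi,\eta}}$, the automorphisms $\beta^{\epsi,\eta,\Lambda_k}$ and $\beta^{\epsi,\eta}$ agree on any given bulk observable up to a faster-than-polynomial error in the distance to $\Gamma\setminus\Lambda_{\lfloor k - \lambda k^\gamma\rfloor}$, by the $f$-norm convergence statements of Proposition~\ref{prop:rapidtdl}(2) in their quantitative form. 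Third, Proposition~\ref{prop:rapidtdl}(1), refined with the Lieb-Robinson estimates of Appendix~\ref{technicallemmata} at velocity $\Or(1/\eta)$, gives the corresponding rapid convergence $\mathfrak{U}^{\epsi,\eta,\Lambda_k}_{t,t_0}\Ab{A} \to \mathfrak{U}^{\epsi,\eta}_{t,t_0}\Ab{A}$ on any $A\in\mathcal{A}_X$ with $X$ in the bulk; the light-cone of the dynamics enlarges the effective support of $A$ to a neighbourhood of radius $\Or(|t-t_0|/\eta)$, which after being absorbed into the safety margin $\lambda k^\gamma$ produces the factor $(1+\eta\,\mathrm{dist}(X,\Gamma\setminus\Lambda_{\lfloor k-\lambda k^\gamma\rfloor}))^{-m}$ in \eqref{adiabboundfinite}. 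Putting the three ingredients together and using that $\mathrm{diam}(X)^{2d}$ majorises the combinatorial factors coming from finitely many commutator expansions delivers the second summand of \eqref{adiabboundfinite}.

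\textbf{Main obstacle.} The principal technical difficulty is maintaining this quantitative control uniformly across the whole perturbative scheme: each order involves nested commutators with $V_v^{\Lambda_k}$ and $H_1^{\Lambda_k}$, weighted integrals defining $\mathcal{I}_{H_0}^{\Lambda_k}$, and time-ordered expansions in $\mathfrak{U}^{\epsi,\eta,\Lambda_k}_{t,t_0}$, and every such operation must simultaneously preserve the $\mathcal{S}$-decay of the interaction and the rate in the rapid thermodynamic limit. This is precisely what Definition~\ref{cauchydefinition} together with the preservation results of Appendices~\ref{technicallemmata}--\ref{app:C} is designed to supply; without the quantitative refinements of Proposition~\ref{prop:rapidtdl} one could neither carry condition~(S) through the dressing by $\beta^{\epsi,\eta,\Lambda_k}$, nor control the commutator expansions that convert the telescoping differences into the bulk decay captured by $(1+\eta\,\mathrm{dist}(X,\Gamma\setminus\Lambda_{\lfloor k-\lambda k^\gamma\rfloor}))^{-m}$.
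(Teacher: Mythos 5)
Your decomposition --- an infinite-volume adiabatic error term bounded by Theorem~\ref{existenceofneass3}, plus boundary remainders $R_{\mathrm{in}},R_{\mathrm{out}}$ controlled via~(S), the rapid thermodynamic limit of the generators, and Lieb--Robinson quasi-locality at speed $\Or(1/\eta)$ --- is exactly the structure of the paper's proof, including the observation that properties 2--4 are inherited by construction. You gloss over two devices the paper actually needs to close the estimate (the interpolation inequality $c\le a^{\alpha}b^{1-\alpha}$ used to trade exponential boundary decay against the $(1+\eta^{-1}|t-t_0|)$ blow-up, and the nested conditional-expectation decomposition of $\mathfrak{U}^{\epsi,\eta,\Lambda_{k_2}}\Ab{A}$ and its image under $\beta_n^{\epsi,\eta,\Lambda_{k_1}}$, with intermediate scales $k_1,k_2,k_3$), but these are technical refinements of the mechanism you identify rather than a different route.
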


As mentioned before, the second term in the bound \eqref{adiabboundfinite} arises when non-adiabatic transitions that can occur in the boundary region propagate into the bulk. While we do not prove this explicitly, our proof suggests that if the perturbations $v$, $H_1$ and $\dot H_0$ are supported in the bulk and if $\rho_0^{\Lambda_k}(t)$ is the ground state of $H_0(t)$, then this second part of the bound can be replaced by a term that is uniform in time and $\eta$. Finally, for the topological insulators discussed in the previous section one expects that excitations at the boundary cannot propagate into the bulk, since the chiral edge states do not couple to states supported in the bulk. I.e., in such systems   we expect \eqref{adiabboundfinite} to hold with the second term replaced by a bound that  is uniform in time and $\eta$ even if the adiabatic perturbation acts also near the boundary.

\section{Proofs of the main results}\label{sec5}
\subsection{Proof of the adiabatic theorem in the bulk}
\begin{proof} [Proof of Theorem~\ref{existenceofneass3}]
 The proof  is an adaption of the strategies  for proving the analogous theorem  for finite domains (see Theorem 5.1 in \cite{teufel2020non}) and its thermodynamic limit version (see Theorem 3.2 in \cite{henheikteufel20202}). In order to cope with the new situation of  ``a gap only in the bulk'',  we use and adapt techniques originally developed in \cite{moon2019automorphic} and summarised in a modified way in Appendix~\ref{technicallemmata}.
	
Before going into details, let us briefly outline the structure of the proof. We  first  construct a sequence of densely defined derivations  $(\mathcal{L}_{A_j^{\varepsilon, \eta}(t)})_{j\in\N}$  such that for each $n\in\N$  the state $\Pi_n^{\epsi,\eta}(t) := \rho_0(t)\circ \beta_n^{\varepsilon, \eta}(t)$ constructed from
	 the automorphism 
	  \begin{equation*}
\beta_n^{\varepsilon, \eta}(t)\Ab{A} := \mathrm{e}^{-\mathrm{i}\varepsilon \mathcal{L}_{S_n^{\varepsilon, \eta}(t)}}\Ab{A}  
\end{equation*}
generated by
	 \begin{equation}\label{SnDef}
\varepsilon \mathcal{L}_{S_n^{\varepsilon, \eta}(t)} := \sum_{j = 1}^{n} \varepsilon^{j} \mathcal{L}_{A_j^{\varepsilon, \eta}(t)}
\end{equation}
		 satisfies \eqref{adiabboundbulk}.
	The $n$-independent states $\Pi^{\epsi,\eta}(t)$ are then obtained from a suitable resummation of the asymptotic series \eqref{SnDef} for $n\to\infty$. 
	
In 	\cite{teufel2020non} the corresponding SLT-operators $A_j^{\varepsilon, \eta, \Lambda_l}(t)$ on finite domains $\Lambda_l$ are constructed from an algebraic iterative scheme. Basically, we would like  to apply the same scheme to construct the associated derivations  in the infinite volume with good control of the remainders that add up to the right hand side of \eqref{adiabboundbulk}. To do so, we first   construct the coefficients $A_j^{\varepsilon, \eta, \Lambda_l,\Lambda_k}(t)$ in finite volumes $\Lambda_l$ exactly as in \cite{teufel2020non}, where the perturbation is restricted to an even smaller volume $\Lambda_k$. Because of the absence of a spectral gap in finite volume, however,  the remainders are not   small, i.e.\ a bound like \eqref{adiabboundbulk} does not hold.  
The two main steps are now
\begin{itemize}
\item[(1)] to show that the SLT-operators   $A_j^{\varepsilon, \eta, \Lambda_l, \Lambda_k}(t)$ have a rapid thermodynamic limit.\footnote{For this proof of the adiabatic theorem in the infinite volume it would suffice to show that the  $A_j^{\varepsilon, \eta, \Lambda_l, \Lambda_k}(t)$ have a   thermodynamic limit. The more precise control on the rate of convergence will be used only in the proof of the adiabatic theorem in finite volume, Theorem~\ref{existenceofneass4}.}
\item[(2)] to show that \eqref{adiabboundbulk} holds in the thermodynamic limit.
\end{itemize}
Part (1) is rather technical: 
The iterative construction of the operators $A_j^{\varepsilon, \eta, \Lambda_l,\Lambda_k}(t)$ takes as an initial input the instantaneous Hamiltonian $H^{  \Lambda_l}_0(t)$,  its time derivative $\dot H^{  \Lambda_k}_0(t)$, and the perturbation $V^{\Lambda_k}(t)$. Then it proceeds (see   \eqref{A1} and \eqref{Aj} below) by  
 taking sums of commutators, further time-derivatives,  and by applying the map $\mathcal{I}$  defined in Appendix~\ref{inverseL}   and called, for simplicity, the inverse Liouvillian, although it does not have this property in finite volumes.  
  In Appendix~\ref{app:C} we show that all these operations preserve the property of   having a rapid thermodynamic limit (see Lemma~\ref{cauchy1}, Lemma~\ref{cauchy2} and Lemma~\ref{cauchy3}), and thus also all  $A_j^{\varepsilon, \eta, \Lambda_l,\Lambda_k}(t)$ have a rapid thermodynamic limit. 

Part (2) is now discussed in detail and we divide it into four steps.\\[3mm] 
\noindent  \textbf{(a)} {\bf Decomposition of $  \Pi^{\varepsilon, \eta}(t_0)(\mathfrak{U}_{t,t_0}^{\varepsilon, \eta}\Ab{A}) - \Pi^{\varepsilon, \eta}(t)(A) $}\\[1mm]
As a first step  we construct  a two-parameter family of automorphisms $\beta_n^{\varepsilon, \eta,  \Lambda_l, \Lambda_k}(t)$ where the unperturbed Hamiltonian $H_0(t)$ and the ``perturbations'' $V_v$, $H_1$ and $\dot{H}_0$ enter on different scales $\Lambda_l$ and $\Lambda_k$ with $  l\geq k$.

  To do so, we  extend each restriction $v_{\infty}^{\Lambda_k}:\Lambda_k\to \R$ of the infinite volume Lipschitz potential $v_\infty$ to a function  $\tilde v^{\Lambda_k}:\Gamma\to\R$ in such a way  that $\mathrm{supp}( \tilde v^{\Lambda_k}) \subset \Lambda_{2k}$
and $\tilde v^{\Lambda_k}$ satisfies the  Lipschitz condition \eqref{C_v} on all of $\Gamma$, possibly with a different but $k$-independent constant ${C}^\circ_{\tilde v}$. Hence, $\tilde v^{\Lambda_k}$ defines a Lipschitz potential on any $\Lambda_l$ with $l\geq 2k$.
Moreover, the corresponding operator satisfies
$V_{\tilde v}^{\Lambda_k} \in \mathcal{A}_{\Lambda_{l}} \subset \Aloc \subset \mathcal{D}_f$ for any $l\geq 2k$ and all $f \in \mathcal{S}$.

By $\mathfrak{U}_{t,t_0}^{\varepsilon, \eta,\Lambda_l, \Lambda_k}$
we denote the dynamics generated by $H_0^{\Lambda_l} + \varepsilon (V^{\Lambda_k}_{ \tilde v} + H_1^{\Lambda_k})   =: H_0^{\Lambda_l} + \epsi V^{\Lambda_k}$  on $\mathcal{A}_{\Lambda_l}$  with adiabatic parameter $\eta>0$.  Similarly, let $\beta_n^{\varepsilon, \eta,  \Lambda_l, \Lambda_k}(t)$ denote the automorphism on $\mathcal{A}_{\Lambda_l}$   constructed from the inputs $H_0^{\Lambda_l}$, $ V_{\tilde v}^{\Lambda_k}$, $H_1^{\Lambda_k}$  and $\dot{H}_0^{\Lambda_k}$ (cf.~\eqref{A1}), 
and  define the state
\begin{equation*}
\Pi_n^{\varepsilon, \eta, \Lambda_l, \Lambda_k}(t) := \rho_0(t) \circ \beta_n^{\varepsilon, \eta,  \Lambda_l, \Lambda_k}(t) 
\end{equation*}
  on $\mathcal{A}_{\Lambda_l}$. 
In the following we will    consider $H^{\epsi,k} := \{ H^{\Lambda_l}_0 + \epsi H_1^{\Lambda_k} \}_{l\in\N}$ for each fixed $k\in\N$ as an SLT-operator and use, often implicitly, that the   $\Vert\cdot \Vert_{I, \zeta, n }$-norms of the corresponding interactions are bounded uniformly in $k$.
  
From now on, we drop the superscripts $\varepsilon$ and $\eta$, and fix $X\in \mathcal{P}_0(\Gamma)$ and $A\in \mathcal{A}_X$. Then for any $l,k\in\N$ with $X\subset \Lambda_l$ and $l\geq 2k$ we can  
split the left hand side of \eqref{adiabboundbulk} as
\begin{align}
\left\vert \Pi_n^{}(t_0)(\mathfrak{U}_{t,t_0}\Ab{A} )-\Pi_n^{}(t)(A)\right\vert \label{first}
& \;\le\;  \left\vert \Pi_n^{}(t_0)(\mathfrak{U}_{t,t_0}\Ab{A}) -\Pi_n^{ \Lambda_l, \Lambda_k}(t_0)(\mathfrak{U}_{t,t_0}^{ \Lambda_l, \Lambda_k}\Ab{A})  \right\vert\\ 
 \label{fifth}
& \qquad+ \ \left\vert \Pi_n^{ \Lambda_l, \Lambda_k}( t_0)(\mathfrak{U}_{t,t_0}^{ \Lambda_l, \Lambda_k}\Ab{A}) - \Pi_n^{ \Lambda_l, \Lambda_k}( t)(A)\right\vert 
\\ \label{second}
& \qquad+ \ \left\vert \Pi^{\Lambda_l, \Lambda_k}_n( t)(A) -\Pi_n^{}( t)(A)\right\vert \,.
\end{align}
   We will now show that \eqref{first} and \eqref{second}  can be made arbitrarily small by choosing  $l,k$ large enough. To be precise, by this phrase we mean that for any $\delta>0$ there exists $N_\delta\in \N$ such that the corresponding term is smaller than $\delta$ for all $k\geq N_\delta$ and all $l\geq 2k$.   For~\eqref{fifth} we show that it satisfies the bound \eqref{adiabboundbulk},  up to terms, that vanish in the limit $l\to \infty$ for any $k\in\N$. Together these statements then imply \eqref{adiabboundbulk}.\\[3mm]
\noindent  \textbf{(b)} {\bf Thermodynamic limits of the automorphisms}\\[1mm] 
To see that \eqref{first} and \eqref{second} are small for $l,k$ large enough, we merely use that the involved automorphisms have a thermodynamic limit. We only discuss \eqref{first} in some detail and estimate
	\begin{align}
\left\vert \Pi_n^{}(t_0)(\mathfrak{U}_{t,t_0}^{}\Ab{A}) \; -\right. & \left. \Pi_n^{ \Lambda_l, \Lambda_k}(t_0)(\mathfrak{U}_{t,t_0}^{ \Lambda_l, \Lambda_k}\Ab{A})  \right\vert \nonumber \\
\le &\left\vert \Pi_n^{}(t_0)(\mathfrak{U}_{t,t_0}^{}\Ab{A}) -\Pi_n^{ \Lambda_l, \Lambda_l}(t_0)(\mathfrak{U}_{t,t_0}^{ \Lambda_l, \Lambda_l}\Ab{A})  \right\vert \label{est1}\\
&+ \left\vert \Pi_n^{ \Lambda_l, \Lambda_l}(t_0)(\mathfrak{U}_{t,t_0}^{ \Lambda_l, \Lambda_l}\Ab{A}) - \Pi_n^{ \Lambda_l, \Lambda_k}(t_0)(\mathfrak{U}_{t,t_0}^{ \Lambda_l, \Lambda_k}\Ab{A}) \right\vert\, . \label{est2}
	\end{align}
	The first term \eqref{est1} can be made arbitrarily small by choosing $l$ large enough. The precise argument  was carried out  in the proof of Theorem 3.5 in \cite{henheikteufel20202} and is not repeated here. For the second term, one combines  Theorem~3.4 and Theorem~3.8 from \cite{nachtergaele2019quasi} (see also Theorem~B.2 from \cite{henheikteufel20202}) in the same manner as in Proposition~\ref{lrb}. Using these statements and the construction procedure (see \eqref{A1} and \eqref{Aj}) of $\varepsilon {S_n^{}} = \sum_{j = 1}^{n} \varepsilon^{j} {A_j^{}}$ generating the automorphism $\beta_n^{\Lambda_l, \Lambda_k}$, one obtains that \eqref{est2} can be made arbitrarily small by choosing $l,k$ large enough. Note, that the composition of the automorphisms can be estimated with the aid of Lemma 4.1 from \cite{henheikteufel20202}. One can deal with \eqref{second} similarly.\\[3mm]
\textbf{(c)} {\bf Thermodynamic limit of the adiabatic approximation}\\[1mm]We are now left to study the term \eqref{fifth} that compares the full time evolution with  the adiabatic evolution on the cube $\Lambda_l$ with the perturbations restricted to the smaller cube $\Lambda_k$. 
For this difference one   obtains an expansion in powers of $\varepsilon$ exactly like in the proof of Proposition~5.1 in \cite{teufel2020non}. However, in the absence of a spectral gap of the finite volume Hamiltonians, the coefficients up to order $n$ are not identically zero anymore. Instead we need to show 
 that they  vanish as $l \to \infty$ for arbitrary $k \in \mathbb{N}$. To do so, we will apply Proposition~\ref{invliou} and the technical lemmata in Appendix~\ref{technicallemmata}, which have been summarised in Proposition~\ref{prop:rapidtdl}.

  To begin, observe that for  $A \in \mathcal{A}_{\mathrm{loc}}$ 
\begin{align}\label{erstabsch}\nonumber
  \Pi_n^{ \Lambda_l, \Lambda_k}(t_0)(\mathfrak{U}_{t,t_0}^{ \Lambda_l, \Lambda_k}\Ab{A}) &- \Pi_n^{\varepsilon,\eta, \Lambda_l, \Lambda_k}(t)(A) \\
= \  & \int_{t_0}^{t} \mathrm{d}s \ \frac{\mathrm{d}}{\mathrm{d}s} \,\rho_0(s)  \left(  \beta_n^{ \Lambda_l, \Lambda_k}(s)\circ  \mathfrak{U}_{t,s}^{ \Lambda_l, \Lambda_k}\Ab{A} \right)  \,.
\end{align}
According to  Lemma~\ref{quasilocaltimeevolution1} and Lemma~\ref{quasilocalautomorphism} there exists a positive constant $C_\eta$ such that
\begin{equation} \label{firstestimate}
\sup_{k,l \in \mathbb{N}} \left\|\beta_n^{ \Lambda_l, \Lambda_k} (s) \circ \mathfrak{U}_{t,s}^{ \Lambda_l, \Lambda_k}\Ab{A}  \right\Vert_{ f_2} \le C_\eta \Vert A \Vert_{f_1}
\end{equation}
uniformly for $s, t$ in compacts and for suitable $f_1$ and $f_2$. The
 bound in \eqref{firstestimate} is uniform in $l$ and $k$, since the constants on the right hand side of the relevant estimates in Lemma~\ref{quasilocaltimeevolution1} and Lemma~\ref{quasilocalautomorphism} only depend on the interaction norm, $\Vert \cdot \Vert_{I, \zeta, n }$, and the constant, $  C^\circ_{\tilde v}$, of the Lipschitz potential, which are both independent of $l$ and $k$. However, the constant $C_\eta$ diverges as $\eta$ goes to zero. But this will be of no concern for us, since we will use  \eqref{firstestimate} only for showing that certain quantities vanish in the thermodynamic limit at a fixed value of $\eta>0$.

At this point, a technical digression is necessary. In \eqref{firstestimate} and many times in the following we will consider    (sequences of) automorphisms  or derivations on the algebra $\mathcal{A}$ as (sequences of) bounded operators between normed spaces $(\mathcal{D}_{f_1}, \|\cdot\|_{f_1})$ and $(\mathcal{D}_{f_2}, \|\cdot\|_{f_2})$ for ``suitable'' $f_1,f_2\in\mathcal{S}$, where $f_2$ dominates  $f_1$, i.e.\ $f_1\prec f_2$. The precise connection between $f_1$ and $f_2$ is quantified by the corresponding lemmata in Appendix~\ref{technicallemmata} and also depends on the weight function $\zeta$ of the generator of the automorphism resp.\ derivation. In Lemma~\ref{SeqLem} we show that relative to any fixed weight function $\zeta \in\mathcal{S}$ there exists a sequence $(f_j)_{j\in\N}$ of weight functions $f_j\in\mathcal{S}$ such that the pair $f_i,f_{i+1}$ is ``suitable'' for all $i\in\N$, i.e.\ that it can be used in all the lemmata of Appendix~\ref{technicallemmata} together with any decay function $\xi\prec \zeta$.
 Now in this proof we fix such a sequence $(f_j)_{j\in\N}$ relative to the ``worst'' weight function $\zeta_n$ appearing in the  construction at level~$n$, i.e.\  all (finitely many) SLT-operators  are elements of $\mathcal{L}^\circ_{I, \zeta_n, \infty}$. 
More precisely, let $(f_j)_{j\in\N}$ be the sequence in $\mathcal{S}^R_\zeta$ constructed in Lemma~\ref{SeqLem} 
for $\zeta(N)  = \zeta_{n}\left(\lfloor \frac{N}{2} \rfloor^\gamma \right)^\frac12$ and  $R   = C \cdot 2^n$ with some $C < \infty$ (a polynomial growth would also suffice, as follows from the discussion in the proof of  Lemma~\ref{invliouconvlemma}). 
Hence,  in the finitely many norm estimates like \eqref{firstestimate} that will appear in the proof, we can just pick successive pairs $f_i,f_{i+1}$ from this sequence $f_1\prec f_2\prec f_3 \cdots \prec f_i \prec f_{i+1}$. 
As the precise weight function of the final target space plays no role in the statement of the theorem, we do not keep track of this explicitly. 

We now return to \eqref{erstabsch}.
Assuming (G3), the derivative can be evaluated by using the linearity and continuity of the involved maps. We find
\begin{align}
 \frac{\mathrm{d}}{\mathrm{d}s} \,  \rho_0(s) \left( \beta_n^{ \Lambda_l, \Lambda_k}(s)\circ  \mathfrak{U}_{t,s}^{ \Lambda_l, \Lambda_k}\Ab{A}  \right)  
= \ &\left(\frac{\mathrm{d}}{\mathrm{d}s}\,\rho_0(s) \right) \left( \beta_n^{ \Lambda_l, \Lambda_k}(s)\circ  \mathfrak{U}_{t,s}^{ \Lambda_l, \Lambda_k}\Ab{A} \right)\label{deco1}\\
&+ \  \rho_0(s)  \left( \left(\frac{\mathrm{d}}{\mathrm{d}s} \,\beta_n^{ \Lambda_l, \Lambda_k}(s)\right)\circ  \mathfrak{U}_{t,s}^{ \Lambda_l, \Lambda_k}\Ab{A}\right) \label{deco2}\\
&+ \  \rho_0(s)  \left( \beta_n^{ \Lambda_l, \Lambda_k}(s)\circ  \left(\frac{\mathrm{d}}{\mathrm{d}s}\,\mathfrak{U}_{t,s}^{ \Lambda_l, \Lambda_k}\right)\Ab{A}\right) .\label{deco3}
\end{align}
The first term \eqref{deco1} can be evaluated with the help of the following lemma, cf.\ Equation~(2.27) in \cite{moon2019automorphic}. Its proof is given after the proof of the theorem. 
\begin{lem}{\rm (Derivative of the ground state) } \label{derivativeofstate} ~\\
Let $f \in \mathcal{S}$ and $A \in \mathcal{D}_f$. Then  
\begin{equation*} 
\dot{\rho_0}(s)(A) = \mathrm{i} \, \int_{\mathbb{R}} \mathrm{d}v \,w(v) \int_{0}^{v} \mathrm{d}u \, \rho_0(s)  \left(\mathcal{L}_{\dot{H}_0(s)} \circ \mathrm{e}^{\mathrm{i}u\mathcal{L}_{H_0(s)}}\Ab{A}\right).
\end{equation*}
\end{lem}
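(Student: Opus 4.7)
The plan is to combine the ground state stationarity $\rho_0(s)\circ\E^{\I u\mathcal{L}_{H_0(s)}}=\rho_0(s)$ with Duhamel's formula and the defining Fourier properties of the weight function $w$ used in the construction of the SLT-inverse $\mathcal{I}_{H_0(s)}$ in Appendix~\ref{invliouappendix}. Writing $\alpha_u^s\Ab{\cdot}:=\E^{\I u\mathcal{L}_{H_0(s)}}\Ab{\cdot}$, I would first differentiate the identity $\rho_0(s)(\alpha_u^s\Ab{A})=\rho_0(s)(A)$ in $s$, using the regularity (G3) to make sense of the pointwise derivative of the state; this produces
\begin{equation*}
\dot\rho_0(s)(A)-\dot\rho_0(s)(\alpha_u^s\Ab{A})=\rho_0(s)(\partial_s\alpha_u^s\Ab{A}).
\end{equation*}

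Duhamel's formula then gives $\partial_s\alpha_u^s\Ab{A}=\I\int_0^u \alpha_{u-u'}^s\circ\mathcal{L}_{\dot H_0(s)}\circ\alpha_{u'}^s\Ab{A}\,\D u'$, and applying stationarity $\rho_0(s)\circ\alpha_{u-u'}^s=\rho_0(s)$ collapses the outer automorphism, yielding $\rho_0(s)(\partial_s\alpha_u^s\Ab{A})=\I\int_0^u\rho_0(s)(\mathcal{L}_{\dot H_0(s)}\alpha_v^s\Ab{A})\,\D v$ after relabelling $v:=u'$. Multiplying the resulting identity by $w(u)$ and integrating over $u\in\R$ reproduces, on the right-hand side, precisely the double integral claimed in the lemma, while the left-hand side becomes $\dot\rho_0(s)(A)\int w(u)\,\D u-\int w(u)\dot\rho_0(s)(\alpha_u^s\Ab{A})\,\D u$. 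The weight $w$ is characterised by $\hat w(0)=1$ (equivalently $\int w=1$) and $\hat w(E)=0$ on $[2g,\infty)\supset\sigma(H_{0,\rho_0(s)})\setminus\{0\}$, which are the very properties making $\mathcal{I}_{H_0(s)}$ an SLT-inverse of the Liouvillian in the bulk, cf.\ Proposition~\ref{invliou}. The normalisation $\int w=1$ turns the first term into $\dot\rho_0(s)(A)$, so the claim reduces to proving the vanishing $\int_\R w(u)\dot\rho_0(s)(\alpha_u^s\Ab{A})\,\D u=0$.

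For this vanishing I would exploit (G3) to view $\dot\rho_0(s)$ as a continuous linear functional on $(\mathcal{D}_f,\|\cdot\|_f)$, combined with the quasi-locality of $\alpha_u^s\Ab{A}$ from Proposition~\ref{prop:rapidtdl} and the rapid decay of $w$, so that $u\mapsto\alpha_u^s\Ab{A}$ is Bochner integrable in some $\mathcal{D}_{f'}$ with $f'\in\mathcal{S}$. Interchanging functional and integral reduces the task to showing $\dot\rho_0(s)(B(s))=0$ for $B(s):=\int w(u)\alpha_u^s\Ab{A}\,\D u$. In the GNS triple $(\mathcal{H}_{\rho_0(s)},\pi_s,\Omega_s)$ associated with $\rho_0(s)$, the operator $B(s)$ acts on $\Omega_s$ as $\hat w(H_{0,\rho_0(s)})\pi_s(A)\Omega_s$, and the gap (G2) together with the spectral properties of $\hat w$ gives $\pi_s(B(s))\Omega_s=\rho_0(s)(A)\Omega_s$. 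Hence $\rho_0(s)(B(s))=\rho_0(s)(A)$ for every $s$; differentiating this scalar identity in $s$ and applying the Duhamel computation of the first paragraph to $B(s)$ in place of $A$ then forces $\dot\rho_0(s)(B(s))=0$.

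The main obstacle is this last step, where the heuristic identification of $B(s)$ with the ground-state projection of $A$ must be made rigorous inside $\mathcal{D}_f$. This requires combining the GNS spectral calculus with the gap (G2), while tracking the $s$-dependence of $B(s)$ via the same Duhamel expansion; the interchange of $\dot\rho_0(s)$ with the $u$- and $v$-integrations is justified by the uniform continuity provided by (G3) and by the quasi-locality estimates of Proposition~\ref{prop:rapidtdl}, which keep every intermediate expression inside a space $\mathcal{D}_{f'}$ with $f'\in\mathcal{S}$.
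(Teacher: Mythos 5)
Your reduction of the lemma to the single identity $\dot\rho_0(s)(\mathcal{J}_s\Ab{A}) = 0$ is correct and matches the paper's strategy, although you arrive at it via a different Duhamel: you differentiate $\alpha_u^s$ in the slow parameter $s$ and then integrate against $w$, whereas the paper integrates $\partial_u\alpha_u^s$ in the fast time $u$ (to write $A-\mathcal{J}_s\Ab{A}$ as an integral of $\mathcal{L}_{H_0(s)}\circ\alpha_u^s\Ab{A}$) and then separately invokes the product-rule identity $\dot\rho_0(s)(\mathcal{L}_{H_0(s)}\Ab{B})+\rho_0(s)(\mathcal{L}_{\dot H_0(s)}\Ab{B})=0$ obtained by differentiating $\rho_0(s)(\mathcal{L}_{H_0(s)}\Ab{B})\equiv0$. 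Both routes are algebraically sound and become rigorous with the quasi-locality estimates of Appendix~\ref{technicallemmata}, and both ultimately funnel everything into $\dot\rho_0(s)(\mathcal{J}_s\Ab{A})=0$.

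The genuine gap is in your argument for that key fact. You first establish, via GNS spectral calculus and the gap (G2), that $\pi_s(\mathcal{J}_s\Ab{A})\Omega_s=\rho_0(s)(A)\Omega_s$ and hence $\rho_0(s)(\mathcal{J}_s\Ab{A})=\rho_0(s)(A)$ for all $s$; this much is correct. But differentiating this scalar identity in $s$ gives $\dot\rho_0(s)(\mathcal{J}_s\Ab{A})+\rho_0(s)(\partial_s\mathcal{J}_s\Ab{A})=\dot\rho_0(s)(A)$, and computing $\rho_0(s)(\partial_s\mathcal{J}_s\Ab{A})$ by the very Duhamel formula of your first paragraph produces exactly the right-hand side of the lemma. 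So the differentiated identity is \emph{literally equivalent} to the statement of the lemma itself, and hence to $\dot\rho_0(s)(\mathcal{J}_s\Ab{A})=0$; no new information is produced. Your fallback, applying the Duhamel computation to $B(s):=\mathcal{J}_s\Ab{A}$ in place of $A$, likewise only yields $\dot\rho_0(s)(\mathcal{J}_s\Ab{A})=\dot\rho_0(s)(\mathcal{J}_s^2\Ab{A})$ (the $\mathcal{L}_{\dot H_0}$-term vanishes because $\pi_s(\alpha_{u'}^s\Ab{B(s)})\Omega_s$ is a scalar multiple of $\Omega_s$), which without a norm-limit $\mathcal{J}_s^n\Ab{A}\to\rho_0(s)(A)\mathbf{1}$ still does not close. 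In the paper, $\dot\rho_0(s)(\mathcal{J}_s\Ab{A})=0$ is not derived from these ingredients; it is imported from \cite{moon2019automorphic} (equation (2.27) there), where it rests on the spectral-flow/automorphic-equivalence machinery: $\dot\rho_0(s)$ is expressed as $\rho_0(s)$ applied to a commutator with the spectral-flow generator, and the commutator expectation vanishes because $\pi_s(\mathcal{J}_s\Ab{A})$ and its adjoint both send $\Omega_s$ to multiples of $\Omega_s$. That input from \cite{moon2019automorphic} is indispensable and is precisely what your argument is missing.
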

\noindent Proceeding with \eqref{deco1} we obtain
\begin{align}
 \left(\frac{\mathrm{d}}{\mathrm{d}s}\rho_0(s) \right) &\left( \beta_n^{ \Lambda_l, \Lambda_k}(s)\circ  \mathfrak{U}_{t,s}^{ \Lambda_l, \Lambda_k}\Ab{A}\right) \label{firstterm}\\
= \ &\mathrm{i} \, \int_{\mathbb{R}} \mathrm{d}v \,w(v) \int_{0}^{v} \mathrm{d}u \, \rho_0(s)  \left( \mathcal{L}_{\dot{H}_0(s)} \circ \mathrm{e}^{\mathrm{i}u\mathcal{L}_{H_0(s)}} \circ \beta_n^{ \Lambda_l, \Lambda_k}(s)\circ  \mathfrak{U}_{t,s}^{ \Lambda_l, \Lambda_k}\Ab{A}\right)\,. \nonumber
\end{align}
For \eqref{deco2} we get
\begin{align}
 \rho_0(s) \;  & \left(\left(\frac{\mathrm{d}}{\mathrm{d}s} \beta_n^{ \Lambda_l, \Lambda_k}(s)\right)\circ  \mathfrak{U}_{t,s}^{ \Lambda_l, \Lambda_k}\Ab{A}\right) \label{secondterm}\\
= \ &- \mathrm{i} \,\rho_0(s)  \left(   \left[   \int_{0}^{1} \mathrm{d} \lambda \,\mathrm{e}^{-\mathrm{i} \lambda \varepsilon S_n^{ \Lambda_l, \Lambda_k}(s)} \varepsilon \dot{S}_n^{ \Lambda_l, \Lambda_k}(s) \mathrm{e}^{ \mathrm{i} \lambda \varepsilon S_n^{ \Lambda_l, \Lambda_k}(s)} , \beta_n^{ \Lambda_l, \Lambda_k}(s)\circ  \mathfrak{U}_{t,s}^{ \Lambda_l, \Lambda_k}\Ab{A}\right]\right)\,, \nonumber
\end{align}
 and  for \eqref{deco3}
\begin{align}
 \rho_0(s)     &  \left(\beta_n^{ \Lambda_l, \Lambda_k}(s)\circ  \left(\frac{\mathrm{d}}{\mathrm{d}s}\mathfrak{U}_{t,s}^{ \Lambda_l, \Lambda_k}\right)\Ab{A}\right) \label{thirdterm}\\ 
= \ &-\frac{\mathrm{i}}{\eta}\, \rho_0(s) \left(  \left[\mathrm{e}^{-\mathrm{i}\varepsilon S_n^{ \Lambda_l, \Lambda_k}(s)} \left( H_0^{\Lambda_l}(s) + \varepsilon V^{\Lambda_k}(s) \right)\mathrm{e}^{\mathrm{i}\varepsilon S_n^{\Lambda_l, \Lambda_k}(s)},\beta_n^{\Lambda_l, \Lambda_k}(s)  \circ \mathfrak{U}_{t,s}^{ \Lambda_l, \Lambda_k}\Ab{A}\right]\right) \,.\nonumber
\end{align}

\noindent \textbf{(i)} {\bf Approximating \eqref{firstterm}}:  We now replace \eqref{firstterm} by an expression involving the map~$\mathcal{I}$  defined in Appendix~\ref{invliouappendix}. Using that the function $w$ appearing  in its definition is even and that $\rho_0(s)$ is a  $ \mathcal{L}_{H_0(s)}$-ground state, we have 
\begin{align}
 \int_{\mathbb{R}} \mathrm{d}v \,&w(v) \int_{0}^{v} \mathrm{d}u \, \rho_0(s)  \left( \mathcal{L}_{\dot{H}_0(s)} \circ \mathrm{e}^{\mathrm{i}u\mathcal{L}_{H_0(s)}} \circ \beta_n^{ \Lambda_l, \Lambda_k}(s)\circ  \mathfrak{U}_{t,s}^{ \Lambda_l, \Lambda_k}\Ab{A} \right)\nonumber \\
&+ \ \rho_0(s) \left(\left[\mathcal{I}_s^{\Lambda_l}(\dot{H}_0^{\Lambda_k}(s)), \beta_n^{ \Lambda_l, \Lambda_k}(s)\circ  \mathfrak{U}_{t,s}^{ \Lambda_l, \Lambda_k}\Ab{A}\right]\right) \label{replace} \\ 
= \ & \int_{\mathbb{R}} \mathrm{d}v \, w(v) \int_{0}^{v} \mathrm{d}u \, \rho_0(s) \left(\mathrm{e}^{- \mathrm{i}u\mathcal{L}_{H_0(s)}} \circ \mathcal{L}_{\dot{H}_0(s)} \circ \mathrm{e}^{\mathrm{i}u\mathcal{L}_{H_0(s)}} \circ \beta_n^{ \Lambda_l, \Lambda_k}(s)\circ  \mathfrak{U}_{t,s}^{ \Lambda_l, \Lambda_k}\Ab{A}\right) \nonumber\\ 
&- \  \int_{\mathbb{R}} \mathrm{d}v \, w(v) \int_{0}^{v}\mathrm{d}u \,  \rho_0(s) \left(\mathrm{e}^{- \mathrm{i}u\mathcal{L}^{\Lambda_l}_{H_0(s)}} \circ \mathcal{L}^{\Lambda_k}_{\dot{H}_0(s)} \circ \mathrm{e}^{\mathrm{i}u\mathcal{L}^{\Lambda_l}_{H_0(s)}} \circ \beta_n^{ \Lambda_l, \Lambda_k}(s)\circ  \mathfrak{U}_{t,s}^{ \Lambda_l, \Lambda_k}\Ab{A}\right)\,.\nonumber
\end{align}
Using \eqref{firstestimate} together with Lemma~\ref{convofderivlemma} and Lemma~\ref{quasilocaltimeevolution2}, we see that this difference can be made arbitrarily small for $l,k$ large enough. More precisely, by the uniform bound in  \eqref{firstestimate} (see also Lemma~\ref{quasilocaltimeevolution1} and Lemma~\ref{quasilocalautomorphism}), we have the convergence of the dynamics, Lemma~\ref{quasilocaltimeevolution2}, as well as the convergence of the derivation, Lemma~\ref{convofderivlemma}, both in suitable $f$-norms. Hence, we may replace \eqref{firstterm} by the negative of \eqref{replace} up to an arbitrarily small error for $l,k$ large enough.\\[2mm]
\noindent \textbf{(ii)} {\bf Terms up to  order $n$:} In the second step, we expand the first entry of the commutator in \eqref{secondterm} and \eqref{thirdterm} together with the negative of \eqref{replace}  as in Proposition~5.1 of~\cite{teufel2020non}. Having $-\mathrm{i}/\eta$ as a common prefactor, we get 
\begin{align*}
 \eta \int_{0}^{1} \mathrm{d} \lambda \,& \mathrm{e}^{-\mathrm{i} \lambda \varepsilon S_n^{ \Lambda_l, \Lambda_k}} \varepsilon \dot{S}_n^{ \Lambda_l, \Lambda_k} \, \mathrm{e}^{ \mathrm{i} \lambda \varepsilon S_n^{ \Lambda_l, \Lambda_k}}  +\ \mathrm{e}^{-\mathrm{i}\varepsilon S_n^{ \Lambda_l, \Lambda_k}} \left( H_0^{\Lambda_l} + \varepsilon V^{\Lambda_k} \right)\mathrm{e}^{\mathrm{i}\varepsilon S_n^{\Lambda_l, \Lambda_k}}  + \ \eta\,\mathcal{I}^{\Lambda_l}(\dot{H}_0^{\Lambda_k}) \\
=:  \; &H_0^{\Lambda_l} + \sum_{j=1}^{n} \varepsilon^j R_j^{\Lambda_l, \Lambda_k} + \varepsilon^{n+1} R_{n+1}^{\Lambda_l, \Lambda_k}.
\end{align*}
Inserting the leading order term $H_0^{\Lambda_l}$ back into the commutator in  \eqref{thirdterm}, we find that 
\begin{align*}
\lim\limits_{l \to \infty}\rho_0(s) &\left(\left[H_0^{\Lambda_l}(s), \beta_n^{ \Lambda_l, \Lambda_k}(s)  \circ \mathfrak{U}_{t,s}^{ \Lambda_l, \Lambda_k}\Ab{A} \right]\right)\\ & =  \lim\limits_{l \to \infty}
\rho_0(s)\left( \mathcal{L}_{H_0^{\Lambda_l}(s)} \circ \beta_n^{ \Lambda_l, \Lambda_k}(s)  \circ \mathfrak{U}_{t,s}^{ \Lambda_l, \Lambda_k}\Ab{A} \right) = 0 
\end{align*}
 for any $k \in \mathbb{N}$ and uniformly for $s$ and $t$ in compacts.
Here we used   \eqref{firstestimate} and the fact that Lemma~\ref{convofderivlemma} and   continuity of $\rho_0(s)$ imply  that
\[
 \lim_{l \to \infty} \rho_0(s) \circ \mathcal{L}_{H_0^{\Lambda_l}(s)} =  \rho_0(s) \circ \mathcal{L}_{H_0 (s)} \equiv 0
\]
uniformly on bounded subsets of $\mathcal{D}_f$.

The first order term has the form 
\begin{equation*}
R_1^{\Lambda_l, \Lambda_k}= -\mathrm{i} \mathcal{L}_{H_0}^{\Lambda_l} (A_1^{\Lambda_l, \Lambda_k}) + \left(\tfrac{\eta}{\varepsilon}\,\mathcal{I}^{\Lambda_l}(\dot{H}_0^{\Lambda_k}) -  V^{\Lambda_k} \right) =: -\mathrm{i} \mathcal{L}_{H_0}^{\Lambda_l} (A_1^{\Lambda_l, \Lambda_k}) +\tilde{R}_1^{\Lambda_l, \Lambda_k}  
\end{equation*}
with
\begin{equation}
A_1^{\Lambda_l, \Lambda_k} := \mathcal{I}^{\Lambda_l}\left(V^{\Lambda_k} - \tfrac{\eta}{\varepsilon}\,\mathcal{I}^{\Lambda_l}(\dot{H}_0^{\Lambda_k})\right). \label{A1}
\end{equation} 
With the aid of Lemma~\ref{invliouconvlemma}, it follows  that, for any $k \in \mathbb{N}$, $A_1^{\Lambda_l, \Lambda_k}$ is convergent in a suitable $f$-norm as $l \to \infty$. So, using Lemma~\ref{convofderivlemma} and Lemma~\ref{invliouconvlemma}, for any $k \in \mathbb{N}$, also $R_1^{\Lambda_l, \Lambda_k}$ is convergent in a suitable $f$-norm and its limit is given by
\begin{equation*}
- \mathrm{i} \mathcal{L}_{H_0} \left( \mathcal{I}\left( \tfrac{\eta}{\varepsilon}\,\mathcal{I}(\dot{H}_0^{\Lambda_k})- V^{\Lambda_k}\right)\right) + \left(\tfrac{\eta}{\varepsilon}\,\mathcal{I}(\dot{H}_0^{\Lambda_k}) -  V^{\Lambda_k} \right). 
\end{equation*}
Thus, we get by using \eqref{firstestimate} and Proposition~\ref{invliou} that
\begin{equation*}
\lim\limits_{l \to \infty }\rho_0(s)\left(\left[R_1^{\Lambda_l, \Lambda_k}(s), \beta_n^{ \Lambda_l, \Lambda_k}(s)  \circ \mathfrak{U}_{t,s}^{ \Lambda_l, \Lambda_k}\Ab{A}\right]\right) = 0
\end{equation*}
for any $k \in \mathbb{N}$ and uniformly for $s$ and $t$ in compacts.

The remainder terms $R_j^{\Lambda_l, \Lambda_k}(s)$ for $j=2,...,n$ have the form 
\begin{equation*}
R_j^{\Lambda_l, \Lambda_k} = -\mathrm{i} \mathcal{L}_{H_0}^{\Lambda_l} (A_j^{\Lambda_l, \Lambda_k}) + \tilde{R}_j^{\Lambda_l, \Lambda_k}
\end{equation*}
where the operators $A_j^{\Lambda_l, \Lambda_k}$ are determined inductively as
\begin{equation} 
A_j^{\Lambda_l, \Lambda_k} := - \mathcal{I}^{\Lambda_l}(\tilde{R}_j^{\Lambda_l, \Lambda_k}) \label{Aj}
\end{equation}
and $\tilde{R}_j^{\Lambda_l, \Lambda_k}$ is composed of a finite number of iterated commutators of the operators $A_i^{\Lambda_l, \Lambda_k}$ and $\dot{A}_i^{\Lambda_l, \Lambda_k}$, $i=1,..., j-1$, with $H_0^{\Lambda_l}$ and $V^{\Lambda_k}$.\footnote{Again, as shown in \cite{teufel2020non}, $A_j^{}$ is a polynomial in $\frac{\eta}{\varepsilon}$ of degree $j$ with coefficients in $\mathcal{L}_{I,\mathcal{S}, \infty}$ for every $j \in \mathbb{N}$.}
By this structure, we have with the aid of Lemma~\ref{convofderivlemma}, Lemma~\ref{invliouconvlemma} and  the continuity of multiplication in $(\mathcal{D}_f,\|\cdot\|_f)$, that
$R_j^{\Lambda_l, \Lambda_k}$ is convergent in a suitable $f$-norm and we get 
\begin{equation*}
\lim\limits_{l \to \infty }\rho_0(s)\left(\left[R_j^{\Lambda_l, \Lambda_k}(s), \beta_n^{ \Lambda_l, \Lambda_k}(s)  \circ \mathfrak{U}_{t,s}^{ \Lambda_l, \Lambda_k}\Ab{A}\right]\right) = 0
\end{equation*}
for $j=2,...,n$, any $k \in \mathbb{N}$ and uniformly for $s$ and $t$ in compacts by Proposition~\ref{invliou} and \eqref{firstestimate}. So, the coefficients up to order  $\varepsilon^n$ vanish in the limit $l \to \infty$. \\[2mm]
\textbf{(iii) Remainder term:} In the last step, the remainder term involving $R_{n+1}^{\Lambda_l, \Lambda_k}$ can be estimated as in proof of Proposition~5.1 and Theorem~5.1 in \cite{teufel2020non} by
\begin{align*}
 &\hspace{-45pt}\frac{\varepsilon^{n+1}}{\eta}\left\vert \int_{t_0}^{t} \mathrm{d}s \ \rho_0(s) \left(  \left[R_{n+1}^{\Lambda_l, \Lambda_k}(s),\beta_n^{ \Lambda_l, \Lambda_k} (s) \circ \mathfrak{U}_{t,s}^{ \Lambda_l, \Lambda_k}\Ab{A}\right] \right)\right\vert \\
\le \ & \frac{\varepsilon^{n+1}}{\eta} \, \vert t-t_0 \vert \sup_{s\in [t_0,t]} \left\Vert   \left[ \left(\beta_n^{ \Lambda_l, \Lambda_k}\right)^{-1} (s)\left(R_{n+1}^{\Lambda_l, \Lambda_k}(s)\right), \mathfrak{U}_{t,s}^{ \Lambda_l, \Lambda_k}\Ab{A}\right]\right\Vert \\ 
\le \ &C_n  \frac{\varepsilon^{n+1}}{\eta} \left(1+ \left(\frac{\eta}{\varepsilon}\right)^{n+1}\right)\vert t -t_0\vert \left(1+ \eta^{-d}\vert t-t_0 \vert^{d}\right) \Vert A \Vert \vert X \vert^2 \\
\le \ & C_n  \frac{\varepsilon^{n+1} + \eta^{n+1}}{\eta^{d+1}} \vert t -t_0\vert \left(1+ \vert t-t_0 \vert^{d}\right) \Vert A \Vert \vert X \vert^2\,,
\end{align*}
where the constant $C_n$ is independent of $l,k \in \mathbb{N}$ and all other parameters, cf.\  \cite{bachmann2018adiabatic,monaco2017adiabatic,teufel2020non}. 
More precisely, the estimate follows from Lemma~C.5 from \cite{monaco2017adiabatic}. Note, that the bound in this Lemma only depends on the the Lieb-Robinson velocity \eqref{lrvelocity}, the relevant interactions norms, and the Lipschitz constant $  C^\circ_{\tilde{v}}$.
\\[3mm]
\textbf{(d)} {\bf Conclusion and resummation}\\[1mm] Summarising our considerations, we have shown that for any $n\in\N$  there exists a positive constant $C_n$ such that
\[
 \left\vert \Pi_n^{\varepsilon, \eta}(t_0)( \mathfrak{U}_{t,t_0}^{\varepsilon, \eta}\Ab{A})-\Pi_n^{\varepsilon, \eta}(t)(A)\right\vert  
\le C_n \,  \frac{\varepsilon^{n+1} + \eta^{n+1}}{\eta^{d+1}} \, \vert t -t_0\vert \left(1+ \vert t-t_0 \vert^{d}\right) \Vert A \Vert \vert X \vert^2\,.
\]
The $n$-independent states $\Pi^{\varepsilon, \eta}$  that satisfy the estimate  \eqref{adiabboundbulk}  can be constructed using Lemma~E.3 and Lemma~E.4 from \cite{henheikteufel20202}. 
All the other statements on $\Pi_n^{\varepsilon, \eta}(t)$ are clear by construction (as in \cite{teufel2020non}). 
\end{proof}
\begin{proof}[Proof of Lemma~\ref{derivativeofstate}]
	The proof of this identity in \cite{moon2019automorphic} uses the technical Lemmata 4.4, 4.5, 4.6, 4.12, 4.13, which we have adapted to our notion of interactions in Appendix~\ref{technicallemmata}, Lemmata~\ref{quasilocaltimeevolution2},~\ref{quasilocaltimeevolution1},~\ref{invliouconvlemma},~\ref{domainofderivlemma}, and~\ref{convofderivlemma}, respectively. Therefore, we only sketch the main arguments. Again, at each step, one chooses suitable $f$-norms.  
	
	Let $f \in \mathcal{S}$ and $A \in \mathcal{D}_f$. 
	First, using the spectral gap of the bulk Hamiltonian, one can show that 
	\begin{equation*}
	\dot{\rho_0}(s)(\mathcal{J}_s\Ab{A}) = 0\,.
	\end{equation*}
	where $\mathcal{J}_s$ is defined in Appendix \ref{invliouappendix}. By application of the Duhamel formula, one arrives at
	\begin{equation*} 
	\dot{\rho_0}(s)(A) = - \mathrm{i} \, \int_{\mathbb{R}} \mathrm{d}v \,w(v) \int_{0}^{v} \mathrm{d}u \, \dot{\rho_0}(s)  \left( \mathcal{L}_{H_0(s)} \circ \mathrm{e}^{\mathrm{i}u\mathcal{L}_{H_0(s)}}\Ab{A}\right).
	\end{equation*}
	Using (I2) in the form of Lemma \ref{domainofderivlemma}, (G3), and that ${\rho_0}(s)  (  \mathcal{L}_{H_0(s)}\Ab{A}) \equiv  0$, since $\rho_0(s)$ is the  $ \mathcal{L}_{H_0(s)}$-ground state, we have
	\begin{equation*}
	\dot{\rho_0}(s) \left(\mathcal{L}_{H_0(s)}\Ab{A}\right) + {\rho_0}(s) \left( \mathcal{L}_{\dot{H_0}(s)}\Ab{A}\right) = 0\,. \qedhere
	\end{equation*}
	 \end{proof}
\subsection{Proof of the adiabatic theorem for finite domains}
\begin{proof}[Proof of Theorem~\ref{existenceofneass4}]
We show that both terms on the left hand side of \eqref{adiabboundfinite} converge to their infinite volume limits with a rate given by the second term on the right hand side of \eqref{adiabboundfinite}. 
Together with Theorem~\ref{existenceofneass3} this implies the claim.
Again, we do this first for fixed  $n$  and comment on the  resummation afterwards. 

Fix $X\in \mathcal{P}_0(\Gamma)$ and $A\in \mathcal{A}_X$.
Then for any $k_2 \le k_1 \le k$ such that $X \subset \Lambda_{k_2}$ we obtain for
for the first term in \eqref{adiabboundfinite} that
\begin{align}
 \left\vert\Pi_n^{\varepsilon, \eta, \Lambda_k} \right.&\left.(t_0)(\mathfrak{U}_{t,t_0}^{\varepsilon, \eta, \Lambda_k}\Ab{A}) - \Pi_n^{\varepsilon, \eta}(t_0)(\mathfrak{U}_{t,t_0}^{\varepsilon, \eta}\Ab{A}) \right\vert \nonumber\\
= \ &\left\vert \rho^{\Lambda_k}_0(t_0)\left(\beta_n^{\varepsilon, \eta, \Lambda_k}(t_0)\circ \mathfrak{U}_{t,t_0}^{\varepsilon, \eta, \Lambda_k}\Ab{A}\right) - \rho^{}_0(t_0)\left(\beta_n^{\varepsilon, \eta}(t_0)\circ \mathfrak{U}_{t,t_0}^{\varepsilon, \eta}\Ab{A}\right) \right\vert\nonumber \\
\le \ &\left\vert \rho^{\Lambda_k}_0(t_0)\left(\beta_n^{\varepsilon, \eta, \Lambda_k}(t_0)\circ \mathfrak{U}_{t,t_0}^{\varepsilon, \eta, \Lambda_k}\Ab{A}\right) - \rho^{\Lambda_k}_0(t_0)\left(\beta_n^{\varepsilon, \eta, \Lambda_k}(t_0)\circ \mathfrak{U}_{t,t_0}^{\varepsilon, \eta, \Lambda_{k_2}}\Ab{A}  \right) \right\vert \nonumber\\
&  +   \left\vert \rho^{\Lambda_k}_0(t_0)\left(\beta_n^{\varepsilon, \eta, \Lambda_k}(t_0)\circ \mathfrak{U}_{t,t_0}^{\varepsilon, \eta, \Lambda_{k_2}}\Ab{A} \right) - \rho^{\Lambda_k}_0(t_0)\left(\beta_n^{\varepsilon, \eta, \Lambda_{k_1}}(t_0)\circ \mathfrak{U}_{t,t_0}^{\varepsilon, \eta, \Lambda_{k_2}}\Ab{A}  \right) \right\vert \nonumber\\
&  +   \left\vert \rho^{\Lambda_k}_0(t_0)\left(\beta_n^{\varepsilon, \eta, \Lambda_{k_1}}(t_0)\circ \mathfrak{U}_{t,t_0}^{\varepsilon, \eta, \Lambda_{k_2}}\Ab{A} \right)- \rho^{}_0(t_0)\left(\beta_n^{\varepsilon, \eta, \Lambda_{k_1}}(t_0)\circ \mathfrak{U}_{t,t_0}^{\varepsilon, \eta, \Lambda_{k_2}}\Ab{A} \right) \right\vert\nonumber\\
& +   \left\vert \rho^{}_0(t_0)\left(\beta_n^{\varepsilon, \eta, \Lambda_{k_1}}(t_0)\circ \mathfrak{U}_{t,t_0}^{\varepsilon, \eta, \Lambda_{k_2}}\Ab{A}\right) - \rho^{}_0(t_0)\left(\beta_n^{\varepsilon, \eta}(t_0)\circ \mathfrak{U}_{t,t_0}^{\varepsilon, \eta, \Lambda_{k_2}}\Ab{A}  \right) \right\vert\nonumber\\
&  +   \left\vert  \rho^{}_0(t_0)\left(\beta_n^{\varepsilon, \eta}(t_0)\circ \mathfrak{U}_{t,t_0}^{\varepsilon, \eta, \Lambda_{k_2}}\Ab{A}\right) - \rho^{}_0(t_0)\left(\beta_n^{\varepsilon, \eta}(t_0)\circ \mathfrak{U}_{t,t_0}^{\varepsilon, \eta}\Ab{A} \right) \right\vert \nonumber\\ 
\le \ & \left\Vert \left(\mathfrak{U}_{t,t_0}^{\varepsilon, \eta, \Lambda_k} -  \mathfrak{U}_{t,t_0}^{\varepsilon, \eta, \Lambda_{k_2}}\right)\Ab{A} \right\Vert \label{1}\\
&  +   \left\Vert  \left(\beta_n^{\varepsilon, \eta, \Lambda_k}(t_0) -   \beta_n^{\varepsilon, \eta, \Lambda_{k_1}}(t_0)\right)\circ \mathfrak{U}_{t,t_0}^{\varepsilon, \eta, \Lambda_{k_2}}\Ab{A}  \right\Vert\label{2} \\
&  +   \left\vert  \left(\rho^{\Lambda_k}_0(t_0) - \rho^{}_0(t_0)\right)\left(   \beta_n^{\varepsilon, \eta, \Lambda_{k_1}}(t_0)\circ \mathfrak{U}_{t,t_0}^{\varepsilon, \eta, \Lambda_{k_2}}\Ab{A} \right) \right\vert \label{3}\\
&  +   \left\Vert  \left(  \beta_n^{\varepsilon, \eta, \Lambda_{k_1}}(t_0) - \beta_n^{\varepsilon, \eta}(t_0)\right)\circ \mathfrak{U}_{t,t_0}^{\varepsilon, \eta, \Lambda_{k_2}}\Ab{A} \right\Vert \label{4}\\
&   +   \left\Vert \left( \mathfrak{U}_{t,t_0}^{\varepsilon, \eta, \Lambda_{k_2}} - \mathfrak{U}_{t,t_0}^{\varepsilon, \eta}\right)\Ab{A} \right\Vert\label{5}\,.
\end{align}
 
For \eqref{1} and \eqref{5}   we combine the estimates provided by Proposition~\ref{lrb} and Corollary~\ref{lrbcorr} with the trivial bound by $2 \Vert A \Vert$ using the following trivial inequality  with $\alpha = \eta$: 
\begin{equation}\label{trivial}
0 \leq c \leq \min(a,b) \qquad\Rightarrow\qquad c \le a^{\alpha} \cdot b^{1-\alpha}  \quad \mbox{for all} \quad \alpha \in (0,1)\,.
\end{equation}
This yields 
\begin{equation*}
\eqref{1} + \eqref{5}\; \le\; C(t,t_0) \, \Vert A \Vert \, \mathrm{diam}(X)^{(d+1)\eta} \exp\left(-a\, \eta \,  \mathrm{dist}(X, \Gamma \setminus \Lambda_{\lceil k_2 - \lambda k_2^{\gamma} \rceil})^{\gamma}\right), 
\end{equation*}
where $C(t,t_0)$   depends only on $t$, $t_0$ and $H_0$.   

Since $\varepsilon S^{\varepsilon, \eta}_n \in \mathcal{L}_{I,\mathcal{S}, \infty, \infty}$ has a rapid thermodynamic limit with exponent $\gamma \in (0,1)$ by application of Lemma~\ref{cauchy1}, Lemma~\ref{cauchy2} and Lemma~\ref{cauchy3}, we have that for some {$\zeta_n \in \mathcal{S}$ in particular $\varepsilon S^{\varepsilon, \eta}_n \in \mathcal{L}_{I,\zeta_n, 0, \infty}$} has a rapid thermodynamic limit with exponent $\gamma \in (0,1)$. Hence, \eqref{2} and \eqref{4} can be estimated using the ``local decomposition technique'',
\eqref{trivial}, Proposition~\ref{lrb} and Corollary~\ref{lrbcorr} as follows:
	For $Z \subset Z'$, let $\mathbb{E}^{Z'}_Z: \mathcal{A}_{Z'} \to \mathcal{A}_Z $ be the conditional expectation (on even observables!). Moreover, for a set $Y \in \mathcal{P}_0(\Gamma)$ and $\delta \ge 0$ let
$ Y_{\delta} = \set{z \in \Gamma : \mathrm{dist}(z,Y) \le \delta}$
be the ``fattening" of the set $Y$ by $\delta$. 
Defining
\begin{equation*}
A^{(0)} := \mathbb{E}^{\Lambda_{k_2}}_{X_{\frac{v}{\eta}\vert t -t_0\vert }\cap \Lambda_{k_2}}\circ \mathfrak{U}_{t,t_0}^{\varepsilon, \Lambda_{k_2}, \eta}\Ab{A}, 
\end{equation*}
where $v$ is the Lieb-Robinson velocity for $\min(a,a')$  as in \eqref{lrvelocity}, and for $j\ge1$
\begin{equation*}
A^{(j)}: =\left( \mathbb{E}^{\Lambda_{k_2}}_{X_{\frac{v}{\eta}\vert t -t_0\vert + j }\cap \Lambda_{k_2}}  - \mathbb{E}^{\Lambda_{k_2}}_{X_{\frac{v}{\eta}\vert t -t_0\vert +j-1}\cap \Lambda_{k_2}}\right)\circ \mathfrak{U}_{t,t_0}^{\varepsilon, \eta, \Lambda_{k_2}}\Ab{A}\,,
\end{equation*}
we can write $\mathfrak{U}_{t,t_0}^{\varepsilon, \Lambda_{k_2}, \eta}\Ab{A} = \sum_{j=0}^{\infty}A^{(j)}$, where the sum is always finite, since eventually $X_{\frac{v}{\eta}\vert t-t_0 \vert + j }\cap \Lambda_{k_2} = \Lambda_{k_2}$. 
Clearly, $A^{(j)} \in \mathcal{A}_{X_{\frac{v}{\eta}\vert t-t_0 \vert + j }}$ and we bound
\begin{equation*}
\mathrm{diam} (X_{\frac{v}{\eta}\vert t-t_0 \vert + j } ) \le C\, \mathrm{diam}(X)\, (j+1) \left(1+\eta^{-1}\vert t-t_0\vert\right). 
\end{equation*}
According to the properties of the conditional expectation on even observables (see Lemma~C.2 from \cite{henheikteufel20202}) and using the Lieb-Robinson bound (Proposition~\ref{lrb}) in combination with \eqref{trivial}, we have
\begin{equation*} 
\Vert A^{(j)} \Vert \le \min\left(C \Vert A \Vert \vert X\vert \mathrm{e}^{-\min(a,a')j}, \, 2 \Vert A\Vert\right) \le C_{\alpha_1} \Vert A \Vert \vert X\vert^{\alpha_1} \mathrm{e}^{-\alpha_1 \min(a,a')j}
\end{equation*}
with $\alpha_1 \in (0,1)$. Thus, with the aid of Corollary~\ref{lrbcorr} and using again \eqref{trivial}, we get
\begin{align*}
&\left\Vert  \left(  \beta_n^{\varepsilon, \eta, \Lambda_{k_1}}(t_0) - \beta_n^{\varepsilon, \eta}(t_0)\right) \circ   \mathfrak{U}_{t,t_0}^{\varepsilon, \eta, \Lambda_{k_2}}\Ab{A}  \right\Vert\; \le \;  \sum_{j=0}^{\infty} \left\Vert  \left(  \beta_n^{\varepsilon, \eta, \Lambda_{k_1}}(t_0) - \beta_n^{\varepsilon, \eta}(t_0)\right)\left\llbracket  A^{(j)} \right\rrbracket \right\Vert \\
&\quad \le \;   C_{\alpha_1, \alpha_2} \Vert A \Vert \,  \mathrm{diam}(X)^{d(\alpha_1+ \alpha_2)+ \alpha_2}(1+\eta^{-1}\vert t-t_0\vert )^{(d+1)\alpha_2} \, \zeta_{\gamma}^{\alpha_2}(\lceil k_1 -  \lambda k_1^{\gamma}\rceil - k_2), 
\end{align*}
where $\alpha_1, \alpha_2 \in (0,1)$, since the sum
\begin{equation*}
\sum_{j=0}^{\infty}   \mathrm{e}^{-\alpha_1 \min(a,a')j} (j+1)^{(d+1)\alpha_2} < \infty
\end{equation*}
is finite for every choice of $\alpha_1, \alpha_2 \in (0,1)$. 
Therefore, we arrive at
\begin{equation*}
\eqref{2} + \eqref{4} \le C_{\alpha_1, \alpha_2} \Vert A \Vert \,  \mathrm{diam}(X)^{(d+1)(\alpha_1+ \alpha_2)}(1+\eta^{-1}\vert t-t_0\vert )^{(d+1)\alpha_2} \, \zeta_{\gamma}^{\alpha_2}(\lceil k_1 -  \lambda k_1^{\gamma}\rceil - k_2).
\end{equation*} 
For \eqref{3}, we perform the local decomposition twice. Using the notation from above, we define 
\begin{equation*}
A^{(j,0)} = \mathbb{E}^{\Lambda_{k_1}}_{X\cap \Lambda_{k_1}}\circ \beta_n^{\varepsilon, \eta, \Lambda_{k_1}}(t_0) \Ab{A^{(j)} } 
\end{equation*}
and for $i \ge 1$
\begin{equation*}
A^{(j,i)} \;= \;\left(\mathbb{E}^{\Lambda_{k_1}}_{X_{i}\cap \Lambda_{k_1}}  - \mathbb{E}^{\Lambda_{k_1}}_{X_{i-1}\cap \Lambda_{k_1}}\right) \circ  \beta_n^{\varepsilon, \eta, \Lambda_{k_1}}(t_0)\Ab{A^{(j)}} \,.
\end{equation*}
Hence we can write 
\begin{equation*}
\beta_n^{\varepsilon, \eta, \Lambda_{k_1}}(t_0)\circ \mathfrak{U}_{t,t_0}^{\varepsilon, \eta, \Lambda_{k_2}}\Ab{A} \;=\;\sum_{j=0}^{\infty} \beta_n^{\varepsilon, \eta, \Lambda_{k_1}}(t_0) \Ab{ A^{(j)}} \;=\; \sum_{i,j = 0}^{\infty} A^{(j,i)}
\end{equation*}
and the sums are always finite. 
Using this form, it can be estimated with the same methods as above by
\begin{equation*}
\eqref{3} \;\le\; C_{\alpha_1, \alpha_2, \alpha_3}\Vert A \Vert \,  \mathrm{diam}(X)^{m\alpha_1+ d(\alpha_2 + \alpha_3)} \, (1+\eta^{-1}\vert t-t_0\vert )^{ d\alpha_2} h^{\alpha_1}(k-k_1), 
\end{equation*}
where $\alpha_1, \alpha_2, \alpha_3 \in (0,1)$. 

Analogously, we estimate the second term on the left hand side of \eqref{adiabboundfinite} by
\begin{align}
 \left\vert\Pi_n^{\varepsilon, \eta, \Lambda_k}(t)(A) - \Pi_n^{\varepsilon, \eta}(t)(A) \right\vert  \;
\le \;& \left\Vert  \left(\beta_n^{\varepsilon, \eta, \Lambda_k}(t) -   \beta_n^{\varepsilon, \eta, \Lambda_{k_3}}(t)\right)\Ab{A} \right\Vert \label{1.1}\\
&+ \ \left\vert  \left(\rho^{\Lambda_k}_0(t) - \rho^{}_0(t)\right)\left(   \beta_n^{\varepsilon, \eta, \Lambda_{k_3}}(t)\Ab{A} \right) \right\vert \label{2.1} \\
&+ \  \left\Vert  \left(\beta_n^{\varepsilon, \eta}(t) -   \beta_n^{\varepsilon, \eta, \Lambda_{k_3}}(t)\right)\Ab{A} \right\Vert \label{3.1}
\end{align}
for $k_3 \le k$ to be chosen and $A \in \mathcal{A}_X$ such that $X \subset \Lambda_{k_3}$. 
As above, we get
\begin{equation*}
\eqref{1.1} + \eqref{3.1} \;\le\; C_{\alpha} \Vert A \Vert \,\mathrm{diam}(X)^{(d+1)\alpha}\, \zeta_{\gamma}^{\alpha}\,(\mathrm{dist}(X, \Gamma \setminus \Lambda_{\lceil k_3 - \lambda k_3^{\gamma}\rceil}))
\end{equation*}
  with $\alpha \in (0,1)$ and
\begin{equation*}
\eqref{2.1}\; \le\; C_{\alpha_1, \alpha_2}\Vert A \Vert \,  \mathrm{diam}(X)^{m\alpha_1+ (d+1)\alpha_2} \,  h^{\alpha_1}(k-k_3), 
\end{equation*}
  with $\alpha_1, \alpha_2 \in (0,1)$. 
Now, we choose
\begin{align*}
k_1 &= \left\lfloor  k-\frac{\mathrm{dist}\left(X,\Gamma \setminus \Lambda\left(k\right)\right) - \lambda k^{\gamma}}{3} \right\rfloor \\
k_2 &= \left\lfloor  k-\frac{2 \left(\mathrm{dist}\left(X,\Gamma \setminus \Lambda\left(k\right)\right) - \lambda k^{\gamma}\right)}{3} \right\rfloor\\
k_3 &= \left\lfloor  k-\frac{\mathrm{dist}\left(X,\Gamma \setminus \Lambda\left(k\right)\right) - \lambda k^{\gamma}}{2} \right\rfloor, 
\end{align*}
which satisfy $k \ge k_1 \ge k_2$ and $k \ge k_3 $, as well as $k_1 \ge \frac{k}{2}$, $k_2 \ge \frac{ k}{4}$ and $k_3 \ge \frac{k}{3}$ for $k$ large enough. 

Putting everything together and choosing the $\alpha$-parameters appropriately, we have shown that
\begin{align*}
&\left\vert \Pi_n^{\varepsilon, \eta, \Lambda_k}(t_0)(\mathfrak{U}_{t,t_0}^{\varepsilon, \eta, \Lambda_k}\Ab{A}) - \Pi_n^{\varepsilon, \eta, \Lambda_k}(t)(A) \right\vert \\
 & \quad \le \; \left\vert \Pi_n^{\varepsilon, \eta}(t_0)(\mathfrak{U}_{t,t_0}^{\varepsilon, \eta}\Ab{A}) - \Pi_n^{\varepsilon, \eta}(t)(A) \right\vert \\
 &\quad\quad + \left\vert\Pi_n^{\varepsilon, \eta, \Lambda_k}(t_0)(\mathfrak{U}_{t,t_0}^{\varepsilon, \eta, \Lambda_k}\Ab{A}) - \Pi_n^{\varepsilon, \eta}(t_0)(\mathfrak{U}_{t,t_0}^{\varepsilon, \eta}\Ab{A}) \right\vert + \left\vert\Pi_n^{\varepsilon, \eta, \Lambda_k}(t)(A) - \Pi_n^{\varepsilon, \eta}(t)(A) \right\vert\\
& \quad \le \;C_n \, \frac{\varepsilon^{n+1} + \eta^{n+1}}{\eta^{d+1}} \, \vert t-t_0\vert \,(1+\vert t- t_0\vert^{d}) \  \Vert A \Vert \, \vert X \vert^2\\
  &\quad\quad + \ \tilde{C}_{n,m,K} \, \left(1+ \eta \,  \mathrm{dist}(X,\Gamma \setminus \Lambda_{\lfloor k-\lambda k^{\gamma} \rfloor } )\right)^{-m}\Vert A \Vert \, \mathrm{diam}(X)^{2d},
\end{align*}
for all $m \in \mathbb{N}$ and compact $K \subset I$, 
which is a valid estimate for all $A \in \mathcal{A}_X$ with $X \subset \Lambda_k$ after a possible adjustment of $\tilde{C}$. 
The $n$-independent states $\Pi^{\varepsilon, \eta, \Lambda_k}(t)$  that satisfy the estimate  \eqref{adiabboundfinite}
  can again be constructed  using Lemma~E.3 and Lemma~E.4 from \cite{henheikteufel20202}.
 All the other statements on $\Pi_n^{\varepsilon, \eta, \Lambda_k}(t)$ are clear by construction (as in \cite{teufel2020non}). 
\end{proof}
\appendix
\section{Inverting the Liouvillian in the bulk} \label{invliouappendix}
In this Appendix, we prove Proposition~\ref{invliou}.
First note that there exists a non-negative, even function $w \in L^1(\mathbb{R})$ with $\Vert w\Vert_1=1$ and
\begin{equation*}
\sup_{s \in \R } \vert s \vert^n \vert w(s) \vert < \infty, \quad \forall n \in \mathbb{N}_0,
\end{equation*} 
such that its Fourier transform 
\begin{equation*}
\hat{w}(k) = \frac{1}{\sqrt{2\pi}} \int_{\mathbb{R}} \mathrm{e}^{-\mathrm{i}ks} \,w(s) \, \mathrm{d}s
\end{equation*}
satisfies $\mathrm{supp}(\hat{w}) \subset [-1,1]$. An explicit function $w$ having all these properties is constructed in \cite{bachmann2012automorphic}. For any $g >0$ (spectral gap of $H_0$), set
\begin{equation*}
w_g(s) = g \,w(gs).
\end{equation*}  
It is clear that $w_g$ is non-negative, even, $L^1$-normalised, and moreover, 
\begin{equation*}
\mathrm{supp}(\hat{w}_g) \subset [-g,g]. 
\end{equation*}
We drop the subscript $g$ from now on. Let $H_0 \in \mathcal{L}_{I, \exp(-a\,\cdot), 0}$   have a rapid thermodynamic limit and define bounded linear maps
\begin{align*}
&\mathcal{J}_t^{\Lambda_k} : \mathcal{A} \to \mathcal{A}, \ A \mapsto \int_{\mathbb{R}} \mathrm{d}s\,w(s) \,\mathrm{e}^{\mathrm{i}s \mathcal{L}_{H_0(t)}^{\Lambda_k}}\Ab{A} \,, \\ 
&\mathcal{I}_t^{\Lambda_k} : \mathcal{A}\to \mathcal{A}, \ A \mapsto \int_{\mathbb{R}}\mathrm{d}s \, w(s) \int_{0}^{s} \mathrm{d}u \, \mathrm{e}^{\mathrm{i}u \mathcal{L}_{H_0(t)}^{\Lambda_k}}\Ab{A},
\end{align*}
for every $s \in I$ and $k \in \mathbb{N}$. The corresponding ``bulk" versions are defined as
\begin{align*}
&\mathcal{J}_t^{} : \mathcal{A} \to \mathcal{A}, \ A \mapsto \int_{\mathbb{R}}\mathrm{d}s\, w(s) \,\mathrm{e}^{\mathrm{i}s \mathcal{L}_{H_0(t)}^{}}\Ab{A} \,, \\ 
&\mathcal{I}_t^{} : \mathcal{A}\to \mathcal{A}, \ A \mapsto \int_{\mathbb{R}}\mathrm{d}s \, w(s) \int_{0}^{s} \mathrm{d}u \, \mathrm{e}^{\mathrm{i}u \mathcal{L}_{H_0(t)}^{}}\Ab{A},
\end{align*}
The integrals can be understood as Bochner integrals on $(\mathcal{A}, \Vert \cdot \Vert )$ by the continuity of $s \mapsto\mathrm{e}^{\mathrm{i}s \mathcal{L}_{H_0(t)} }\Ab{A} $ for any $A \in \mathcal{A}$ (see Proposition~\ref{lrb}). The map $\mathcal{I}_t$ is called the \SLT-inverse of the Liouvillian, which is justified by Proposition~\ref{invliou} and Lemma~\ref{cauchy3}. {See Appendix D of \cite{henheikteufel20202} for a definition of the SLT-inverse of the Liouvillian on interactions.}

\begin{proof}[Proof of Proposition~\ref{invliou}]
	We drop the subscript $H_0$ of $\mathcal{L}$ and $\mathcal{I}$ from Proposition \ref{invliou} to simplify notation. We prove this proposition using the GNS representation of $\rho_0$. Let $P_{\Omega_{\rho_0}} = \ket{\Omega_{\rho_0}} \bra{\Omega_{\rho_0}}$ be the projection on the GNS state and let $\mathcal{I}_{\rho_0}$ be the GNS version of the inverse Liouvillian, i.e.
	\begin{equation*}
	\mathcal{I}_{\rho_0}(A) = \int_{\mathbb{R}} \mathrm{d}s \, w(s) \int_{0}^{s}\mathrm{d}u \, \mathrm{e}^{\mathrm{i} u H_{0,\rho_0}} A \mathrm{e}^{- \mathrm{i} uH_{0,\rho_0}}, \quad A \in \mathcal{B}(\mathcal{H}_{\rho_0}),
	\end{equation*}
	and $\mathcal{J}_{\rho_0}$ analogously, where $H_{0,\rho_0}$ is the GNS Hamiltonian. Then we have
	\begin{equation*}
	\rho_0\left(\mathrm{i}[\mathcal{L}\circ\mathcal{I}\, \Ab{A},B]\right) 
	=   \mathrm{i}\langle \Omega_{\rho_0},[ \pi_{\rho_0}(\mathcal{L}\circ\mathcal{I}\, \Ab{A}),\pi_{\rho_0}(B)] \Omega_{\rho_0} \rangle. 
	\end{equation*}
	Since $\mathcal{I}\Ab{A} \in D(\mathcal{L})$, we have by Theorem~4 from \cite{bratteli1975unbounded} that for all    $B \in D(\mathcal{L})$
	\begin{align*}
	\mathrm{i}\langle \Omega_{\rho_0},[ \pi_{\rho_0}(\mathcal{L}\circ\mathcal{I}\, \Ab{A}),\pi_{\rho_0}(B)] \Omega_{\rho_0} \rangle & \;=\;\mathrm{i}\langle \Omega_{\rho_0},[[H_{0,\rho_0}, \pi_{\rho_0}(\mathcal{I}\Ab{A})] ,\pi_{\rho_0}(B)] \Omega_{\rho_0} \rangle \\
	& \;=\;\mathrm{i}\langle \Omega_{\rho_0},[[H_{0,\rho_0}, \mathcal{I}_{\rho_0}(\pi_{\rho_0}(A))] ,\pi_{\rho_0}(B)] \Omega_{\rho_0} \rangle.
	\end{align*}
	By application of Proposition~6.9 from \cite{nachtergaele2019quasi}, we get
	\begin{equation*}
	\mathrm{i} [H_{0,\rho_0}, \mathcal{I}_{\rho_0}(\pi_{\rho_0}(A))] =  \mathcal{J}_{\rho_0}(\pi_{\rho_0}(A)) - \pi_{\rho_0}(A), 
	\end{equation*}
	i.e.\ we inverted the Liouvillian up to $\mathcal{J}_{\rho_0}(\pi_{\rho_0}(A))$. But, again with the aid of Proposition~6.9 from \cite{nachtergaele2019quasi} and using the cyclicity of the trace, we have 
	\begin{align*}
	\langle \Omega_{\rho_0},[ \mathcal{J}_{\rho_0}(\pi_{\rho_0}(A)),\pi_{\rho_0}(B)] \Omega_{\rho_0} \rangle& \;=\; \mathrm{tr}(P_{\Omega_{\rho_0}}[ \mathcal{J}_{\rho_0}(\pi_{\rho_0}(A)),\pi_{\rho_0}(B)])\\
	& \;=\;\mathrm{tr}(  \pi_{\rho_0}(B)\underbrace{ [P_{\Omega_{\rho_0}}, \mathcal{J}_{\rho_0}(\pi_{\rho_0}(A))]}_{=0}) 
	\;=\;  0\,.
	\end{align*}
	By using the GNS representation again, we also have
	\begin{equation*}
	\langle \Omega_{\rho_0},[\pi_{\rho_0}(A) ,\pi_{\rho_0}(B)]\, \Omega_{\rho_0} \rangle = \rho_0([A,B])\,.
	\end{equation*}
	We thus showed that  for all $B\in D(\mathcal{L})$ 
	\begin{equation*} 
	\rho_0\big(\big[\mathcal{L}\circ\mathcal{I}\, \Ab{A} - \I A,B\big]\big) = 0\,. 
	\end{equation*}
	By density of $D(\mathcal{L})$ in $\mathcal{A}$ and continuity of $\rho_0:\mathcal{A}\to \C$,  the equality holds for all $B\in\mathcal{A}$.
\end{proof}

\section{Quasi-locality estimates} \label{technicallemmata}
In this appendix we show how to control the actions of automorphisms, derivations, and   inverse Liouvillians generated by SLT-operators with a rapid thermodynamic limit on spaces $\mathcal{D}_f\subset \mathcal{A}$ of quasi-local observables of the infinite system.

Let us briefly summarise the structure of and the motivation for the following results. For controlling the adiabatic approximation in the thermodynamic limit  we need to consider the actions of automorphisms, derivations and   inverse Liouvillians also on the spaces $\mathcal{D}_f$   introduced in \cite{moon2019automorphic}. Parts of the statements of the  required lemmata, Lemmata~\ref{quasilocaltimeevolution1},   \ref{quasilocaltimeevolution2},   \ref{quasilocalautomorphism},   \ref{domainofderivlemma},   \ref{convofderivlemma},  and    \ref{invliouconvlemma}, were established already in \cite{moon2019automorphic}, however,   without explicit uniformity and  only for SLT-operators defined by restrictions of a fixed interaction   on~$\mathcal{P}_0(\Gamma)$.  
The SLT-operators $A_j$ appearing in the adiabatic expansion are not of this form, even if we would assume this form for our original Hamiltonian. 
Thus we generalise the results of \cite{moon2019automorphic} to SLT-operators having a thermodynamic limit and make the uniformity explicit. 
In addition, we also prove norm convergence of automorphisms and derivations in spaces of bounded operators from $(\mathcal{D}_{f_1},\|\cdot\|_{f_1})$ to $(\mathcal{D}_{f_2},\|\cdot\|_{f_2})$ in the thermodynamic limit. For this, however,  the quantitative notion of rapid thermodynamic limit is required. 

As   starting points, we first establish the convergence of automorphisms in the thermodynamic limit as in \cite{nachtergaele2019quasi}, however,  with additional quantitative control on the rate of convergence implied by the condition that the generator has a rapid thermodynamic limit (Proposition~\ref{lrb} and Corollary~\ref{lrbcorr}). Similarly, we prove quantitative estimates on the rate of convergence of derivations   in the thermodynamic limit (Proposition~\ref{tdlofderivations} and Corollary~\ref{tdlofderivationscorr}).

\subsection{Dynamics}

Since we refer to the Lieb-Robinson bounds    several times in this work, we restate them for convenience of the reader in the following proposition. Its  only novel content is, however,  the  implications on the rate of convergence of automorphism groups   for generators with a rapid thermodynamic limit.
\begin{prop}{\rm (Lieb-Robinson bound and convergence of dynamics)} \label{lrb}
	~\\
	Let $H_0 \in \mathcal{L}_{I, \zeta, 0  }$ and $v \in \mathcal{V}_I$ both have a  thermodynamic limit (see Definition~2.1 in \cite{henheikteufel20202}), and set $H:= H_0 + V_v$. For $A\in \mathcal{A}_{\Lambda_k}$ define the local dynamics as
	\begin{equation*}
	\mathfrak{U}^{\Lambda_k}_{t,s}\Ab{A}  =  U^{\Lambda_k}(t,s)^* \,A\, U^{\Lambda_k}(t,s)
	\end{equation*}
	where $U^{\Lambda_k}(t,s)$ is the solution to the Schrödinger equation 
\[
\mathrm{i }\tfrac{\mathrm{d}}{\mathrm{d}t} U^{\Lambda_k}(t,s) = H^{\Lambda_k} (t) U^{ \Lambda_k}(t,s) 
 \]
with $U^{\Lambda_k}(s,s) = \mathrm{id}$. 
 There exists $C<\infty$ such that    for all $A \in \mathcal{A}_X$ and $B \in \mathcal{A}_Y$ with $X, Y \subset\Lambda_k$ and all $k\in \mathbb{N}$
	\begin{align*}
	&\left\Vert \left[\mathfrak{U}_{t,s}^{\Lambda_k}\Ab{A},B\right] \right\Vert  \le \  C \,\Vert A \Vert  \Vert B \Vert   \min(\vert X\vert, \vert Y \vert ) \,\mathrm{e}^{2C_{\zeta}\vert t-s\vert \Vert\Phi_{H_0} \Vert_{I,\zeta, 0}} \zeta(\mathrm{dist}^{\Lambda_k}(X,Y))\,.
	\end{align*}
	If $H_0 \in \mathcal{L}_{I, \exp(-a \, \cdot), 0 }$ for some $a>0$, one defines the Lieb-Robinson velocity via 
	\begin{equation} \label{lrvelocity}
	v_a := 2 a^{-1}C_{\exp(-a \, \cdot)}\Vert\Phi_{H_0} \Vert_{I,\exp(-a \, \cdot ), 0 },
	\end{equation}
	and obtains the more transparent bound
	\begin{equation*}
	\left\Vert \left[\mathfrak{U}_{t,s}^{\Lambda_k}\Ab{A},B\right] \right\Vert \le C \, \Vert A \Vert  \Vert B \Vert \ \min(\vert X \vert, \vert Y \vert) \ \mathrm{e}^{a(v_a\vert t-s\vert - \mathrm{dist}^{\Lambda_k}(X,Y))}. 
	\end{equation*}
If $H_0$ and $v$ have a rapid thermodynamic limit with exponent $\gamma\in (0,1)$, then
	there exist  $\lambda_1 >0$, $\lambda_2 \in (0,1)$, and $C<\infty$,  such that for all $l,k \in \mathbb{N}$ with $l \ge k$, $X \subset \Lambda_k$ and $A \in \mathcal{A}_X$
		\begin{align}\nonumber
	 \left\Vert (\mathfrak{U}_{t,s}^{\Lambda_l}  - \mathfrak{U}_{t,s}^{\Lambda_k})\Ab{A}\right\Vert &\; \le\; \;  C \,\Vert A\Vert \,  \mathrm{diam}(X)^{d+1}  \,  \mathrm{e}^{2 C_{\zeta} |t-s|\Vert\Phi_{H_0}\Vert_{I, \zeta, 0 }} |t-s| \\
		& \qquad\times     \zeta_{\gamma}\left(\mathrm{dist}^{\Lambda_l}(X, \Lambda_l\setminus  
		\Lambda_{\max\{ \left\lceil k - \lambda_1 k^{\gamma}\right\rceil, \left\lceil\lambda_2 \cdot k\right\rceil\}} )
   \right) 	\,. \label{LRconse}
		\end{align} 
 	In every case above, the   constant $C$   depends only on  $\zeta$,   $\Vert \Phi_{H_0} \Vert_{I, \zeta, 0 }$, and the Lipschitz constant $C_v$.    
\end{prop}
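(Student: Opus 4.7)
The first two displays (the standard Lieb--Robinson bound and its exponential specialization) are by now a classical consequence of the Nachtergaele--Sims--Young framework that we rely on throughout, in particular Theorem~3.4 of \cite{nachtergaele2019quasi}. The generator $H=H_0+V_v$ is a sum of an $F_\zeta$-decay SLT interaction and a Lipschitz potential; the commutator of the latter with a local observable is controlled by Lemma~\ref{lipschitzcomm}, which produces an extra $\mathrm{diam}(X)^{d+1}$ factor that can be absorbed into the $F$-function estimate in the standard iteration. The exponential form of the bound and the definition of $v_a$ follow by direct substitution of $\zeta(r)=\mathrm{e}^{-ar}$ into the general estimate.

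The heart of the proposition is the convergence estimate \eqref{LRconse}. The plan is to start from the Duhamel identity
\begin{equation*}
\bigl(\mathfrak{U}^{\Lambda_l}_{t,s}-\mathfrak{U}^{\Lambda_k}_{t,s}\bigr)\Ab{A}
\;=\; \I \int_s^t \D u \, \mathfrak{U}^{\Lambda_l}_{t,u}\Bigl[\Bigl[\bigl(H^{\Lambda_l}-H^{\Lambda_k}\bigr)(u),\, \mathfrak{U}^{\Lambda_k}_{u,s}\Ab{A}\Bigr]\Bigr],
\end{equation*}
and to use that $\mathfrak{U}^{\Lambda_l}_{t,u}$ is an isometry in operator norm, so that one is reduced to estimating, for each $u$, the inner commutator with the evolved observable $B_u := \mathfrak{U}^{\Lambda_k}_{u,s}\Ab{A}$. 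Expanding $H^{\Lambda_l}-H^{\Lambda_k}$ as a sum over terms of the interaction plus the Lipschitz-potential difference yields a sum of local commutators which we split into two groups using the intermediate cube $\Lambda_{M^*}$ with $M^* := \max\{\lceil k-\lambda_1 k^{\gamma}\rceil,\lceil\lambda_2\cdot k\rceil\}$: those supported inside $\Lambda_{M^*}$ and those intersecting $\Lambda_l\setminus \Lambda_{M^*}$.

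For the first group I would apply the rapid thermodynamic limit (Definition~\ref{cauchydefinition}): with $\lambda_1$ chosen at least as large as the constant $\lambda$ appearing there (for $i=0$), one has $k\ge M^* + \lambda_1 M^{*\gamma}$, so the bulk norm of $\Phi^{\Lambda_l}_{H_0}-\Psi_{H_0}$ and of $\Phi^{\Lambda_k}_{H_0}-\Psi_{H_0}$ on $\Lambda_{M^*}$ is each bounded by $C\,\zeta_\gamma(M^*)$; the triangle inequality then produces the $\zeta_\gamma$ factor. By construction $v^{\Lambda_l}|_{\Lambda_{M^*}}=v^{\Lambda_k}|_{\Lambda_{M^*}}=v_\infty|_{\Lambda_{M^*}}$ as well, so the Lipschitz part contributes nothing from this region. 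For the second group I would instead use the already-established Lieb--Robinson bound on $[B_u,\,\cdot\,]$ with respect to $X$: every such term is supported at distance at least $\mathrm{dist}^{\Lambda_l}(X,\Lambda_l\setminus\Lambda_{M^*})$ from $X$, so the LRB produces the factor $\mathrm{diam}(X)^{d+1}\,\mathrm{e}^{2C_\zeta|u-s|\|\Phi_{H_0}\|_{I,\zeta,0}}\,\zeta(\mathrm{dist}^{\Lambda_l}(X,\Lambda_l\setminus\Lambda_{M^*}))$, where the polynomial $\mathrm{diam}(X)^{d+1}$ comes from summing $|X|$ against the Lipschitz diameter factor from Lemma~\ref{lipschitzcomm}.

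The main technical obstacle, and the reason for the specific form of the argument of $\zeta_\gamma$ in \eqref{LRconse}, is to balance these two contributions so that both are dominated by the \emph{same} decay function. Because the second-group estimate involves $\zeta$ of a genuine distance while the first involves $\zeta_\gamma(M^*)=\zeta((M^*)^{\gamma})$, one wants $\mathrm{dist}^{\Lambda_l}(X,\Lambda_l\setminus\Lambda_{M^*})^\gamma$ and $(M^*)^\gamma$ to be comparable; this is where the choice $M^* = \max\{\lceil k-\lambda_1 k^{\gamma}\rceil,\lceil\lambda_2 k\rceil\}$ with $\lambda_2$ strictly less than $1$ enters, ensuring that $M^*$ grows linearly in $k$ so that distances from $X\subset\Lambda_k$ to $\Lambda_l\setminus\Lambda_{M^*}$ are controlled from below by a constant times $k$. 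Collecting the resulting estimates, together with the $|t-s|$ from the time integral, produces \eqref{LRconse}. The constants that enter depend only on $\zeta$, $\|\Phi_{H_0}\|_{I,\zeta,0}$, $C_v$ and the rapid-thermodynamic-limit constants, exactly as claimed.
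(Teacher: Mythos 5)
Your overall approach is correct and is essentially what lies behind the citations in the paper's own (very brief) proof, which simply invokes the convergence result of Nachtergaele--Sims--Young (Theorem~3.4 and Theorem~3.8 in~\cite{nachtergaele2019quasi}) together with the auxiliary estimate on $\|\mathfrak{U}^{\Lambda_k}_{t,s}\Ab{A}-A\|$ (which is where Lemma~\ref{lipschitzcomm} and the $\mathrm{diam}(X)^{d+1}$ factor enter), and then chooses the intermediate cube $\Lambda_M=\Lambda_{\lceil k-\lambda_1 k^\gamma\rceil}$ with $\lambda_1$ coming from the rapid-thermodynamic-limit definition. Your ``unpacked'' Duhamel-plus-decomposition argument is morally what the NSY theorem does, so you buy nothing that the paper doesn't already get by citing; the paper's route is shorter but opaque, while yours is self-contained but more work.

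There is, however, a concrete error in your Duhamel identity. With the Heisenberg convention $\mathfrak{U}^{\Lambda}_{t,s}\Ab{A}=U^{\Lambda}(t,s)^*\,A\,U^{\Lambda}(t,s)$, the interpolating family is $g(u):=\mathfrak{U}^{\Lambda_l}_{u,s}\circ\mathfrak{U}^{\Lambda_k}_{t,u}\Ab{A}$, for which $g(s)=\mathfrak{U}^{\Lambda_k}_{t,s}\Ab{A}$ and $g(t)=\mathfrak{U}^{\Lambda_l}_{t,s}\Ab{A}$, and the fundamental theorem of calculus gives
\[
\bigl(\mathfrak{U}^{\Lambda_l}_{t,s}-\mathfrak{U}^{\Lambda_k}_{t,s}\bigr)\Ab{A}
=\I\int_s^t\D u\;\mathfrak{U}^{\Lambda_l}_{u,s}\Bigl\llbracket\bigl[(H^{\Lambda_l}-H^{\Lambda_k})(u),\,\mathfrak{U}^{\Lambda_k}_{t,u}\Ab{A}\bigr]\Bigr\rrbracket\,.
\]
You wrote $\mathfrak{U}^{\Lambda_l}_{t,u}$ on the outside and $\mathfrak{U}^{\Lambda_k}_{u,s}$ on the inside. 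For that choice the interpolating family $\mathfrak{U}^{\Lambda_l}_{t,u}\circ\mathfrak{U}^{\Lambda_k}_{u,s}\Ab{A}$ has derivative $-\I[H^{\Lambda_l}(u),\cdot]+\I\,\mathfrak{U}^{\Lambda_l}_{t,u}\circ\mathfrak{U}^{\Lambda_k}_{u,s}\Ab{[H^{\Lambda_k}(u),A]}$ and the two pieces do not combine into a single commutator with $H^{\Lambda_l}-H^{\Lambda_k}$, because $H^{\Lambda_k}(u)$ does not commute with the cocycle at different times. So your displayed identity is false as written. The downstream steps are unaffected once the subscripts are fixed, since both $\mathfrak{U}^{\Lambda_k}_{u,s}\Ab{A}$ and $\mathfrak{U}^{\Lambda_k}_{t,u}\Ab{A}$ are quasi-localised near $X$ by the Lieb--Robinson bound and the outer automorphism is an isometry in either version. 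Two smaller remarks: you should spell out that the Lipschitz difference also contributes terms in the outer region $\Lambda_l\setminus\Lambda_{M^*}$ which must be absorbed via the LRB (your text only addresses the Lipschitz part in the inner region), and the second alternative $\lceil\lambda_2 k\rceil$ in the max is, as the paper notes, not obtained by your decomposition but is a trivial adjustment of the bound for small $k$ (absorbable into $C$), introduced for convenience in the subsequent lemmata.
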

\begin{proof}
	The first part is the standard Lieb-Robinson bound (see \cite{lieb1972finite} for the first proof, for fermionic systems see \cite{bru2016lieb}, \cite{nachtergaele2018lieb}). Invoking the estimate 
	\begin{equation*}
	\sup_{k \in \mathbb{N}} \left\Vert \mathfrak{U}_{t,s}^{\Lambda_k}\Ab{A}- A\right\Vert \, \le \,  
	\left(2  \Vert F_{1}\Vert \Vert\Phi_{H_0}\Vert_{I, \zeta, 0 } + \tfrac{1}{2}C_v  r\right) |t-s| \Vert A \Vert \, \mathrm{diam}(X)^{d+1}
	\end{equation*}
	for any $A \in \mathcal{A}_X$ (see Theorem~3.8 in \cite{nachtergaele2019quasi} and Lemma~\ref{lipschitzcomm}),   the estimate  \eqref{LRconse} with the first alternative in the maximum   is a consequence of Theorem~3.4 and Theorem~3.8 in \cite{nachtergaele2019quasi} by choosing $\Lambda_M = \Lambda_{\lceil k - \lambda_1 k^{\gamma}\rceil}$ with $\lambda_1$ from Definition~\ref{cauchydefinition}  resp.~the alternative characterisation in Lemma \ref{lem:equivalentchar}. The second alternative can easily be concluded from the first and is used in the proofs of the lemmata below. 
\end{proof}

 As a consequence of \eqref{LRconse}, $ (\mathfrak{U}_{t,s}^{\Lambda_k}\Ab{A} )_{k \in \mathbb{N}}$ is a Cauchy sequence in $(\mathcal{A},\|\cdot\|) $ for every  $A \in \mathcal{A}_{\mathrm{loc}}$. By density, the limits   $\mathfrak{U}_{t,s}\Ab{A} := \lim_{k\to\infty}\mathfrak{U}_{t,s}^{\Lambda_k}\Ab{A}$ define   a co-cycle $\mathfrak{U}_{t,s}$ of automorphisms  on $\mathcal{A}$  (the bulk dynamics)  and the map $I\times I \to \mathcal{A}$, $(t,s)\mapsto \mathfrak{U}_{t,s}\Ab{A}$ is continuous for every $A \in \mathcal{A}$.\footnote{A similar argument based on a weaker version of \eqref{LRconse} allows the same conclusion under the weaker assumption of  $H$ having a thermodynamic limit.}

  As the finite volume metrics $d^{\Lambda_k}(\cdot, \cdot)$ are compatible in the bulk, we obtain the corresponding statements for the bulk dynamics.  
\begin{cor}{\rm (Infinite volume dynamics)} \label{lrbcorr} ~\\
Under the conditions of Proposition~\ref{lrb}  there exists $C<\infty$,  such that  for all $X,Y\in\mathcal{P}_0(\Gamma)$, $A \in \mathcal{A}_X$, and $B \in \mathcal{A}_Y$
\begin{align*}
\left\Vert \left[\mathfrak{U}_{t,s}\Ab{A},B\right] \right\Vert  \le \  C\, \Vert A \Vert  \Vert B \Vert   \min(\vert X\vert, \vert Y \vert )\, \mathrm{e}^{2C_{\zeta}\vert t-s\vert \Vert\Phi_{H_0} \Vert_{I,\zeta, 0 }} \zeta(\mathrm{dist}(X,Y)) \,.
\end{align*}
If $H_0$ and $v$ have a rapid thermodynamic limit,  there exist  $\lambda_1 >0$, $\lambda_2 \in (0,1)$ such that 
\begin{align*}
	\left\Vert( \mathfrak{U}_{t,s}  - \mathfrak{U}_{t,s}^{\Lambda_k})\Ab{A}\right\Vert \, \le \,\; &C\, \Vert A\Vert \,  \mathrm{diam}(X)^{d+1}  \,  \mathrm{e}^{2 C_{\zeta} |t-s|\Vert\Phi_{H_0}\Vert_{I, \zeta, 0 }} |t-s| \\
	&\qquad \times \  \zeta_{\gamma}(\mathrm{dist}^{}(X, \Gamma \setminus\Lambda_{\max\{ \left\lceil k - \lambda_1 k^{\gamma}\right\rceil, \left\lceil\lambda_2 \cdot k\right\rceil\}} ) \,.
		\end{align*}
 	In both cases above, the   constant $C$   depends only on  $\zeta$,    $\Vert \Phi_{H_0} \Vert_{I, \zeta, 0 }$, and the Lipschitz constant $C_v$.   
\end{cor}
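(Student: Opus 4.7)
The strategy is to derive both claims directly from the corresponding finite-volume statements of Proposition~\ref{lrb} by passing to the thermodynamic limit $l \to \infty$, exploiting the bulk compatibility condition to identify $d^{\Lambda_l}$ with the $\ell^1$-distance $d$ on the finite sets that control each bound. The key input is the norm convergence $\mathfrak{U}^{\Lambda_l}_{t,s}\Ab{A} \to \mathfrak{U}_{t,s}\Ab{A}$ for $A\in\mathcal{A}_{\mathrm{loc}}$ that follows from the second part of Proposition~\ref{lrb} and is already indicated in the discussion following it.

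For the Lieb-Robinson bound on $\mathfrak{U}_{t,s}$, fix $A\in\mathcal{A}_X$ and $B\in\mathcal{A}_Y$ and choose $l$ large enough that $X,Y\subset\Lambda_l$ and that every pair $(x,y)\in X\times Y$ satisfies $d(x,y)\le l$; since $X,Y$ are finite, this is possible for all sufficiently large $l$. Proposition~\ref{lrb} then yields
\[
\bigl\|\bigl[\mathfrak{U}_{t,s}^{\Lambda_l}\Ab{A},B\bigr]\bigr\| \le C\,\|A\|\,\|B\|\,\min(|X|,|Y|)\,\mathrm{e}^{2C_\zeta|t-s|\|\Phi_{H_0}\|_{I,\zeta,0}}\,\zeta\bigl(\mathrm{dist}^{\Lambda_l}(X,Y)\bigr)\,,
\]
with $C$ independent of $l$. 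By bulk compatibility we have $d^{\Lambda_l}(x,y)=d(x,y)$ for every such pair, hence $\mathrm{dist}^{\Lambda_l}(X,Y)=\mathrm{dist}(X,Y)$ for $l$ large. Since the commutator converges in norm to $[\mathfrak{U}_{t,s}\Ab{A},B]$, the bound passes to the limit.

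For the rate-of-convergence estimate, fix $k$ and any $l\ge k$, and apply \eqref{LRconse} to the telescoping difference $\mathfrak{U}_{t,s}^{\Lambda_l}-\mathfrak{U}_{t,s}^{\Lambda_k}$, with the fixed integer $M:=\max\{\lceil k-\lambda_1 k^\gamma\rceil,\lceil\lambda_2 k\rceil\}\le k$. The minimum in $\mathrm{dist}^{\Lambda_l}(X,\Lambda_l\setminus\Lambda_M)$ is realised on the inner boundary $\Lambda_{M+1}\setminus\Lambda_M\subset\Lambda_l$, and all relevant $d$-distances between $X\subset\Lambda_k$ and this boundary set are bounded by a fixed multiple of $k$. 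Hence for $l$ sufficiently large the bulk compatibility yields
\[
\mathrm{dist}^{\Lambda_l}\bigl(X,\Lambda_l\setminus\Lambda_M\bigr) \;=\; \mathrm{dist}\bigl(X,\Gamma\setminus\Lambda_M\bigr)\,.
\]
Combining this with the norm convergence $(\mathfrak{U}_{t,s}^{\Lambda_l}-\mathfrak{U}_{t,s}^{\Lambda_k})\Ab{A}\to (\mathfrak{U}_{t,s}-\mathfrak{U}_{t,s}^{\Lambda_k})\Ab{A}$ and letting $l\to\infty$ yields the stated bound.

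The argument is essentially routine given Proposition~\ref{lrb}; the only point requiring a small check is the stabilisation of the finite-volume distance factors as $l\to\infty$, which is exactly what the bulk compatibility condition on the sequence $(d^{\Lambda_k})_{k\in\mathbb{N}}$ is designed to deliver. The uniformity of $C$ in $l$ is inherited from Proposition~\ref{lrb}, so no new control is required in the limit.
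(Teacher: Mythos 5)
Your proposal takes essentially the same route as the paper: the paper's own justification of Corollary~\ref{lrbcorr} is the single remark ``as the finite volume metrics $d^{\Lambda_k}(\cdot,\cdot)$ are compatible in the bulk, we obtain the corresponding statements for the bulk dynamics,'' and your write-up correctly fills in what that terse statement means, namely a passage to the limit $l\to\infty$ in the two estimates of Proposition~\ref{lrb} combined with norm convergence $\mathfrak{U}_{t,s}^{\Lambda_l}\Ab{A}\to\mathfrak{U}_{t,s}\Ab{A}$.

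One point deserves a more careful treatment than either you or the paper give. The bulk-compatibility condition yields only the one-sided comparison $d^{\Lambda_l}\le d$ in general, with equality on pairs at $\ell^1$-distance $\le l$. From this one obtains immediately $\mathrm{dist}^{\Lambda_l}(X,\Lambda_l\setminus\Lambda_M)\le\mathrm{dist}(X,\Gamma\setminus\Lambda_M)$ for $l$ large, which is the \emph{wrong} inequality for passing the $\zeta_\gamma$-bound to the limit, since $\zeta_\gamma$ is non-increasing. The reverse inequality --- that the $d^{\Lambda_l}$-distance from $X$ to the complement of $\Lambda_M$ does not become shorter than the $\ell^1$-distance, or equivalently your assertion that ``the minimum in $\mathrm{dist}^{\Lambda_l}(X,\Lambda_l\setminus\Lambda_M)$ is realised on the inner boundary $\Lambda_{M+1}\setminus\Lambda_M$'' --- is an additional geometric fact. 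It holds for the torus and tube metrics the paper has in mind (for the torus one checks directly that for $x\in\Lambda_{M-1}$, $y\in\Lambda_l\setminus\Lambda_M$ and $l\ge M$ the coordinate-wise torus distance in the coordinate where $|y_j|\ge M+1$ is already $\ge\mathrm{dist}(\{x\},\Gamma\setminus\Lambda_M)$), but it is not a formal consequence of the inequality $d^{\Lambda_l}\le d$ alone. Since the paper leaves this implicit too, I would not call this a gap, but you should state explicitly that you are using a lower bound on $d^{\Lambda_l}$ for such pairs, either by restricting to the geometries the paper intends or by making the implicit assumption on the family $(d^{\Lambda_k})_k$ precise.
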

We are now ready to prove generalisations of Lemma 4.4 and Lemma 4.5 from \cite{moon2019automorphic}.  

\begin{lem}{\rm (Quasi-locality of dynamics I)} \label{quasilocaltimeevolution1} ~\\
	Let $H_0 \in \mathcal{L}_{I, \exp(-a \cdot), 0 }$ and $v \in \mathcal{V}_I$ both have a  thermodynamic limit, set $H=H_0 + V_v$ and let $\mathfrak{U}_{t,s}$ denote the {infinite volume} dynamics generated by $H$. Let $f_1,f_2 :[0,\infty) \to (0,\infty)$ be bounded, non-increasing functions with $\lim_{s\to \infty} f_i(s) = 0$  for $i = 1,2$, such that\footnote{Here, $v_a$ denotes the Lieb-Robinson velocity from \eqref{lrvelocity}. }
	\[
	\int_{ 0}^\infty \mathrm{d}s \, w_g(s) \frac{2s }{f_2(4v_a s)} < \infty\,, 
	\]
	\[
	\sup_{N \in \mathbb{N}}  \frac{f_1( N- \lfloor \frac{N}{2} \rfloor)}{f_2(N)}  < \infty\,,\qquad 
	\sup_{N \in \mathbb{N}}  \frac{N^{d+1} \, \mathrm{e}^{-a \frac{\left\lfloor \frac{N}{2} \right\rfloor}{2}}}{f_2(N)}  < \infty\,,
	\]
	where $w_g$ is defined in Appendix~\ref{invliouappendix}. 
	
 Then $\mathfrak{U}_{t,s} : \mathcal{D}_{f_1} \to \mathcal{D}_{f_2}$ is a bounded operator and the sequence $\mathfrak{U}^{\Lambda_N}_{t,s} : \mathcal{D}_{f_1} \to \mathcal{D}_{f_2}$  of operators is uniformly bounded, both uniformly for $s$ and $t$ in compacts. More precisely, there is a non-negative non-decreasing function $b_{f_1, f_2}: [0,\infty) \to [0,\infty)$ such that
 		\begin{align*}
 	\Vert \mathfrak{U}_{t,s}\Ab{A} \Vert_{f_2} \le b_{f_1, f_2}(\vert t-s \vert) \Vert A \Vert_{f_1} \ \ \text{and} \  \ \sup_{N \in \mathbb{N}} \Vert \mathfrak{U}_{t,s}^{\Lambda_N}\Ab{A} \Vert_{f_2} \le b_{f_1, f_2}(\vert t-s \vert) \Vert A \Vert_{f_1}, 
 	\end{align*}
 	for all $A \in \mathcal{D}_{f_1}$. Moreover, we have that 
	\begin{align*}
	\int_{0}^\infty \mathrm{d}t \, s \,w_g(s) \,  b_{f_1, f_2} (s) < \infty.
		\end{align*}
		Beside the indicated dependence on $f_1$ and $f_2$, the function $b_{f_1,f_2}$ only depends on $a$,   ${\Vert \Phi_{H_0} \Vert_{I,\exp(-a \, \cdot ), 0 }}$, and the Lipschitz constant $C_v$.
\end{lem}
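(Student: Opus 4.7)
The plan is to run the standard ``local approximation plus Lieb-Robinson spreading'' argument. For $A\in\mathcal{D}_{f_1}$ and $N\in\mathbb{N}$ I would set $M_N:=\lfloor N/2\rfloor$ and split $A = A_{M_N} + (A - A_{M_N})$ with $A_{M_N}:=\mathbb{E}_{\Lambda_{M_N}}\Ab{A}\in\mathcal{A}_{\Lambda_{M_N}}$. Since $\mathfrak{U}_{t,s}$ is a norm-preserving $*$-automorphism and $\mathbb{E}_{\Lambda_N}$ is a norm-one projection, the tail contributes at most $2\|A-A_{M_N}\|\leq 2f_1(M_N)\|A\|_{f_1}$ to $\|(1-\mathbb{E}_{\Lambda_N})\Ab{\mathfrak{U}_{t,s}\Ab{A}}\|$, and the ``zeroth-order'' piece of the $f_2$-norm is immediate from $\|\mathfrak{U}_{t,s}\Ab{A}\|=\|A\|\le\|A\|_{f_1}$.

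For the localized piece I would use the commutator characterization of the conditional expectation on even observables (as in Lemma~C.2 from \cite{henheikteufel20202}), namely $\|(1-\mathbb{E}_{\Lambda_N})\Ab{B}\|\le 2\sup_C\|[B,C]\|$ with $C$ ranging over unitary even elements of $\mathcal{A}_{\Gamma\setminus\Lambda_N}$, and combine it with Corollary~\ref{lrbcorr} applied to $B=\mathfrak{U}_{t,s}\Ab{A_{M_N}}$. This yields
\begin{equation*}
\|(1-\mathbb{E}_{\Lambda_N})\Ab{\mathfrak{U}_{t,s}\Ab{A_{M_N}}}\|\;\le\;C\,(2M_N+1)^d\,\|A\|\,\mathrm{e}^{a(v_a|t-s|-(N-M_N))}.
\end{equation*}
Adding the tail contribution, dividing by $f_2(N)$, and splitting the Lieb-Robinson exponent as $\mathrm{e}^{av_a|t-s|}\cdot\mathrm{e}^{-a\lfloor N/2\rfloor/2}\cdot\mathrm{e}^{-a\lfloor N/2\rfloor/2}$, the first two hypotheses yield
\begin{equation*}
\sup_{N\in\mathbb{N}}\Big(\tfrac{2f_1(N-\lfloor N/2\rfloor)}{f_2(N)}+C\,\tfrac{N^{d+1}\mathrm{e}^{-a\lfloor N/2\rfloor/2}}{f_2(N)}\Big)<\infty,
\end{equation*}
leaving only an $\mathrm{e}^{av_a|t-s|}$-type dependence on $|t-s|$, which one takes as (an upper bound for) $b_{f_1,f_2}(|t-s|)$.

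To verify $\int_0^\infty w_g(s)\,s\,b_{f_1,f_2}(s)\,ds<\infty$ I would use the third hypothesis: a sharper balance of the estimate---optimizing over $N$ at $N\sim 4v_as$ in the competition between $\mathrm{e}^{av_as}\mathrm{e}^{-aN/2}$ and $f_2(N)^{-1}$---replaces the crude $\mathrm{e}^{av_as}$ envelope by an effective prefactor proportional to $1/f_2(4v_a s)$, which is exactly the form of the $w_g$-integrability hypothesis. The finite-volume statement for $\mathfrak{U}^{\Lambda_N}_{t,s}$ is obtained identically: the conditional-expectation decomposition is intrinsic to $\mathcal{A}$, and the Lieb-Robinson bound of Proposition~\ref{lrb} is already uniform in the system size, so all resulting constants depend only on $a$, $v_a$, $\|\Phi_{H_0}\|_{I,\exp(-a\,\cdot),0}$ and $C_v$, as claimed. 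I expect the main obstacle to be this final bookkeeping step---choosing $M_N$ (possibly $\lfloor N/2\rfloor-\lceil 2v_a|t-s|\rceil$ while positive) and splitting the Lieb-Robinson exponent so that the envelope $b_{f_1,f_2}$ genuinely satisfies the stated integrability against $w_g$, rather than a weaker one.
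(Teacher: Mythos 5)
Your proposal is correct and follows essentially the same strategy as the paper's proof: decompose $A$ via the conditional expectation onto a box of roughly half radius, use the unitarity of $\mathfrak{U}_{t,s}$ for the tail, and bound the localized piece through the commutator characterization of $1-\mathbb{E}_{\Lambda_N}$ combined with the Lieb--Robinson bound. Two small points of bookkeeping: first, with your choice $M_N=\lfloor N/2\rfloor$ the tail term is $f_1(\lfloor N/2\rfloor)$, whereas the hypothesis controls only $\sup_N f_1(N-\lfloor N/2\rfloor)/f_2(N)=\sup_N f_1(\lceil N/2\rceil)/f_2(N)$; since $f_1$ may decay arbitrarily fast, the ratio $f_1(\lfloor N/2\rfloor)/f_1(\lceil N/2\rceil)$ need not be bounded, so one should take $M_N=N-\lfloor N/2\rfloor$ as in the paper. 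Second, the ``sharper balance'' you gesture at is realized in the paper not by optimizing $M_N$ in a $t$-dependent way, but by a clean case distinction on $N\gtrless 4v_a|t-s|$: for $N\ge 4v_a|t-s|$ the exponent $v_a|t-s|-\lfloor N/2\rfloor\le -\lfloor N/2\rfloor/2 +O(1)$ gives a $|t-s|$-independent bound controlled by the second supremum hypothesis, while for $N<4v_a|t-s|$ one uses the trivial bound $2\Vert A\Vert$, and monotonicity of $f_2$ gives $\sup_{N<4v_a|t-s|}2/f_2(N)\le 2/f_2(4v_a|t-s|)$, which is exactly the shape needed for the $w_g$-integrability; the envelope $b_{f_1,f_2}$ is then the maximum of these two branches plus one.
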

\begin{proof}
	The proof is analogous to the one of Lemma 4.5 from \cite{moon2019automorphic}. 
 Let $A \in \mathcal{D}_{f_1}$. Then, as $\mathfrak{U}_{t,s}$ is an automorphism, 
	\begin{equation*}
	\Vert \mathfrak{U}_{t,s}\Ab{A} \Vert = \Vert A \Vert \le \Vert A\Vert_{f_1}. 
\end{equation*}
From Corollary~\ref{lrbcorr} in combination with Lemma C.2 from \cite{henheikteufel20202}, for $N,k \in \mathbb{N}$ with $k<N$, we obtain
\begin{align*}
\left\Vert ( 1-    \mathbb{E}_{\Lambda_N})\circ \mathfrak{U}_{t,s}\Ab{A}  \right\Vert \; &
\le \;  \left\Vert ( 1 -   \mathbb{E}_{\Lambda_N})\circ \mathfrak{U}_{t,s}\circ \mathbb{E}_{\Lambda_k}\Ab{A}   \right\Vert\; +\; 2\, \Vert (1 - \mathbb{E}_{\Lambda_k})\Ab{A} \Vert \\
  &
\le \; C\, \|A \|\, k^d \,\vert t-s \vert \,\E^{a(v_a \vert t-s \vert-(N-k) )} \;+ \;2 \,\Vert A \Vert_{f_1} f_1(k) . 
\end{align*}
For $N \in \mathbb{N}$ with $4v_a \vert t-s \vert \le N$, we use this bound with $k = N- \left\lfloor \tfrac{N}{2}\right\rfloor$ to estimate
\begin{align*}
\left\Vert(1  -  \mathbb{E}_{\Lambda_N})\circ \mathfrak{U}_{t,s}\Ab{A}  \right\Vert \; &
\le \;   C \, \Vert A \Vert\, N^{d+1} \, \E^{a(v_a \vert t-s \vert-\left\lfloor N/2 \right\rfloor )} \;+\; 2\, \Vert A \Vert_{f_1} f_1\left(N- \left\lfloor \tfrac{N}{2}\right\rfloor\right) \\
&
\le \;   \Big( C N^{d+1}  \E^{-a \frac{\left\lfloor  \frac{N}{2}\right\rfloor}{2}} + 2 f_1\Big(N- \left\lfloor \tfrac{N}{2}\right\rfloor\Big)  \Big) \Vert A \Vert_{f_1}\,.
\end{align*}
On the other hand, for $N \in \mathbb{N}$ with $4v\vert t-s \vert > N$, we simply have
\begin{equation*}
\left\Vert(1-   \mathbb{E}_{\Lambda_N})\circ \mathfrak{U}_{t,s}\Ab{A} \right\Vert\; \le \; 2 \, \Vert A \Vert \;\le\; 2 \,\Vert A \Vert_{f_1}. 
\end{equation*}
Hence, summarising our investigations, we obtain
\begin{align*}
 \Vert \mathfrak{U}_{t,s}\Ab{A} \Vert_{f_2}  \le \; &\left( 1+ \max\left\{   \begin{aligned}
2 \sup_{N \in \mathbb{N}} \left(\tfrac{f_1( N- \lfloor \frac{N}{2} \rfloor)}{f_2(N)}\right) + & \, C
\sup_{N \in \mathbb{N}} \Bigg(\tfrac{N^{d+1} \, \mathrm{e}^{-a \frac{\left\lfloor \frac{N}{2} \right\rfloor}{2}}}{f_2(N)}\Bigg) \, , \\ 
\frac{2 \vert t -s\vert }{f_2(4v_a \vert t-s \vert)} &\chi_{4v_a\vert t-s\vert \ge 1 }
\end{aligned} \right\}\right) \Vert A \Vert_{f_1}  \\
=: & \; b_{f_1,f_2}(\vert t-s\vert) \,\Vert A \Vert_{f_1} \,.
\end{align*}
The claim follows by the requirements on $f_1$ and $f_2$. The uniform boundedness of $\mathfrak{U}^{\Lambda_N}_{t,s}$ can be proven in the same way.
\end{proof}
\begin{lem} {\rm (Quasi-locality of dynamics II)} \label{quasilocaltimeevolution2}~\\
	Let $H_0 \in \mathcal{L}_{I, \exp(-a\, \cdot), 0 }$ and $v \in \mathcal{V}_I$ both have a rapid thermodynamic limit with exponent $\gamma \in (0,1)$, set $H=H_0 + V_v$ and let $\mathfrak{U}_{t,s}$ denote the dynamics generated by $H$. Let $f_1,f_2 :[0,\infty) \to (0,\infty)$ be bounded, non-increasing functions with $\lim_{s\to \infty} f_i(s) = 0$, for $i = 1,2$, such that
	\begin{align*}
	\lim_{N \to \infty}  \frac{f_1(\lfloor \frac{N}{2} \rfloor)}{f_2(N)}  = 0 \qquad\mbox{and}\qquad 
	\lim_{N \to \infty}  \frac{N^{d+1} \, \mathrm{e}^{-a (N-\lfloor \frac{N}{2} \rfloor)^{\gamma}}}{f_2(N)} = 0\,.
	\end{align*}
	Then $\mathfrak{U}_{t,s} : \mathcal{D}_{f_1} \to \mathcal{D}_{f_2}$ and $\mathfrak{U}^{\Lambda_N}_{t,s} : \mathcal{D}_{f_1} \to \mathcal{D}_{f_2}$ are bounded operators, and 
	   $\mathfrak{U}_{t,s}^{\Lambda_N} \xrightarrow{N \to \infty} \mathfrak{U}_{t,s}$
in operator norm, uniformly for $s,t$ in compacts. In particular, for each $A \in \mathcal{D}_{f_1}$, $(t,s) \mapsto \mathfrak{U}_{t,s}\Ab{A}$ is continuous with respect to the norm $\Vert \cdot \Vert_{f_2}$. This   also holds for $(t,s) \mapsto \mathrm{e}^{\mathrm{i}s\mathcal{L}_{H(t)}}$. 
\end{lem}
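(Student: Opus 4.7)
The plan is to proceed in two main stages: first establish that both $\mathfrak{U}_{t,s}$ and $\mathfrak{U}_{t,s}^{\Lambda_N}$ are bounded operators from $\mathcal{D}_{f_1}$ to $\mathcal{D}_{f_2}$ (uniformly in $N$ and in $(t,s)$ on compacts), and then obtain operator-norm convergence by a splitting argument that exploits the quantitative convergence rate supplied by the rapid thermodynamic limit.

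\textbf{Step 1 (Boundedness).} I would reuse the proof strategy of Lemma~\ref{quasilocaltimeevolution1}: for $A \in \mathcal{D}_{f_1}$ and $M\in\mathbb{N}$, decompose $A = \mathbb{E}_{\Lambda_{\lfloor M/2\rfloor}}\Ab{A} + (1-\mathbb{E}_{\Lambda_{\lfloor M/2\rfloor}})\Ab{A}$ and apply Corollary~\ref{lrbcorr} together with Lemma~C.2 of \cite{henheikteufel20202} to obtain
\[
\big\|(1-\mathbb{E}_{\Lambda_M})\,\mathfrak{U}_{t,s}\Ab{A}\big\| \;\le\; C\,\|A\|_{f_1}\Big(M^d|t-s|\,\mathrm{e}^{a(v_a|t-s|-\lceil M/2\rceil)} + f_1(\lfloor M/2\rfloor)\Big).
\]
Since $\mathrm{e}^{-a\lceil M/2\rceil}\le \mathrm{e}^{-a(M-\lfloor M/2\rfloor)^\gamma}$ for $\gamma\in(0,1)$, dividing by $f_2(M)$ and invoking the two hypotheses of the lemma gives a uniform bound in $M$. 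The same argument (with the same constants) applies to $\mathfrak{U}_{t,s}^{\Lambda_N}$ and is uniform in $N$.

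\textbf{Step 2 (Operator-norm convergence).} For $A\in\mathcal{D}_{f_1}$ with $\|A\|_{f_1}\le 1$ and a parameter $k=k(N)\ll N$, I would write
\[
(\mathfrak{U}_{t,s}-\mathfrak{U}_{t,s}^{\Lambda_N})\Ab{A} = (\mathfrak{U}_{t,s}-\mathfrak{U}_{t,s}^{\Lambda_N})\Ab{\mathbb{E}_{\Lambda_k}A} + (\mathfrak{U}_{t,s}-\mathfrak{U}_{t,s}^{\Lambda_N})\Ab{(1-\mathbb{E}_{\Lambda_k})A}.
\]
For the localised piece, Corollary~\ref{lrbcorr} provides the operator-norm bound by $C(2k+1)^{d+1}|t-s|\,\mathrm{e}^{2C_\zeta|t-s|\|\Phi_{H_0}\|_{I,\zeta,0}}\,\zeta_\gamma(N'-k)$ with $N'\sim N$. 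For the tail, $\|(1-\mathbb{E}_{\Lambda_k})A\|\le f_1(k)$ so the operator norm of the second piece is at most $2f_1(k)$. To pass from operator norm to $\|\cdot\|_{f_2}$, one then controls $\|(1-\mathbb{E}_{\Lambda_M})(\mathfrak{U}_{t,s}-\mathfrak{U}_{t,s}^{\Lambda_N})\Ab{A}\|/f_2(M)$ in two $M$-regimes: for $M$ below a threshold $M_*(N)$, use the operator-norm difference divided by $f_2(M)\ge f_2(M_*)$; for $M$ above $M_*$, use the Lieb-Robinson localisation of both $\mathfrak{U}_{t,s}\Ab{\mathbb{E}_{\Lambda_k}A}$ and $\mathfrak{U}_{t,s}^{\Lambda_N}\Ab{\mathbb{E}_{\Lambda_k}A}$ (via Proposition~\ref{lrb}), together with the small-norm bound on the tail. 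Choosing $k=\lfloor N/2\rfloor$ and $M_*$ appropriately, the two lemma hypotheses exactly cover these two regimes and force the resulting bound to vanish as $N\to\infty$, uniformly for $\|A\|_{f_1}\le 1$ and for $(t,s)$ in compacts.

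\textbf{Continuity.} For fixed $A\in\mathcal{D}_{f_1}$, continuity of $(t,s)\mapsto\mathfrak{U}_{t,s}\Ab{A}$ in $\|\cdot\|_{f_2}$ follows from the decomposition $A=\mathbb{E}_{\Lambda_k}A+(1-\mathbb{E}_{\Lambda_k})A$: the localised piece gives strong continuity by Proposition~\ref{lrb}, whose image then has controlled $f_2$-localisation by Step~1; the tail contributes uniformly at most $2b_{f_1,f_2}(|t-s|)\|(1-\mathbb{E}_{\Lambda_k})A\|_{f_1}$, which can be made arbitrarily small by choosing $k$ large. The same reasoning applies to $(t,s)\mapsto\mathrm{e}^{\mathrm{i}s\mathcal{L}_{H(t)}}$ once one notes that, by the hypothesis on rapid thermodynamic limit (and assumption (I2) where needed), the instantaneous interaction at each $t$ retains a rapid thermodynamic limit with uniform constants on compacts.

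\textbf{Main obstacle.} The principal technical hurdle is the \emph{simultaneous} control of the operator-norm and of the $f_2$-localisation of $(\mathfrak{U}_{t,s}-\mathfrak{U}_{t,s}^{\Lambda_N})\Ab{A}$: the convergence rate coming from the rapid thermodynamic limit is only stretched-exponential ($\zeta_\gamma$), and the two-regime bookkeeping in $M$ must be tight enough to absorb the polynomial prefactor $k^{d+1}$ against the decay $\zeta_\gamma(N'-k)$. This is precisely why the exponent $\gamma$ and the weight function $\zeta_\gamma$ appear in the second hypothesis $\lim_{N\to\infty}N^{d+1}\mathrm{e}^{-a(N-\lfloor N/2\rfloor)^\gamma}/f_2(N)=0$, and why the rapid thermodynamic limit (rather than mere existence of the thermodynamic limit) is indispensable for norm convergence.
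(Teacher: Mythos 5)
Your proposal is correct and follows essentially the same strategy as the paper's proof: decompose $A$ with a conditional expectation at an intermediate scale, estimate the localised piece via the quantitative convergence bound of Corollary~\ref{lrbcorr} and the tail via the $f_1$-decay, then control $\|(1-\mathbb{E}_{\Lambda_M})(\mathfrak{U}_{t,s}-\mathfrak{U}_{t,s}^{\Lambda_N})\Ab{A}\|/f_2(M)$ by a two-regime analysis in $M$. The only small point worth noting is that the paper splits at $\lfloor\lambda N/2\rfloor$ (with $\lambda=\lambda_2\in(0,1)$ from Corollary~\ref{lrbcorr}) rather than at $\lfloor N/2\rfloor$, so that the distance inside $\zeta_\gamma$ in Corollary~\ref{lrbcorr} stays strictly positive and grows linearly in $N$; your choice $k=\lfloor N/2\rfloor$ would require the same adjustment.
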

\begin{proof}
	The proof is inspired by the one of Lemma 4.4 in \cite{moon2019automorphic}. 
	Let $A \in \mathcal{D}_{f_1}$. From Corollary~\ref{lrbcorr}, we have
	\begin{align*}
 \left\Vert (\mathfrak{U}^{\Lambda_N}_{t,s}  - \; \mathfrak{U}_{t,s})\Ab{A}\right\Vert \; \le  & \;\left\Vert\left( \mathfrak{U}_{t,s}^{\Lambda_N}  - \mathfrak{U}_{t,s}\right)\circ \mathbb{E}_{\Lambda_{\left\lfloor \frac{\lambda N}{2 }\right\rfloor}}\Ab{A}\right \Vert\\
 &+ \ \left\Vert  \left(\mathfrak{U}^{\Lambda_N}_{t,s} - \mathfrak{U}_{t,s}\right) \circ\big(1-  \mathbb{E}_{\Lambda_{\left\lfloor \frac{\lambda N}{2 }\right\rfloor}}\big)\Ab{A} \right\Vert \\
 \le & \; \,C \, N^{d+1} \vert t-s\vert \mathrm{e}^{a(v_a\vert t-s \vert - (\lambda N-\left\lfloor \frac{\lambda N}{2 }\right\rfloor)^{\gamma})} \Vert A \Vert + 2 f_1\left(\left\lfloor \lambda N /2\right\rfloor\right) \Vert A \Vert_{f_1} \\
 \le & \;  \left( C \, N^{d+1} \vert t-s\vert \mathrm{e}^{a(v_a\vert t-s \vert - (\lambda N-\left\lfloor \frac{\lambda N}{2 }\right\rfloor)^{\gamma})}  + 2 f_1\left(\left\lfloor \lambda N/2\right\rfloor\right) \right)\Vert A \Vert_{f_1}.
	\end{align*}
Similarly, again applying Corollary~\ref{lrbcorr} with $\lambda = \lambda_2$ for $M \le \left\lfloor \frac{\lambda N}{2}\right\rfloor$, we have
	\begin{align*}
  &\hspace{-15pt}\left\Vert (1 - \mathbb{E}_{\Lambda_M})\circ \left(\mathfrak{U}_{t,s}^{\Lambda_N} - \mathfrak{U}_{t,s}\right)\Ab{A}\right \Vert \\ 
& \; \le  \;  \left\Vert \left(\mathfrak{U}_{t,s}^{\Lambda_N} - \mathfrak{U}_{t,s} \right)\circ \mathbb{E}_{\Lambda_{\left\lfloor \frac{\lambda N}{2 }\right\rfloor}}\Ab{A} \right \Vert \ + \ \left\Vert \mathbb{E}_{\Lambda_M}\circ \left(\mathfrak{U}_{t,s}^{\Lambda_N} -  \mathfrak{U}_{t,s}  \right)
\circ\mathbb{E}_{\Lambda_{\left\lfloor \frac{\lambda N}{2 }\right\rfloor}}\Ab{A} \right \Vert \\
&\qquad+ \, 4\left\Vert \big( \mathbb{E}_{\Lambda_{\left\lfloor \frac{\lambda N}{2 }\right\rfloor}}-1\big)\Ab{A}  \right\Vert  \\
& \; \le  \;  2\, C \, N^{d+1}  \,\vert t-s\vert  \,\mathrm{e}^{a(v_a\vert t-s \vert - (\lambda N-\left\lfloor \frac{\lambda N}{2 }\right\rfloor)^{\gamma})} \Vert A \Vert \;+ \;4 \, f_1\left(\left\lfloor \lambda N /2\right\rfloor\right) \Vert A \Vert_{f_1} \\
& \; \le  \;  \left(2\, C \, N^{d+1} \, \vert t-s\vert  \,\mathrm{e}^{a(v_a\vert t-s \vert - (\lambda N-\left\lfloor \frac{\lambda N}{2 }\right\rfloor)^{\gamma})} \; +\; 4  \,f_1\left(\left\lfloor \lambda N/2\right\rfloor\right) \right)\Vert A \Vert_{f_1}. 
\end{align*}
On the other hand by application of Proposition~\ref{lrb} and Corollary~\ref{lrbcorr} in combination with Lemma~C.2 from \cite{henheikteufel20202}, for $M \ge \left\lfloor \frac{\lambda N}{2}\right\rfloor$, we have
\begin{align*}
 \left\Vert  \left(1 - \mathbb{E}_{\Lambda_M} \right) \circ  \mathfrak{U}_{t,s}^{\Lambda_N} \circ \mathbb{E}_{\Lambda_{\left\lfloor \frac{M}{2 }\right\rfloor}}\Ab{A} \right\Vert  
\, 
&\le\,  C \Vert A \Vert M^d \mathrm{e}^{a(v_a\vert t-s \vert - (M-\left\lfloor \frac{M}{2}\right\rfloor)^\gamma)} \\
 \left\Vert  \left(1 - \mathbb{E}_{\Lambda_M} \right) \circ  \mathfrak{U}_{t,s}  \circ \mathbb{E}_{\Lambda_{\left\lfloor \frac{M}{2 }\right\rfloor}}\Ab{A} \right\Vert  
\,  
&\le\, C \Vert A \Vert M^d \mathrm{e}^{a(v_a\vert t-s \vert - (M-\left\lfloor \frac{M}{2}\right\rfloor)^\gamma)} \,.
\end{align*}
Therefore, for $M \ge \left\lfloor \frac{\lambda N}{2}\right\rfloor$, we have
\begin{align*} 
  &\hspace{-15pt}\left\Vert (1 - \mathbb{E}_{\Lambda_M})\circ \left(\mathfrak{U}_{t,s}^{\Lambda_N} - \mathfrak{U}_{t,s}\right)\Ab{A}\right \Vert \\ 
&\;\le \;2\, C \,\Vert A \Vert \,M^d \,  \mathrm{e}^{a(v_a\vert t-s \vert - (M-\left\lfloor \frac{M}{2}\right\rfloor)^\gamma)}  \;+\; 4 \,f_1\left(\left\lfloor M/2\right\rfloor\right) \Vert A \Vert_{f_1} \\&\;\le \; \left(2 \,C\,  M^d \,  \mathrm{e}^{a(v_a\vert t-s \vert - (M-\left\lfloor \frac{M}{2}\right\rfloor)^\gamma)}  \;+ \;4\, f_1\left(\left\lfloor M/2\right\rfloor\right)\right) \Vert A \Vert_{f_1}\,.
\end{align*}
By collecting the estimates above and using the conditions on $f_1$ and $f_2$, we find that 
\begin{equation*}
\left\Vert( \mathfrak{U}^{\Lambda_N}_{t,s}  -  \mathfrak{U}_{t,s})\Ab{A}\right\Vert_{f_2} \le h_{t,s}(N) \Vert A \Vert_{f_1}, 
\end{equation*}
where $h_{t,s}(N) \to 0$ as $N \to \infty$ uniformly for $s$ and $t$ in compacts.
\end{proof}
\begin{lem} {\rm (Quasi-locality of conjugation with unitaries)} \label{quasilocalautomorphism}~\\
	Let $S \in \mathcal{L}_{I, \zeta, 0 }$ have a   thermodynamic limit    and let $\mathrm{e}^{\mathrm{i}\mathcal{L}_{S(t)}}$ denote the automorphism on $\mathcal{A}$ defined by 
	\[
	\mathrm{e}^{\mathrm{i}\mathcal{L}_{S(t)}}\Ab{A} := \lim_{k\to \infty}\mathrm{e}^{\mathrm{i}\mathcal{L}^{\Lambda_k}_{S(t)}}\Ab{A}
	:= \lim_{k\to \infty}     \mathrm{e}^{-\mathrm{i} S^{\Lambda_k}(t)} \,A  \,   \mathrm{e}^{\mathrm{i} S^{\Lambda_k}(t)} 
	\]
	 for $A\in \mathcal{A}_\mathrm{loc}$. 
	Let $f_1,f_2 :[0,\infty) \to (0,\infty)$ be bounded, non-increasing functions with $\lim_{t\to \infty} f_i(t) = 0$, for $i = 1,2$, such that
	\begin{align*}
	\sup_{N \in \mathbb{N}}  \frac{f_1(N-\lfloor \frac{N}{2} \rfloor)}{f_2(N)}  < \infty\,,\qquad\mbox{and}\qquad 
	\sup_{N \in \mathbb{N}}  \frac{N^d \, \zeta(\left\lfloor \frac{N}{2} \right\rfloor)}{f_2(N)}  < \infty\,.
	\end{align*}
 Then $\mathrm{e}^{\mathrm{i}\mathcal{L}_{S(t)}}: \mathcal{D}_{f_1} \to \mathcal{D}_{f_2}$ is a bounded operator and the sequence $\mathrm{e}^{\mathrm{i}\mathcal{L}^{\Lambda_N}_{S(t)}}: \mathcal{D}_{f_1} \to \mathcal{D}_{f_2}$ of operators is uniformly bounded, both uniformly in $t \in I$. More precisely, there exists a constant $C_{f_1,f_2}$ such that
	\begin{equation*}
	\sup_{t \in I} \left\Vert \mathrm{e}^{\mathrm{i}\mathcal{L}_{S(t)}}\Ab{A} \right\Vert_{f_2} \le C_{f_1,f_2} \Vert A\Vert_{f_1} \quad \text{and} \quad \sup_{N \in \mathbb{N}} \sup_{t \in I}\left\Vert \mathrm{e}^{\mathrm{i}\mathcal{L}^{\Lambda_N}_{S(t)}}\Ab{A} \right\Vert_{f_2} \le C_{f_1,f_2} \Vert A\Vert_{f_1}
	\end{equation*}
	for all $A \in \mathcal{D}_{f_1}$. Beside the indicated dependence on $f_1$ and $f_2$, the constant $C_{f_1,f_2}$ only depends on $\zeta$ and the norm $\Vert S \Vert_{I, \zeta, 0 }$. 
\end{lem}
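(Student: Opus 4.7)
The plan is to view $\mathrm{e}^{\mathrm{i}\mathcal{L}_{S(t)}}$ as the time-one evolution generated by the time-independent Hamiltonian $S(t) \in \mathcal{L}_{I,\zeta,0}$, so that Proposition~\ref{lrb} applies directly and yields, uniformly in $t \in I$ and in the ambient finite volume $\Lambda_l$, a Lieb--Robinson commutator estimate for $\mathrm{e}^{\mathrm{i}\mathcal{L}^{\Lambda_l}_{S(t)}}$ with decay function $\zeta$ and constants depending only on $\zeta$ and $\Vert S \Vert_{I,\zeta,0}$ (the exponential factor $\mathrm{e}^{2 C_\zeta \Vert S\Vert_{I,\zeta,0}}$ coming from evaluating at unit time is absorbed into the constant). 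Passing to the thermodynamic limit via Corollary~\ref{lrbcorr} gives the same estimate for the bulk automorphism.

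Combining this commutator estimate with Lemma~C.2 of \cite{henheikteufel20202} (which controls $1 - \mathbb{E}_{\Lambda_N}$ in terms of commutators with elements supported outside $\Lambda_N$), I will obtain, for all $B \in \mathcal{A}_{\Lambda_k}$ with $k < N$,
\begin{equation*}
\left\| (1 - \mathbb{E}_{\Lambda_N}) \circ \mathrm{e}^{\mathrm{i}\mathcal{L}_{S(t)}}\Ab{B}\right\| \le C\, |\Lambda_k|\, \zeta(N-k)\, \Vert B \Vert,
\end{equation*}
with $C$ uniform in $t$, $N$, and the ambient volume.

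For $A \in \mathcal{D}_{f_1}$ I then use the now standard decomposition
\[
A = \mathbb{E}_{\Lambda_k}\Ab{A} + (1-\mathbb{E}_{\Lambda_k})\Ab{A},
\]
applied with $k = N - \lfloor N/2 \rfloor$. The first summand contributes at most $C\, (N-\lfloor N/2 \rfloor)^d\, \zeta(\lfloor N/2 \rfloor)\, \Vert A \Vert$ via the displayed bound, while the second contributes at most $2\, f_1(N - \lfloor N/2 \rfloor)\, \Vert A \Vert_{f_1}$. Dividing by $f_2(N)$ and taking a supremum over $N$, the two ratio hypotheses in the lemma bound both contributions by a constant multiple of $\Vert A \Vert_{f_1}$; combined with the trivial equality $\Vert \mathrm{e}^{\mathrm{i}\mathcal{L}_{S(t)}}\Ab{A}\Vert = \Vert A \Vert \le \Vert A\Vert_{f_1}$, this yields the claimed $f_2$-norm bound with a constant uniform in $t$.

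The argument for the finite-volume automorphisms $\mathrm{e}^{\mathrm{i}\mathcal{L}^{\Lambda_N}_{S(t)}}$ should run verbatim, since Proposition~\ref{lrb} is already a uniform-in-volume statement. I do not anticipate a serious technical obstacle: the only mild point is that $(N - \lfloor N/2 \rfloor)^d \le N^d$, so that the polynomial growth factor coming from $|\Lambda_k|$ is absorbed by the second ratio assumption, and that the Lieb--Robinson constant depends on $S$ only through $\Vert S\Vert_{I,\zeta,0}$ — a quantity which is by construction both time- and volume-independent — which secures all uniformity claims in a single stroke.
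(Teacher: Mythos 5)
Your proposal matches the paper's proof essentially step for step: both view the conjugation as a time‑one evolution at frozen time, apply Corollary~\ref{lrbcorr} (the infinite-volume Lieb--Robinson bound) together with Lemma~C.2 of \cite{henheikteufel20202} to control $(1-\mathbb{E}_{\Lambda_N})\circ\mathrm{e}^{\mathrm{i}\mathcal{L}_{S(t)}}$, and then split $A = \mathbb{E}_{\Lambda_k}\Ab{A} + (1-\mathbb{E}_{\Lambda_k})\Ab{A}$ with $k = N-\lfloor N/2\rfloor$ so that the two ratio hypotheses on $f_1,f_2$ close the estimate. The argument is correct and your remarks about uniformity in $t$ and in the volume and the absorption of the $\mathrm{e}^{2C_\zeta\Vert S\Vert_{I,\zeta,0}}$ factor into $C_{f_1,f_2}$ are exactly the points the paper leaves implicit.
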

\begin{proof}
As a conjugation with unitaries can be viewed as a time-evolution at a frozen time, we have by application of Corollary~\ref{lrbcorr} and Lemma~C.2 from \cite{henheikteufel20202} (analogous to the proof of Lemma 4.5 in \cite{moon2019automorphic}) that 
\begin{align*}
 \left\Vert  \left( 1 -  \mathbb{E}_{\Lambda_N}\right) \circ \mathrm{e}^{\mathrm{i}\mathcal{L}_{S(t)}}\Ab{A} \right\Vert  \;\le\; &\left\Vert \left(1 - \mathbb{E}_{\Lambda_N}\right)\circ \mathrm{e}^{\mathrm{i}\mathcal{L}_{S(t)}}\circ \mathbb{E}_{\Lambda_k}\Ab{A} \right\Vert + 2 \left\Vert A - \mathbb{E}_{\Lambda_k}\Ab{A} \right\Vert \\
\le \; &C \Vert A \Vert k^d \zeta(N-k)+ 2 \Vert A\Vert_{f_1} f_1(k) 
\end{align*}
for $N,k \in \mathbb{N}$ with $k<N$, uniformly in $t \in I$. 
Using this bound with $k = N- \left\lfloor \frac{N}{2}\right\rfloor$, yields the claim. The boundedness of   $\mathrm{e}^{\mathrm{i}\mathcal{L}^{\Lambda_N}_{S(t)}}$ can be proven in the same way. 
\end{proof}
\subsection{Derivations}
\begin{prop} {\rm (Commutator bounds and convergence of derivations)} \label{tdlofderivations} ~\\
	Let $H_0 \in \mathcal{L}_{I, \zeta, 0  }$ and $v \in \mathcal{V}_I$ both have a   thermodynamic limit. Define the induced family of local derivations as
	\begin{equation*}
	\mathcal{L}_t^{\Lambda_k} : \mathcal{A}_{\Lambda_k} \to \mathcal{A}_{\Lambda_k}\,,\quad A \mapsto \mathcal{L}_t^{\Lambda_k} \Ab{A} := \left[H_0^{\Lambda_k}(t) + V_v^{\Lambda_k}(t), A\right]\,.
	\end{equation*}
	{Then for all $k \in \mathbb{N}$ and for all $X, Y \subset\Lambda_k$, $A \in \mathcal{A}_X$ and $B \in \mathcal{A}_Y$ it holds that }
	\begin{align*}
	\sup_{t \in I}\Vert \mathcal{L}^{\Lambda_k}_t\Ab{A} \Vert &\;\le\; C \, \Vert A \Vert \, \mathrm{diam}(X)^{d+1} , \\
	\sup_{t \in I}\Vert [\mathcal{L}^{\Lambda_k}_t\Ab{A}, B] \Vert &\;\le\; 2\, C \, \Vert A \Vert \,  \Vert B \Vert \, \mathrm{diam}(X)^{d+1}  \zeta(\mathrm{dist}^{\Lambda_k}(X,Y)).
	\end{align*}
	 If $H_0$ and $v$ have a rapid thermodynamic limit with exponent $\gamma \in (0,1)$, then 
		there exist  $\lambda_1 >0$, $\lambda_2 \in (0,1)$, and $C<\infty$,  such that for all $l,k \in \mathbb{N}$ with $l \ge k$, $X \subset \Lambda\left(k \right)$ and $A \in \mathcal{A}_X$
	\begin{align*}
 \sup_{t \in I} \Vert \mathcal{L}_t^{\Lambda_l}\Ab{A}  - \mathcal{L}_t^{\Lambda_k}\Ab{A} \Vert \le \, C \,  \Vert A\Vert \, \mathrm{diam}(X)^{d+1} \ \zeta_{\gamma}\left(\mathrm{dist}^{\Lambda_l}(X, \Lambda_l\setminus  
 \Lambda_{\max\{ \left\lceil k - \lambda_1 k^{\gamma}\right\rceil, \left\lceil\lambda_2 \cdot k\right\rceil\}} )
 \right) \,.
\end{align*}
In every case above, the   constant $C$  only depends on $\zeta$,   $\Vert \Phi_{H_0}\Vert_{I,\zeta, 0 }$, and   $C_v$.
\end{prop}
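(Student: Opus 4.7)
The plan is to split $\mathcal{L}^{\Lambda_k}_t = [H_0^{\Lambda_k}(t), \cdot] + [V_v^{\Lambda_k}(t), \cdot]$ and handle the SLT- and Lipschitz-contributions separately. The SLT part will be treated by unfolding $H_0^{\Lambda_k} = \sum_{Z \subset \Lambda_k}\Phi_{H_0}^{\Lambda_k}(Z)$ and using the $F$-function bounds built into the norm $\|\cdot\|_{\zeta,0}$, while for the Lipschitz part I would invoke the already available Lemma~\ref{lipschitzcomm}.

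For parts (i) and (ii) I would observe that $[\Phi_{H_0}^{\Lambda_k}(Z), A] = 0$ whenever $Z \cap X = \emptyset$, so
\begin{equation*}
\|[H_0^{\Lambda_k}(t), A]\| \;\le\; 2\|A\|\sum_{\substack{Z \subset \Lambda_k:\\ Z \cap X \neq \emptyset}}\|\Phi_{H_0}^{\Lambda_k}(t)(Z)\| \;\le\; 2\|A\|\,|X|\,F_\zeta(0)\,\|\Phi_{H_0}\|_{I,\zeta,0},
\end{equation*}
which together with $|X| \le C\,\mathrm{diam}(X)^d$ and Lemma~\ref{lipschitzcomm} gives the first claim. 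For the double commutator, $[V_v^{\Lambda_k}(t),A] \in \mathcal{A}_X$ is even and therefore commutes with any $B \in \mathcal{A}_Y$ whenever $X \cap Y = \emptyset$; for $X \cap Y \neq \emptyset$ one has $\mathrm{dist}^{\Lambda_k}(X,Y) = 0$ and the trivial bound $2\|[V_v^{\Lambda_k},A]\|\|B\|$ suffices. For the $H_0$ part, only $Z$ intersecting both $X$ and $Y$ contribute, and
\begin{equation*}
\sum_{\substack{Z:\,Z \cap X \neq \emptyset,\\ Z \cap Y \neq \emptyset}}\|\Phi_{H_0}^{\Lambda_k}(Z)\| \;\le\; \|\Phi_{H_0}\|_{I,\zeta,0}\sum_{x \in X}\sum_{y \in Y}F_\zeta(d^{\Lambda_k}(x,y)) \;\le\; C\,|X|\,\zeta(\mathrm{dist}^{\Lambda_k}(X,Y)),
\end{equation*}
where the last step uses monotonicity of $\zeta$ and summability of $\sum_r r^{d-1}(1+r)^{-(d+1)}$ to bound $\sum_{y \in Y} F_\zeta(d^{\Lambda_k}(x,y))$ by $\zeta(\mathrm{dist}^{\Lambda_k}(x,Y))$ up to a finite constant.

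For part (iii) I would set $M := \max\{\lceil k - \lambda_1 k^\gamma\rceil,\lceil \lambda_2 k\rceil\}$ with $\lambda_1$ supplied by Definition~\ref{cauchydefinition} and $\lambda_2 \in (0,1)$ chosen so that $M \le k$ and the rapid-t.d.l.\ condition $k \ge M + \lambda M^\gamma$ still holds for $k$ large (the trivial regime of small $k$ being absorbed into the constant). The decomposition
\begin{equation*}
H_0^{\Lambda_l} - H_0^{\Lambda_k} \;=\; \sum_{Z \subset \Lambda_M}\bigl(\Phi^{\Lambda_l}(Z) - \Phi^{\Lambda_k}(Z)\bigr) \;+\; \sum_{\substack{Z \subset \Lambda_l,\\ Z \not\subset \Lambda_M}}\Phi^{\Lambda_l}(Z) \;-\; \sum_{\substack{Z \subset \Lambda_k,\\ Z \not\subset \Lambda_M}}\Phi^{\Lambda_k}(Z)
\end{equation*}
separates the error into a \emph{bulk} piece and a \emph{boundary} piece. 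The first sum is controlled by the rapid thermodynamic limit via $\|\Phi^{\Lambda_l} - \Phi^{\Lambda_k}\|_{\zeta,0,\Lambda_M} \le 2C\,\zeta_\gamma(M)$, yielding a contribution of order $|X|\,\zeta_\gamma(M)\,\|A\|$ after commuting with $A$. In the remaining two sums every surviving $Z$ must intersect both $X$ and $\Lambda_l \setminus \Lambda_M$; the same $F_\zeta$-summation used in part~(ii), now applied with $Y = \Lambda_l \setminus \Lambda_M$, produces a bound of order $|X|\,\zeta(\mathrm{dist}^{\Lambda_l}(X,\Lambda_l \setminus \Lambda_M))\|A\|$. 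The Lipschitz difference $[V_v^{\Lambda_l} - V_v^{\Lambda_k}, A]$ vanishes identically when $X \subset \Lambda_M$ because $v^{\Lambda_l}$ and $v^{\Lambda_k}$ then agree on $X$ by the rapid t.d.l.\ of $v$; when $X \not\subset \Lambda_M$, $\mathrm{dist}^{\Lambda_l}(X,\Lambda_l \setminus \Lambda_M) = 0$ and Lemma~\ref{lipschitzcomm} yields the required trivial bound.

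The hard part will be merging the two $H_0$-contributions into the single factor $\zeta_\gamma(\mathrm{dist}^{\Lambda_l}(X,\Lambda_l \setminus \Lambda_M))$ as in the stated conclusion. This rests on two monotonicity facts: $\mathrm{dist}^{\Lambda_l}(X,\Lambda_l \setminus \Lambda_M) \le M$ for $X \subset \Lambda_M$, so $\zeta(M^\gamma) \le \zeta(\mathrm{dist}^{\Lambda_l}(X,\Lambda_l \setminus \Lambda_M)^\gamma) = \zeta_\gamma(\mathrm{dist}^{\Lambda_l}(X,\Lambda_l \setminus \Lambda_M))$, together with $\zeta(r) \le \zeta(r^\gamma)$ for $r$ large (the small-$r$ regime again being absorbed into $C$). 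The two arguments in the $\max$ defining $M$ are precisely what is needed to keep $M$ comparable to $k$ uniformly (the term $\lceil\lambda_2 k\rceil$ preventing $M$ from shrinking too fast when $k^\gamma$ is not much smaller than $k$), so the implicit constants stay uniform in $l \ge k$.
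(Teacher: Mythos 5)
Your proposal is correct and is essentially an explicit unpacking of what the paper outsources to cited results: the paper's proof of the first two bounds is a one-line citation to Example~5.4 of \cite{nachtergaele2019quasi} (plus Lemma~\ref{lipschitzcomm}), and the convergence estimate is cited as ``analogous to Theorem~3.8~(ii)'' there with the choice $\Lambda_M = \Lambda_{\lceil k-\lambda_1 k^\gamma\rceil}$. You do the same $F_\zeta$-summation and bulk/boundary split directly, so this is the same approach, just self-contained. Two small points that the write-up should make explicit to be airtight. First, for the double-commutator bound, the reduction ``only $Z$ intersecting both $X$ and $Y$ contribute'' is valid only after restricting to $X\cap Y=\emptyset$; the observation $[\Phi(Z),A]\in\mathcal{A}_{Z\cup X}$ is what forces $Z\cap Y\neq\emptyset$ in that case, and you already handle $X\cap Y\neq\emptyset$ via the trivial bound and $\zeta(0)>0$ — the same dichotomy should be stated for the $H_0$-piece, not only the Lipschitz piece. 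Second, in the third estimate the sum $\sum_{Z\subset\Lambda_k,\,Z\not\subset\Lambda_M}\Phi^{\Lambda_k}(Z)$ naturally produces $\zeta\big(\mathrm{dist}^{\Lambda_k}(X,\Lambda_k\setminus\Lambda_M)\big)$ rather than the $\mathrm{dist}^{\Lambda_l}$ quantity in the claim; one needs the bulk-compatibility condition $d^{\Lambda_k}=d=d^{\Lambda_l}$ on the relevant range (which holds since $X\subset\Lambda_M$ makes the distance at most $M+1\le k\le l$) to identify the two. Otherwise the decomposition, the use of Lemma~\ref{lem:equivalentchar}, and the merging of $\zeta_\gamma(M)$ and $\zeta(\mathrm{dist})$ into a single $\zeta_\gamma(\mathrm{dist})$ via monotonicity are exactly what the cited argument does.
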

\begin{proof}
	Involving Lemma~\ref{lipschitzcomm}, the first and the second estimate are clear from Example~5.4 in \cite{nachtergaele2019quasi}. The first alternative of the third estimate is proven analogous to Theorem~3.8~(ii) in \cite{nachtergaele2019quasi} and choosing $\Lambda_M = \Lambda_{\lceil k - \lambda_1 k^{\gamma}\rceil}$ with $\lambda_1$ from Definition~\ref{cauchydefinition}  resp.~the alternative characterisation from Lemma \ref{lem:equivalentchar}. The second alternative can easily be concluded from the first. This latter form is used in the proof of the lemmata below.
\end{proof}
As a consequence, $(\mathcal{L}_t^{\Lambda_k}(A))_{k \in \mathbb{N}}$ is a Cauchy sequence for all $A \in \mathcal{A}_{\mathrm{loc}}$ uniformly for $t \in I$. The limiting derivation $(\mathcal{L}_t, D(\mathcal{L}_t))$ is closable (see Proposition~3.2.22 and Proposition~3.1.15 in \cite{bratteli2012operator}) and its closure (denoted by the same symbol) is called the bulk derivation generated by $H = H_0+V_v$ at $t \in I$.
 
  Since the finite volume metrics $d^{\Lambda_k}(\cdot, \cdot)$ are compatible in the bulk, we obtain the corresponding statements also for the bulk derivation.  
\begin{cor} {\rm (Infinite volume derivation)} \label{tdlofderivationscorr} ~\\
Under the conditions of Proposition~\ref{tdlofderivations}, there exists   and $C<\infty$,  such that for all $X,Y \in \mathcal{P}_0(\Gamma)$, $A \in \mathcal{A}_X$ and $B \in \mathcal{A}_Y$
\begin{align*}
	\sup_{t \in I}\Vert \mathcal{L}_t\Ab{A}  \Vert &\;\le\; C \, \Vert A \Vert \, \mathrm{diam}(X)^{d+1} ,  \\
\sup_{t \in I}\Vert [\mathcal{L}_t\Ab{A} , B] \Vert &\;\le\; 2\, C \, \Vert A \Vert \,  \Vert B \Vert \, \mathrm{diam}(X)^{d+1}    \zeta(\mathrm{dist}(X,Y)) \,.
\end{align*}
If $H_0$ and $v$ have a rapid thermodynamic limit, there exist
$\lambda_1 >0$, $\lambda_2 \in (0,1)$ such that
\begin{align*}
\sup_{s \in I} \Vert (\mathcal{L}_t  - \mathcal{L}_t^{\Lambda_k})\Ab{A}  \Vert\;\le\;  C \,  \Vert A\Vert \, \mathrm{diam}(X)^{d+1}  \, \zeta_{\gamma}\left(\mathrm{dist}(X, \Gamma\setminus  
\Lambda_{\max\{ \left\lceil k - \lambda_1 k^{\gamma}\right\rceil, \left\lceil\lambda_2 \cdot k\right\rceil\}} )
\right) \,.
\end{align*}
  In every case above, the   constant $C$   depends only on $\zeta$,   $\Vert \Phi_{H_0}\Vert_{I,\zeta, 0}$, and   $C_v$. 
\end{cor}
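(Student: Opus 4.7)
The plan is to derive the corollary directly from the finite-volume statements of Proposition~\ref{tdlofderivations} by passing to the thermodynamic limit $k\to\infty$, using the norm convergence $\mathcal{L}_t^{\Lambda_k}\Ab{A}\to \mathcal{L}_t\Ab{A}$ on $A\in\mathcal{A}_{\mathrm{loc}}$ (which was obtained right after Proposition~\ref{tdlofderivations} from the Cauchy property) together with the bulk-compatibility of the metrics $d^{\Lambda_k}$.

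First I would dispatch the two pointwise bounds. Fix $X,Y\in\mathcal{P}_0(\Gamma)$, $A\in\mathcal{A}_X$, $B\in\mathcal{A}_Y$ and take $k$ so large that $X\cup Y\subset \Lambda_k$ and $\max\{\mathrm{diam}(X),\mathrm{dist}(X,Y)\}\le k$; by the bulk-compatibility condition, for such $k$ the metrics $d$ and $d^{\Lambda_k}$ coincide on the relevant pairs, so $\mathrm{diam}^{\Lambda_k}(X)=\mathrm{diam}(X)$ and $\mathrm{dist}^{\Lambda_k}(X,Y)=\mathrm{dist}(X,Y)$. Proposition~\ref{tdlofderivations} yields the bounds for $\mathcal{L}_t^{\Lambda_k}\Ab{A}$ with a constant $C$ independent of $k$ and $t$. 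Since $\mathcal{L}_t^{\Lambda_k}\Ab{A}\to \mathcal{L}_t\Ab{A}$ in $\|\cdot\|$ (uniformly in $t\in I$ by Proposition~\ref{tdlofderivations}), and $[\,\cdot\,,B]$ is continuous, passing to the limit in $k$ gives the two claimed inequalities. The supremum over $t$ can be exchanged with the limit because the finite-volume bounds hold uniformly in $t$.

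Next I would address the convergence estimate. Now fix $k$ and $A\in\mathcal{A}_X$ with $X\subset\Lambda_k$ and let $M_k:=\max\{\lceil k-\lambda_1 k^\gamma\rceil,\lceil\lambda_2 k\rceil\}$. Proposition~\ref{tdlofderivations} gives, for every $l\ge k$,
\[
\sup_{t\in I}\Vert(\mathcal{L}_t^{\Lambda_l}-\mathcal{L}_t^{\Lambda_k})\Ab{A}\Vert\le C\,\Vert A\Vert\,\mathrm{diam}(X)^{d+1}\,\zeta_\gamma\bigl(\mathrm{dist}^{\Lambda_l}(X,\Lambda_l\setminus\Lambda_{M_k})\bigr).
\]
Letting $l\to\infty$, the left-hand side converges to $\sup_{t}\Vert(\mathcal{L}_t-\mathcal{L}_t^{\Lambda_k})\Ab{A}\Vert$ by the already-established bulk convergence of the derivations on $\mathcal{A}_{\mathrm{loc}}$. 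For the right-hand side I would observe that once $l$ is large enough that $\mathrm{dist}(X,\Gamma\setminus\Lambda_{M_k})\le l$, the bulk-compatibility forces $\mathrm{dist}^{\Lambda_l}(X,\Lambda_l\setminus\Lambda_{M_k})=\mathrm{dist}(X,\Gamma\setminus\Lambda_{M_k})$, so the argument of $\zeta_\gamma$ stabilises. Since $\zeta_\gamma$ is non-increasing, the limiting bound is the claimed one.

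The argument is essentially routine since all the quantitative work is already encoded in Proposition~\ref{tdlofderivations}. The only minor subtlety, and the step I would be most careful with, is the interplay between the boundary metric $d^{\Lambda_l}$ appearing in the finite-volume estimate and the genuine $\ell^1$-metric $d$ on $\Gamma$: one needs to verify in both arguments above that for $k$ (respectively $l$) sufficiently large the bulk-compatibility condition makes the two metrics coincide on all the distances that enter the estimate, so that the limiting inequality is actually expressed in terms of $\mathrm{dist}$ and $\mathrm{diam}$ on $\Gamma$, as stated in the corollary.
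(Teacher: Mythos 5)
Your proposal is correct and follows exactly the route the paper intends (the paper itself disposes of this corollary with a single sentence invoking bulk-compatibility of the metrics, so you have merely spelled out the details): fix the localization region, apply the finite-volume estimates of Proposition~\ref{tdlofderivations}, and pass to the limit using the strong convergence $\mathcal{L}_t^{\Lambda_l}\Ab{A}\to\mathcal{L}_t\Ab{A}$ on $\mathcal{A}_{\mathrm{loc}}$ (uniform in $t$) together with the observation that $d^{\Lambda_l}$ agrees with $d$ on the relevant distances once $l$ is large.
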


 The following two lemmata generalise  Lemma 4.12 and Lemma 4.13 from \cite{moon2019automorphic}. 
Here we use  that whenever $\Psi_{H_0} \in \mathcal{B}^\circ_{I,\exp(-a\cdot),0}$ satisfies (I2), then
 \begin{equation} \label{eq:b}
 \lim_{\delta \to 0}  b(\delta) := \lim_{\delta \to 0}   \sup_{\substack{t,t_0 \in I \\ 0< \vert t-t_0\vert < \delta}} \left\Vert  \frac{\Psi_{H_0}(t)-\Psi_{H_0}(t_0)}{t-t_0} -\dot{\Psi}_{H_0}(t_0)\right\Vert^\circ_{\exp(-a\cdot), 0 }=0\,.
 \end{equation}

\begin{lem} {\rm (Quasi-locality of derivations I)} \label{domainofderivlemma} ~\\Let $H_0 \in \mathcal{L}_{I,\zeta, 0 }$ and $v \in \mathcal{V}_I$ both have a   thermodynamic limit, set $H=H_0+V_v$ and 
	let $\mathcal{L}_s: D(\mathcal{L}_s) \to \mathcal{A}$ be the bulk derivation generated by $H$ at $s \in I$. 
	Let $f_1,f_2: [0,\infty) \to (0,\infty)$ be bounded, non-increasing functions with $\lim\limits_{t \to \infty}f_i(t) =0$, for $i= 1,2$, such that 
	\begin{equation*}
	\sum_{k=1}^{\infty} (k+1)^{d+1} f_1(k) < \infty
	\end{equation*}
	and  
	\begin{equation*}
	\lim\limits_{N \to \infty} \frac{\sum_{k=\lfloor \frac{N}{2}\rfloor}^{\infty}(k+1)^{d+1} f_1(k)}{f_2(N)} = \lim\limits_{N \to \infty} \frac{N^{d+1} \zeta(N-\lfloor\frac{N}{2} \rfloor)}{f_2(N)} =0. 
	\end{equation*}
	  Then $\mathcal{D}_{f_1} \subset D(\mathcal{L}_s)$ and  $\mathcal{L}_t: \mathcal{D}_{f_1} \to  \mathcal{D}_{f_2}$ is a bounded operator and the sequence $\mathcal{L}^{\Lambda_N}_t: \mathcal{D}_{f_1} \to  \mathcal{D}_{f_2}$ of operators   is uniformly bounded, both uniformly in $t \in I$.
	 More precisely, there is a constant $C_{f_1,f_2} > 0 $ such that
	\begin{equation*}
	\sup_{t \in I} \Vert \mathcal{L}_t\Ab{A} \Vert_{f_2} \ \le \ C_{f_1,f_2}\Vert A \Vert_{f_1} \ \ \text{and} \ \ \sup_{N \in \mathbb{N}} \sup_{t \in I}\Vert \mathcal{L}_t^{\Lambda_N}\Ab{A}  \Vert_{f_2} \ \le \ C_{f_1,f_2} \Vert A \Vert_{f_1}
	\end{equation*}
	for all $A \in \mathcal{D}_{f_1}$. Beside the indicated dependence on $f_1$ and $f_2$, the constant $C_{f_1,f_2}$ only depends on $\zeta$,     ${\Vert \Phi_{H_0} \Vert_{I,\zeta, 0 }}$, and   $C_v$. 
	
	In particular, let $H_0$ be given by an infinite volume interaction $\Psi_{H_0} \in \mathcal{B}_{I, \exp(-a \cdot), 0}^\circ$ that satisfies (I2). Then, with $b(\delta)$ defined in \eqref{eq:b}, it follows that
	\begin{align*}
	\sup_{N \in \mathbb{N}} \sup_{\substack{t,t_0 \in I \\ 0< \vert t-t_0\vert < \delta}} \left\Vert  \mathcal{L}^{\Lambda_N}_{\frac{H_0(t) - H_0(t_0)}{t-t_0} - \dot{H}_0(t_0)}\Ab{A}  \right\Vert _{f_2} \;\le\; b(\delta)\, C_{f_1,f_2} \,\Vert A \Vert_{f_1}
	\end{align*}
	and
		\begin{align*}
	\sup_{\substack{t,t_0 \in I \\ 0< \vert t-t_0\vert < \delta}} \left\Vert  \mathcal{L}_{\frac{H_0(t) - H_0(t_0)}{t-t_0} - \dot{H}_0(t_0)}\Ab{A}  \right\Vert _{f_2}\; \le\; b(\delta) \,C_{f_1,f_2} \,\Vert A \Vert_{f_1}.
	\end{align*} 
\end{lem}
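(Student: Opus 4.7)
The plan is to reduce both claims to the strictly local commutator bounds of Proposition~\ref{tdlofderivations} and Corollary~\ref{tdlofderivationscorr} via a telescopic decomposition of $A\in\mathcal{D}_{f_1}$ into uniformly local pieces. For $A\in\mathcal{D}_{f_1}$ I set $A_0:=\mathbb{E}_{\Lambda_0}(A)$ and $A_k:=(\mathbb{E}_{\Lambda_k}-\mathbb{E}_{\Lambda_{k-1}})(A)\in\mathcal{A}_{\Lambda_k}$ for $k\ge 1$; by the definition of $\|\cdot\|_{f_1}$ these pieces satisfy $\|A_k\|\le 2\|A\|_{f_1}f_1(k-1)$, and $A=\sum_{k\ge 0}A_k$ converges in $\mathcal{A}$ since $\|A-\mathbb{E}_{\Lambda_k}(A)\|\le\|A\|_{f_1}f_1(k)\to 0$. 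The first bound of Proposition~\ref{tdlofderivations} gives $\|\mathcal{L}_s A_k\|\le C\|A\|_{f_1}(k+1)^{d+1}f_1(k-1)$, which by the summability of $(k+1)^{d+1}f_1(k)$ yields an absolutely convergent series. Applied to the truncations $A^{(K)}:=\sum_{k=0}^K A_k\in\mathcal{A}_{\rm loc}\subset D(\mathcal{L}_s)$ and using the closedness of $\mathcal{L}_s$, this shows $A\in D(\mathcal{L}_s)$ with $\|\mathcal{L}_s A\|\le C'\|A\|_{f_1}$; running the same argument for $\mathcal{L}_t^{\Lambda_N}$ produces the identical bound, uniformly in $N\in\mathbb{N}$, because the constants in Proposition~\ref{tdlofderivations} are $k$-independent.

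To estimate the full $\|\cdot\|_{f_2}$-seminorm of $\mathcal{L}_s A$, for each $N$ I would split $A=B_N+T_N$ with $B_N:=\mathbb{E}_{\Lambda_{\lfloor N/2\rfloor}}(A)\in\mathcal{A}_{\Lambda_{\lfloor N/2\rfloor}}$ and $T_N:=(1-\mathbb{E}_{\Lambda_{\lfloor N/2\rfloor}})(A)=\sum_{k>\lfloor N/2\rfloor}A_k$. For the tail, the telescopic estimate restricted to $k>\lfloor N/2\rfloor$ gives
\[
\|(1-\mathbb{E}_{\Lambda_N})\mathcal{L}_s T_N\|\le 2\|\mathcal{L}_s T_N\|\le 2C\|A\|_{f_1}\sum_{k\ge\lfloor N/2\rfloor}(k+1)^{d+1}f_1(k).
\]
For the local part, the commutator bound of Proposition~\ref{tdlofderivations}/Corollary~\ref{tdlofderivationscorr} applied with any $C\in\mathcal{A}_{\Gamma\setminus\Lambda_N}$ yields $\|[\mathcal{L}_s B_N,C]\|\le C\|A\|_{f_1}(N+1)^{d+1}\zeta(N-\lfloor N/2\rfloor)\|C\|$, and the characterisation of $(1-\mathbb{E}_{\Lambda_N})X$ through such test commutators (Lemma~C.2 of~\cite{henheikteufel20202}) converts this into
\[
\|(1-\mathbb{E}_{\Lambda_N})\mathcal{L}_s B_N\|\le C''\|A\|_{f_1}(N+1)^{d+1}\zeta(N-\lfloor N/2\rfloor).
\]
Dividing the sum of the two contributions by $f_2(N)$ and invoking the two decay assumptions on $f_2$ produces a finite $\sup_N$, hence $\|\mathcal{L}_s A\|_{f_2}\le C_{f_1,f_2}\|A\|_{f_1}$. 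Every ingredient is $k$-independent, so applying the same splitting to $\mathcal{L}_t^{\Lambda_N}$ delivers the analogous bound uniformly in $N$ and $t\in I$.

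The ``in particular'' part follows by specialising the main estimate. Under (I2) and by the definition of $b(\delta)$ in~\eqref{eq:b}, the difference-quotient interaction $\Psi_{H_0}^{t,t_0}:=(\Psi_{H_0}(t)-\Psi_{H_0}(t_0))/(t-t_0)-\dot\Psi_{H_0}(t_0)$ satisfies $\|\Psi_{H_0}^{t,t_0}\|_{\exp(-a\cdot),0}^\circ\le b(\delta)$ for $0<|t-t_0|<\delta$, and being an infinite volume interaction it automatically has a rapid thermodynamic limit. Applying the main bound above to the derivation generated by $\Psi_{H_0}^{t,t_0}$ (with $V_v=0$) and using that the constant $C_{f_1,f_2}$ depends linearly on $\|\Phi_{H_0}\|_{I,\zeta,0}$ yields the desired $b(\delta)$-scaling. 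The technical point that needs the most care --- and is the main obstacle --- is precisely this linearity: it must be verified by tracking the prefactors through every invocation of Proposition~\ref{tdlofderivations}, using that the derivation enters linearly in the commutator $[H_0+V_v,\cdot]$ and that both the telescopic and the commutator bounds depend linearly on $\|\Phi_{H_0}\|_{I,\zeta,0}+C_v$. A secondary subtle point is arranging the split at $\lfloor N/2\rfloor$ so that the local-part bound uses the fast quasi-local decay $\zeta$ rather than the weaker $f_1$, which is exactly what dictates the second limit assumption on $f_2$.
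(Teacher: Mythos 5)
Your proof follows essentially the same route as the paper: a telescopic decomposition $A = \sum_k (\mathbb{E}_{\Lambda_k} - \mathbb{E}_{\Lambda_{k-1}})\Ab{A}$ together with the commutator bounds of Proposition~\ref{tdlofderivations}/Corollary~\ref{tdlofderivationscorr} and closedness of the derivation to establish $\mathcal{D}_{f_1}\subset D(\mathcal{L}_s)$ and the norm bound, and a split of $A$ at $\lfloor N/2\rfloor$ into a local part and a tail part to estimate the $f_2$-seminorm, converting the commutator estimate into an $(1-\mathbb{E}_{\Lambda_N})$-bound via the conditional-expectation characterisation. Your observation that the ``in particular'' statement follows because the commutator bound, and hence $C_{f_1,f_2}$, is linear in the interaction norm is also the correct (implicit) reading of the paper; the linearity is immediate from $\mathcal{L}_{\Phi}\Ab{A}=[\sum_X \Phi(X),A]$ being linear in $\Phi$ and the Lieb--Robinson-type constants scaling linearly in $\Vert\Phi_{H_0}\Vert_{I,\zeta,0}$, so your flagged ``main obstacle'' is in fact routine.
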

\begin{proof}
	The proof of this Lemma is analogous to the one of Lemma 4.12 in \cite{moon2019automorphic}.  By application of Corollary~\ref{tdlofderivationscorr}, for any $A \in \mathcal{D}_{f_1}$ and $N,M \in \mathbb{N}$ with $N>M$, we have
	\begin{align}
	 \Vert \mathcal{L}_t\circ (\mathbb{E}_{\Lambda_N}  - \mathbb{E}_{\Lambda_M})\Ab{A}  \Vert &\;=\; \left\Vert \sum_{k=M}^{N-1} \mathcal{L}_t\circ (\mathbb{E}_{\Lambda_{k+1}} - \mathbb{E}_{\Lambda_k} )\Ab{A} \right\Vert \nonumber\\ &\;\le\;  C \sum_{k=M}^{N-1} (k+1)^{d+1} \Vert (\mathbb{E}_{\Lambda_{k+1}} - \mathbb{E}_{\Lambda_k})\Ab{A} \Vert \nonumber\\  &\;\le\; 2\,C \left(\sum_{k=M}^{N-1} (k+1)^{d+1} f_1(k) \right) \Vert A \Vert_{f_1}. \label{derivationcauchy}
	\end{align}
	This implies, that $(\mathcal{L}_t\circ \mathbb{E}_{\Lambda_N}\Ab{A} )_{N \in \mathbb{N}}$ with $A \in \mathcal{D}_{f_1}$ is a Cauchy sequence in $\mathcal{A}$, hence there exists a limit. Moreover, $\mathbb{E}_{\Lambda_N}\Ab{A} $ converges to $A$ in $\Vert \cdot \Vert $. Since the derivation is closed, $A \in \mathcal{D}_{f_1}$ belongs to the domain $D(\mathcal{L}_t)$ of $\mathcal{L}_t$ and 
	\begin{equation*}
	\mathcal{L}_t\Ab{A}  = \lim\limits_{N \to \infty} \mathcal{L}_t\circ \mathbb{E}_{\Lambda_N}\Ab{A} .
	\end{equation*}
	Hence, $\mathcal{D}_{f_1} \subset D(\mathcal{L}_t)$. Similarly to the estimate in \eqref{derivationcauchy}, one obtains
	\begin{equation*}
	\Vert \mathcal{L}_t\Ab{A}  \Vert \le C\left(2  \sum_{k=1}^{\infty} (k+1)^{d+1} f_1(k)  + 1\right) \Vert A \Vert_{f_1}
	\end{equation*}
	for any $A \in \mathcal{D}_f$ by considering a limit. 
	Now, we are left to estimate
	\begin{align*}
	 \Vert (1- &\mathbb{E}_{\Lambda_N})\circ \mathcal{L}_t\Ab{A}  \Vert = \lim\limits_{M \to \infty}\Vert  (1- \mathbb{E}_{\Lambda_N})\circ \mathcal{L}_s\circ \mathbb{E}_{\Lambda_M}\Ab{A} \Vert \\
	= \ &\lim_{M \to \infty} \left\Vert \mathcal{L}_t\circ (\mathbb{E}_{\Lambda_M}  - \mathbb{E}_{\Lambda_{\lfloor \frac{N}{2}\rfloor}}  + \mathbb{E}_{\Lambda_{\lfloor \frac{N}{2}\rfloor}} )\Ab{A} -  \mathbb{E}_{\Lambda_N}\circ \mathcal{L}_t\circ (\mathbb{E}_{\Lambda_M} - \mathbb{E}_{\Lambda_{\lfloor \frac{N}{2}\rfloor}} + \mathbb{E}_{\Lambda_{\lfloor \frac{N}{2}\rfloor}})\Ab{A} \right\Vert \\
	\le \  &2 \lim\limits_{M \to \infty}\Vert \mathcal{L}_t\circ (\mathbb{E}_{\Lambda_M}  - \mathbb{E}_{\Lambda_{\lfloor \frac{N}{2}\rfloor}} ) \Ab{A} \Vert  + \Vert  (1 - \mathbb{E}_{\Lambda_N})\circ \mathcal{L}_s\circ \mathbb{E}_{\Lambda_{\lfloor \frac{N}{2}\rfloor}}\Ab{A}  \Vert \\ 
	\le \ &4\,C\Big(\sum_{k=\lfloor \frac{N}{2} \rfloor}^{\infty} (k+1)^{d+1} f_1(k)  + N^{d+1} \zeta(N- \left\lfloor\tfrac{N}{2}\right\rfloor) \Big) \Vert A \Vert_{f_1}. 
	\end{align*}
	In the last line we used Corollary~\ref{tdlofderivationscorr} together with Lemma~C.2 from \cite{henheikteufel20202}. Hence, we have proven the claim. The boundedness of the sequence $\mathcal{L}_t^{\Lambda_N}$ can be proven in the same way.  
\end{proof}
\begin{lem} {\rm (Quasi-locality of derivations II)} \label{convofderivlemma}~\\
	Let $H_0 \in \mathcal{L}_{I,\zeta, 0 }$ and $v \in \mathcal{V}_I$ both have a rapid thermodynamic limit with exponent $\gamma \in (0,1)$, set $H=H_0+V_v$ and 
	let $\mathcal{L}_t: D(\mathcal{L}_t) \to \mathcal{A}$ be the derivation generated by $H$ at $t \in I$. 
	Let $f_1,f_2: [0,\infty) \to (0,\infty)$ be bounded, non-increasing functions  with $\lim\limits_{s\to \infty}f_i(s) =0$, for $i=1,2$, such that 
	\[
	\sum_{k=1}^{\infty} (k+1)^{d+1} \sqrt{f_1}(k) < \infty
	\]
and
	\begin{equation*}
	\lim\limits_{N \to \infty} \frac{\sum_{k=\lfloor \frac{N}{2}\rfloor}^{\infty}(k+1)^{d+1} \sqrt{f_1}(k)}{f_2^2(N)} = \lim\limits_{N \to \infty} \frac{N^{d+1} \zeta_{\gamma}(N - \lfloor \frac{N}{2} \rfloor)}{f_2^2(N)} =0. 
	\end{equation*}
 Then $\mathcal{D}_{f_1} \subset D(\mathcal{L}_t)$, $\mathcal{L}_{t} : \mathcal{D}_{f_1} \to \mathcal{D}_{f_2}$ and $\mathcal{L}^{\Lambda_N}_{t} : \mathcal{D}_{f_1} \to \mathcal{D}_{f_2}$ are bounded operators, and 
	$\mathcal{L}_{t}^{\Lambda_N} \xrightarrow{N \to \infty} \mathcal{L}_{t}$
	in operator norm uniformly for all $t \in I$.
	In particular, for each $A \in \mathcal{D}_{f_1}$, $t \mapsto \mathcal{L}_t(A)$ is continuous with respect to the norm $\Vert \cdot \Vert_{f_2} $.  
\end{lem}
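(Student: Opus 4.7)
The plan is to prove this lemma, which is the derivation analogue of Lemma~\ref{quasilocaltimeevolution2} for dynamics, by combining the uniform boundedness result of Lemma~\ref{domainofderivlemma} with the rapid operator-norm convergence of derivations from Corollary~\ref{tdlofderivationscorr}, via a simple interpolation between $\Vert\cdot\Vert$ and $\Vert\cdot\Vert_{f_2^2}$. The quadratic exponent on $f_2$ and the square root on $f_1$ in the hypotheses are tailored precisely so that this interpolation upgrades operator-norm convergence to $\Vert\cdot\Vert_{f_2}$-convergence.

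\textbf{Step 1 (uniform boundedness).} Since $f_1,f_2$ are bounded, up to multiplicative constants one has $\sqrt{f_1}\gtrsim f_1$ and $f_2\gtrsim f_2^2$, giving continuous embeddings $\mathcal{D}_{f_1}\hookrightarrow \mathcal{D}_{\sqrt{f_1}}$ and $\mathcal{D}_{f_2^2}\hookrightarrow \mathcal{D}_{f_2}$. The hypotheses of the present lemma, together with $\zeta\le \zeta_\gamma$ (since $\zeta$ is non-increasing and $N^\gamma\le N$), imply the hypotheses of Lemma~\ref{domainofderivlemma} for the pair $(\sqrt{f_1}, f_2^2)$. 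Hence $\mathcal{D}_{f_1}\subset \mathcal{D}_{\sqrt{f_1}}\subset D(\mathcal{L}_t)$ and the operators $\mathcal{L}_t, \mathcal{L}_t^{\Lambda_N}:\mathcal{D}_{f_1}\to \mathcal{D}_{f_2}$ are bounded uniformly in $t\in I$ and $N\in\mathbb{N}$.

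\textbf{Step 2 (operator-norm convergence).} The key observation is the elementary interpolation inequality
\[
\Vert Y\Vert_{f_2} \;\le\; \Vert Y\Vert \;+\; \sqrt{2\,\Vert Y\Vert\cdot\Vert Y\Vert_{f_2^2}}\,,
\]
which follows by combining the two pointwise bounds $\Vert (1-\mathbb{E}_{\Lambda_M})Y\Vert\le f_2^2(M)\Vert Y\Vert_{f_2^2}$ and $\Vert (1-\mathbb{E}_{\Lambda_M})Y\Vert\le 2\Vert Y\Vert$ via their geometric mean, then dividing by $f_2(M)$ and taking the supremum over $M$. Apply this with $Y := (\mathcal{L}_t^{\Lambda_N}-\mathcal{L}_t)\Ab{A}$: Step~1 gives $\Vert Y\Vert_{f_2^2}\le 2C\Vert A\Vert_{f_1}$ uniformly in $N$, so it remains to show $\Vert Y\Vert\xrightarrow{N\to\infty} 0$. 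Split $A = A_k + (A-A_k)$ with $A_k := \mathbb{E}_{\Lambda_k}\Ab{A}$ and $k := \lfloor N/2\rfloor$. For the local part, Corollary~\ref{tdlofderivationscorr} gives
\[
\Vert (\mathcal{L}_t^{\Lambda_N}-\mathcal{L}_t)\Ab{A_k}\Vert \;\le\; C\,N^{d+1}\,\zeta_\gamma\bigl(N-\lfloor N/2\rfloor\bigr)\,\Vert A\Vert,
\]
which vanishes faster than any inverse polynomial since $\zeta_\gamma\in\mathcal{S}$. For the tail, idempotence of $\mathbb{E}_{\Lambda_M}$ yields $\Vert A-A_k\Vert_{\sqrt{f_1}}\le 2\sqrt{f_1}(k)\,\Vert A\Vert_{f_1}$, and Step~1 applied separately to $\mathcal{L}_t^{\Lambda_N}$ and $\mathcal{L}_t$ gives $\Vert (\mathcal{L}_t^{\Lambda_N}-\mathcal{L}_t)\Ab{A-A_k}\Vert \le 2C\sqrt{f_1}(\lfloor N/2\rfloor)\Vert A\Vert_{f_1}$. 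Both contributions vanish uniformly in $t$, and the interpolation then forces $\Vert Y\Vert_{f_2}\to 0$ uniformly in $t$, which is the claimed operator-norm convergence.

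\textbf{Step 3 (continuity).} Writing $\mathcal{L}_t-\mathcal{L}_s = \mathcal{L}_{H(t)-H(s)}$, continuity of $t\mapsto \mathcal{L}_t\Ab{A}$ in $\Vert\cdot\Vert_{f_2}$ reduces to showing that the operator norm of $\mathcal{L}_{H(t)-H(s)}:\mathcal{D}_{f_1}\to \mathcal{D}_{f_2}$ vanishes as $s\to t$. This follows by re-running Step~1 for the difference interaction, combined with continuity of $t\mapsto \Psi_{H_0(t)}$ in the relevant interaction norm and of $C_{v(t)-v(s)}$ as $s\to t$, both being consequences of the smooth-and-bounded hypotheses $H_0\in \mathcal{L}_{I,\zeta,0}$ and $v\in\mathcal{V}_I$. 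The main obstacle in the proof is Step~2: Corollary~\ref{tdlofderivationscorr} alone gives fast convergence only in ordinary operator norm, and one must leverage this against the uniform $\Vert\cdot\Vert_{f_2^2}$-boundedness from Step~1 via the interpolation above; the precise pairing of $\sqrt{f_1}$ with $f_2^2$ in the hypotheses is exactly what makes this closing argument succeed.
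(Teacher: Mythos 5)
Your proof is correct and follows essentially the same strategy as the paper: play the fast operator-norm convergence from Corollary~\ref{tdlofderivationscorr} against the uniform $\Vert\cdot\Vert_{f_2^2}$-boundedness obtained from Lemma~\ref{domainofderivlemma} applied with the pair $(\sqrt{f_1},f_2^2)$. Your explicit interpolation inequality $\Vert Y\Vert_{f_2}\le \Vert Y\Vert + \sqrt{2\,\Vert Y\Vert\cdot\Vert Y\Vert_{f_2^2}}$ is simply a cleaner packaging of the paper's case distinction on $M\lessgtr\lfloor\lambda N/2\rfloor$: the two branch bounds the paper writes down are exactly the two quantities whose geometric mean your inequality takes, so the resulting rate differs by at most a square root but vanishes under the same hypotheses. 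The only genuinely different piece is Step~3, where you derive continuity by viewing $\mathcal{L}_t-\mathcal{L}_s$ as the derivation generated by the difference interaction and invoking continuity of $t\mapsto\Phi_{H_0(t)}$ in the $\Vert\cdot\Vert_{\zeta,0}$-norm; this norm continuity (though not continuous differentiability, which is the separate assumption (I2)) does follow from the smooth-and-bounded hypothesis via the fundamental theorem of calculus and Tonelli, but one should note that it is a priori only continuity in the $\zeta^{(1)}$-norm, so a comparison of the weight functions $\zeta^{(0)}$ and $\zeta^{(1)}$ (Lemma~A.1 of \cite{monaco2017adiabatic}) is tacitly needed. The paper instead treats continuity as an immediate consequence of the uniform-in-$t$ operator-norm convergence $\mathcal{L}_t^{\Lambda_N}\to\mathcal{L}_t$ together with the continuity of $t\mapsto\mathcal{L}_t^{\Lambda_N}\Ab{A}$ for fixed $N$, which sidesteps this subtlety.
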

\begin{proof}
	The proof of this Lemma is analogous to the one of Lemma 4.13 in \cite{moon2019automorphic}.
	From Corollary~\ref{tdlofderivationscorr}, combined with Lemma~\ref{domainofderivlemma}, we have 
	 for $A \in \mathcal{D}_{f_1}$ that 
	\begin{align*}
	\left\Vert \left( \mathcal{L}_t\right. \right. - &\left. \left.  \mathcal{L}_s^{\Lambda_N}\right)\Ab{A}  \right\Vert \\ 
	\le \ &\left\Vert \left( \mathcal{L}_t-\mathcal{L}_t^{\Lambda_N}\right) \circ \left(1-\mathbb{E}_{\Lambda_{\lfloor \frac{\lambda N}{2} \rfloor}} \right) \Ab{A} \right\Vert \;+\; \left\Vert \left( \mathcal{L}_t-\mathcal{L}_t^{\Lambda_N}\right)\circ  \mathbb{E}_{\Lambda_{\lfloor \frac{\lambda N}{2} \rfloor}}\Ab{A}  \right\Vert \\ 
	\le \  &2\, C_{\sqrt{f_1},f_2^2} \Vert \left(1-\mathbb{E}_{\Lambda_{\lfloor \frac{\lambda N}{2} \rfloor}}\right)\Ab{A}  \Vert_{\sqrt{f_1}} \;+\; C \Vert A\Vert  N^{d+1}   \, \zeta_{\gamma}(\lambda N -\lfloor \lambda N/2\rfloor ) \\
	\le \ &\left(2\, C_{\sqrt{f_1},f_2^2} \left(f_1(\lfloor \lambda N/2\rfloor) \;+\; \sqrt{f_1}(\lfloor \lambda N/2\rfloor)\right) + C \, N^{d+1}   \, \zeta_{\gamma}(\lambda N - \lfloor \lambda N/2\rfloor)\right) \Vert A\Vert_{f_1},
	\end{align*}
  which vanishes as $N \to \infty$. Therefore,  
	\begin{equation*}
	\lim\limits_{N \to \infty} \sup_{t \in I} \left\Vert   \mathcal{L}_t - \mathcal{L}_t^{\Lambda_N}   \right\Vert_{\mathcal{L}(\mathcal{D}_{f_1},\mathcal{A} )} = 0 \,. 
	\end{equation*} 
	Furthermore, for $A \in \mathcal{D}_{f_1}$, we have by application of Lemma~\ref{domainofderivlemma} that 
	\begin{align*}
	&\hspace{-5pt}\frac{\left\Vert   \left( 1- \mathbb{E}_{\Lambda_M}\right)\circ  \left( \mathcal{L}_t - \mathcal{L}_t^{\Lambda_N}\right) \Ab{A}  \right\Vert }{f_2(M)} \\ 
	& \quad\le  \begin{cases}
	2 \, C_{\sqrt{f_1},f_2^2}  \, f_2(\lfloor \lambda N/2 \rfloor) \Vert A \Vert_{\sqrt{f_1}}, \ \text{for} \  M> \lfloor \frac{\lambda N}{2} \rfloor \\
	\, \\
	\frac{4 C_{\sqrt{f_1},f_2^2} \left(f_1(\lfloor \lambda N/2\rfloor) + \sqrt{f_1}(\lfloor \lambda N/2\rfloor)\right) + 2C \, N^{d+1}   \, \zeta_{\gamma}(\lambda N - \lfloor \lambda N/2 \rfloor)}{f_2(\lfloor \lambda N/2 \rfloor)} \Vert A \Vert_{f_1}, \ \text{for} \ M \le \lfloor \frac{\lambda N}{2} \rfloor,
	\end{cases}
	\end{align*}
	which implies the claim.  
\end{proof}

\subsection{Inverse Liouvillian}\label{inverseL}
Finally, we show that also  $\mathcal{I}_t$  defined in Appendix~\ref{invliouappendix} preserves quasi-locality of observables. Similar statements hold for $\mathcal{J}_t$. 
  The following lemma generalises Lemma~4.6 from \cite{moon2019automorphic}.  
\begin{lem} {\rm (Quasi-locality of the inverse Liouvillian)}  \label{invliouconvlemma} ~\\
	Let $H_0 \in \mathcal{L}_{I, \exp(-a \cdot),0 }$ have a   thermodynamic limit.
 For $f_1, f_2$ as in Lemma~\ref{quasilocaltimeevolution1}, $\mathcal{I}_t : \mathcal{D}_{f_1} \to \mathcal{D}_{f_2}$ is a bounded operator and the sequence $\mathcal{I}^{\Lambda_N}_t : \mathcal{D}_{f_1} \to \mathcal{D}_{f_2}$ is uniformly bounded, both uniformly in $t \in I$. More precisely, there exists a positive constant $C_{f_1,f_2}$ such that 	for all $ A \in \mathcal{D}_{f_1} $
	\begin{align*}\sup_{t \in I}\Vert\mathcal{I}_t\Ab{A}\Vert_{f_2}\le C_{f_1,f_2}\Vert A\Vert_{f_1}\quad\text{and}\quad\sup_{N\in\mathbb{N}}\sup_{t\in I}\Vert\mathcal{I}_t^{\Lambda_N}\Ab{A}\Vert_{f_2}\le C_{f_1,f_2}\Vert A\Vert_{f_1}\,.\end{align*}
  	Beside the indicated dependence on $f_1$ and $f_2$, the constant $C_{f_1,f_2}$ only depends on $a$ and   ${\Vert \Phi_0 \Vert_{I,\exp(-a \, \cdot ), 0 }}$.
	
	If $H_0$ has a rapid thermodynamic limit with exponent $\gamma \in (0,1)$ and if
  the functions $f_1$and  $f_2$  satisfy, in addition,  the requirements of Lemma~\ref{quasilocaltimeevolution2}, then ${\mathcal{I}_t^{\Lambda_N} \xrightarrow{N \to \infty} \mathcal{I}_t}$ in operator norm uniformly in $t \in I$.  
	
	The above statements also hold for the derivatives $\frac{\D^k}{\D t^k} \mathcal{I}_t$ resp.\ $\frac{\D^k}{\D t^k} \mathcal{I}^{\Lambda_N}_t$ after suitably replacing the conditions on $f_1$ and $f_2$ from Lemma~\ref{quasilocaltimeevolution1} and Lemma~\ref{quasilocaltimeevolution2}.  
\end{lem}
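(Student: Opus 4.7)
The plan is to view $\mathcal{I}_t$ as a Bochner integral of the frozen-time dynamics $u\mapsto \E^{\I u \mathcal{L}_{H_0(t)}}$ weighted by $w(v)$, so that the quasi-locality properties of the dynamics established in Lemma~\ref{quasilocaltimeevolution1} and Lemma~\ref{quasilocaltimeevolution2} lift to $\mathcal{I}_t$ after integration against $w$. For each fixed $t\in I$, the frozen Hamiltonian $H_0(t)$ (with no Lipschitz potential) satisfies the hypotheses of those lemmata, and all constants appearing in them depend on $H_0$ only through the interaction norm $\Vert \Phi_{H_0}\Vert_{I,\exp(-a\cdot),0}$ which is itself a supremum over $I$; hence uniformity in $t\in I$ of every ensuing bound will be automatic.

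For the boundedness statement, Lemma~\ref{quasilocaltimeevolution1} applied to the frozen dynamics yields $\Vert \E^{\I u \mathcal{L}_{H_0(t)}}\Ab{A}\Vert_{f_2}\le b_{f_1,f_2}(|u|)\Vert A\Vert_{f_1}$, and the same bound with the same $b_{f_1,f_2}$ dominates every $\E^{\I u \mathcal{L}_{H_0(t)}^{\Lambda_N}}$. Monotonicity of $b_{f_1,f_2}$ then gives
\[
\Vert \mathcal{I}_t\Ab{A}\Vert_{f_2} \;\le\; \int_\R \D v\, w(v)\int_0^{|v|}\D u\, b_{f_1,f_2}(u)\,\Vert A\Vert_{f_1} \;\le\; \Vert A\Vert_{f_1}\int_\R \D v\,|v|\, w(v)\, b_{f_1,f_2}(|v|),
\]
and the last integral is finite by the concluding integrability assertion of Lemma~\ref{quasilocaltimeevolution1}. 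The identical computation applies to $\mathcal{I}_t^{\Lambda_N}$ with constants independent of $N$, which is the desired uniform boundedness.

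For operator-norm convergence under the rapid-thermodynamic-limit hypothesis, I would split
\[
\Vert (\mathcal{I}_t^{\Lambda_N}-\mathcal{I}_t)\Ab{A}\Vert_{f_2} \;\le\; \int_\R \D v\, w(v)\int_0^{|v|}\D u\, \Vert (\E^{\I u\mathcal{L}^{\Lambda_N}_{H_0(t)}}-\E^{\I u\mathcal{L}_{H_0(t)}})\Ab{A}\Vert_{f_2}
\]
at a radius $R>0$. The tail $|v|>R$ is bounded above by $2\Vert A\Vert_{f_1}\int_{|v|>R}|v|\,w(v)\,b_{f_1,f_2}(|v|)\D v$, which is arbitrarily small for $R$ large. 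For the bulk $|v|\le R$, Lemma~\ref{quasilocaltimeevolution2} gives convergence of $\E^{\I u\mathcal{L}^{\Lambda_N}_{H_0(t)}}$ to $\E^{\I u\mathcal{L}_{H_0(t)}}$ at a rate uniform in $u\in[-R,R]$ and in $t\in I$ (again because the rate depends on $H_0$ only through its uniform interaction norm), so a standard $\varepsilon/2$ argument---fix $R$, then send $N\to\infty$---concludes. The main technical obstacle is precisely this splitting: the rate from Lemma~\ref{quasilocaltimeevolution2} degrades exponentially in $|u|$ while $w$ decays only faster than any polynomial, so limit and integral cannot be exchanged via dominated convergence with a uniform rate, and some form of tail decoupling is unavoidable.

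For the derivatives $\frac{\D^k}{\D t^k}\mathcal{I}_t$, I would differentiate under the Bochner integral using the iterated Duhamel formula
\[
\tfrac{\D}{\D t}\E^{\I u\mathcal{L}_{H_0(t)}}\Ab{A} \;=\; \I\int_0^u\D r\,\E^{\I r\mathcal{L}_{H_0(t)}}\circ \mathcal{L}_{\dot H_0(t)}\circ \E^{\I(u-r)\mathcal{L}_{H_0(t)}}\Ab{A}
\]
applied $k$ times, which introduces up to $k$ nested integrals over sub-intervals of $[0,u]$ and compositions with derivations $\mathcal{L}_{H_0^{(j)}(t)}$ for $j=1,\dots,k$. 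Each such derivation is a bounded map between suitable $f$-spaces by Lemma~\ref{domainofderivlemma} and converges in operator norm to its bulk limit by Lemma~\ref{convofderivlemma}, at the cost of chaining through a sequence $f_1\prec\tilde f_1\prec\cdots\prec\tilde f_k\prec f_2$ of intermediate decay functions; this chain is precisely the strengthening of the hypotheses on $f_1,f_2$ alluded to in the statement, and existence of such chains is guaranteed by Lemma~\ref{SeqLem}. The polynomial factor $|u|^k$ picked up from the nested integration is absorbed by the faster-than-polynomial decay of $w$, so that the boundedness and convergence arguments of the previous paragraphs carry through verbatim with integrals of the form $\int|v|^{k+1}w(v)b_{\tilde f_k,f_2}(|v|)\D v$ replacing $\int|v|w(v)b_{f_1,f_2}(|v|)\D v$.
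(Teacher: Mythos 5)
Your argument is essentially the same as the paper's: boundedness follows from Lemma~\ref{quasilocaltimeevolution1} together with the integrability of $s\mapsto s\,w_g(s)\,b_{f_1,f_2}(s)$, convergence follows from Lemma~\ref{quasilocaltimeevolution2}, and the derivative case is handled via an iterated Duhamel expansion chained through intermediate weight functions using Lemmata~\ref{quasilocaltimeevolution1}, \ref{quasilocaltimeevolution2}, \ref{domainofderivlemma}, \ref{convofderivlemma}.

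One corrective remark: your claim that limit and integral ``cannot be exchanged via dominated convergence'' and that tail decoupling is ``unavoidable'' is a misreading of the hypotheses of the theorem. The paper invokes the dominated convergence theorem directly, and it applies: Lemma~\ref{quasilocaltimeevolution1} furnishes an $N$- and $t$-independent integrable majorant $2|v|w(v)b_{f_1,f_2}(|v|)\Vert A\Vert_{f_1}$, and Lemma~\ref{quasilocaltimeevolution2} gives pointwise-in-$v$ (indeed uniform on compacts in $v$ and uniform in $t\in I$, since the rate depends on $H_0$ only through $\Vert\Phi_{H_0}\Vert_{I,\exp(-a\cdot),0}$) convergence of the integrand; dominated convergence does not require the pointwise rate to be uniformly controlled against the majorant. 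Your explicit split at a radius $R$ merely reproduces the elementary proof of dominated convergence in this setting, so the conclusion is the same, but the stated obstruction does not exist.
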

\begin{proof}
	The first part of this Lemma is a simple consequence of Lemma~\ref{quasilocaltimeevolution1} using the properties of the function $b_{f_1,f_2}$. 
	The second part follows from Lemma~\ref{quasilocaltimeevolution2} and Lemma~\ref{quasilocaltimeevolution1} in combination with the dominated convergence theorem. 
	
 The statements on the derivatives follow with the aid of Lemma \ref{quasilocaltimeevolution1}, Lemma \ref{quasilocaltimeevolution2}, Lemma~\ref{domainofderivlemma}, and Lemma~\ref{convofderivlemma} by noticing that the first derivative is given by
		\begin{equation*}
	\frac{\mathrm{d}}{\mathrm{d}t} \mathcal{I}_t\Ab{A}  = \mathrm{i} \int_{\mathbb{R}} \mathrm{d}s \ w(s) \, s^2 \int_{0}^{1} \D u \, u \int_{0}^{1} \D r \ \mathrm{e}^{\mathrm{i}(1-r)us \mathcal{L}_{H_0(t)}} \circ \mathcal{L}_{\dot{H}(t)} \circ \mathrm{e}^{\mathrm{i}rus \mathcal{L}_{H_0(t)}}\Ab{A} 
	\end{equation*}
	and the higher derivatives have a similar form. More precisely, for this first derivative, sufficient conditions on $f_1$ and $f_2$ for the boundedness of $\frac{\mathrm{d}}{\mathrm{d}t} \mathcal{I}_t : \mathcal{D}_{f_1} \to \mathcal{D}_{f_2}$ are given by
		\[
	\int_{0}^\infty \mathrm{d}s \ w_g(s) \frac{ (2 s)^2 }{f_2(4v_a s) \cdot \sqrt{f_1}(4v_as)} < \infty\,, 
	\]
\[
	\lim\limits_{N \to \infty} \frac{\sum_{k=\lfloor \frac{N}{2}\rfloor}^{\infty}(k+1)^{d+1} \sqrt{f_1}(k)}{f_2^2(N)} = \lim\limits_{N \to \infty} \frac{N^{d+1} \E^{-a\frac{N}{4}}}{f_2^2(N)} =0. 
\]
This follows by application of Lemma \ref{quasilocaltimeevolution1} with $(f_1,\sqrt{f_1})$, Lemma \ref{domainofderivlemma} with $(\sqrt{f_1}, f_2^2)$, and again Lemma \ref{quasilocaltimeevolution1} with $(f_2^2,f_2)$. For higher derivatives, the powers of $f_1$ in the conditions above get (polynomially) decreased, whereas the powers of $f_2$ and $ t $ get (polynomially) increased. Using Lemma \ref{convofderivlemma}, one can easily establish sufficient conditions on $f_1$ and $f_2$ for the convergence ${\frac{\mathrm{d}^k}{\mathrm{d}t^k} \mathcal{I}_t^{\Lambda_N} \xrightarrow{N \to \infty} \frac{\mathrm{d}^k}{\mathrm{d}t^k} \mathcal{I}_t}$ in operator norm involving even higher resp.~lower powers of $f_2$, $s$, and $f_1$.  
\end{proof}

\subsection{Sequences of suitable weight functions}

\begin{lem}\label{SeqLem}
  Let $v_a>0$ (Lieb-Robinson velocity),  $\zeta \in \mathcal{S}$, $R\in (1,\infty)$ and define the set $\mathcal{S}_\zeta^R$ of `admissible' functions as
\begin{equation*}
\mathcal{S}_\zeta^R:=\left\{ f \in \mathcal{S}\,\Big|\, \int_{0}^\infty \hspace{-1mm} \mathrm{d}t \; w_g(s)
\left(\frac{ s }{f(4v_as) }\right)^R < \infty, \   \lim\limits_{N \to \infty} \frac{N^{d+1} \,\zeta(N)}{ f(N) ^R} =0 \right\} \,.
\end{equation*}
Then there exists an infinite sequence $\left(f_j\right)_{j \in \mathbb{N}} \subset \mathcal{S}_\zeta^{R}$ satisfying 
\begin{equation*}
\lim\limits_{N \to \infty} \frac{\sum_{k=\lfloor \frac{N}{2}\rfloor}^{\infty}(k+1)^{d+1} \, f_j(k) ^\alpha}{ f_{j+1}(N)^\beta} = 0 \quad \forall \alpha, \beta \in \left[\tfrac{1}{R},R\right] \quad \mathrm{and} \quad \forall j \in \mathbb{N}. 
\end{equation*}
\end{lem}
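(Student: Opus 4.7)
The plan is to construct the sequence inductively. Pick any seed $f_1 \in \mathcal{S}_\zeta^R$ (whose existence is implicit in the statement; if $\mathcal{S}_\zeta^R = \emptyset$, the claim is vacuous) and recursively set $f_{j+1} := f_j^\delta$ for a fixed $\delta \in (0, 1/(2R^2))$, so that $f_j = f_1^{\delta^{j-1}}$. To handle the quantifier over $\alpha, \beta \in [1/R, R]$, I note that for $N$ sufficiently large we may assume $f_j(N), f_{j+1}(N) \leq 1$, and then $x \mapsto x^\alpha$ is decreasing in $\alpha$ on $(0,1]$. Consequently the worst case is $\alpha = 1/R$ (maximising the numerator) and $\beta = R$ (minimising the denominator), so it suffices to show
\begin{equation*}
\lim_{N \to \infty}\frac{\sum_{k \geq \lfloor N/2 \rfloor}(k+1)^{d+1} f_j(k)^{1/R}}{f_{j+1}(N)^R} = 0 \qquad \text{for every } j \in \mathbb{N}.
\end{equation*}

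As a positive power of $f_1 \in \mathcal{S}$, each $f_{j+1} = f_1^{\delta^j}$ is bounded, non-increasing, tends to zero faster than any polynomial, and inherits the log-superadditivity of $f_1$, so $f_{j+1} \in \mathcal{S}$. Since $\delta^j < 1$ and $f_1(N) \leq 1$ eventually, $f_{j+1}(N)^R = f_1(N)^{R\delta^j}$ decays strictly more slowly than $f_1(N)^R$; therefore both $N^{d+1}\zeta(N)/f_{j+1}(N)^R \to 0$ and $\int_0^\infty w_g(s)(s/f_{j+1}(4v_as))^R\,ds < \infty$ follow a fortiori from the corresponding statements for $f_1 \in \mathcal{S}_\zeta^R$. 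Hence $f_{j+1} \in \mathcal{S}_\zeta^R$.

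For the ratio condition, the log-superadditivity of $f_1$ gives $f_1(2m) \geq f_1(m)^2$, which together with monotonicity and the bound $f_1(N-1)/f_1(N) \leq 1/f_1(1)$ (a direct consequence of $f_1(N) \geq f_1(N-1)f_1(1)$) yields $f_1(\lfloor N/2\rfloor)^x \lesssim f_1(N)^{x/2}$ for $x \geq 0$, uniformly over $N$ and over $x$ in a bounded range. For any $0 < \epsilon < \delta^{j-1}/R$, splitting $f_1(k)^{\delta^{j-1}/R} = f_1(k)^{\delta^{j-1}/R - \epsilon} f_1(k)^\epsilon$ and bounding the first factor by its maximum value over $k \geq \lfloor N/2\rfloor$ yields
\begin{equation*}
\sum_{k \geq \lfloor N/2\rfloor}(k+1)^{d+1} f_1(k)^{\delta^{j-1}/R} \;\leq\; C_\epsilon\, f_1(\lfloor N/2\rfloor)^{\delta^{j-1}/R - \epsilon} \;\leq\; C_\epsilon\, f_1(N)^{(\delta^{j-1}/R - \epsilon)/2},
\end{equation*}
with $C_\epsilon := \sum_{k \geq 0}(k+1)^{d+1}f_1(k)^\epsilon < \infty$ since $f_1 \in \mathcal{S}$. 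Since $f_{j+1}(N)^R = f_1(N)^{R\delta^j}$, the ratio is dominated by $C_\epsilon\, f_1(N)^{\delta^{j-1}(1/(2R) - R\delta) - \epsilon/2}$. The choice $\delta < 1/(2R^2)$ makes the coefficient $1/(2R) - R\delta$ strictly positive, so picking $\epsilon$ small enough (depending on $j$) ensures a positive exponent on $f_1(N)$, whence the ratio tends to zero.

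The main subtlety is choosing $\delta$ once and for all so that the ratio exponent stays positive for every $j$ simultaneously; the bound $\delta < 1/(2R^2)$ achieves this uniformly in $j$, with $\epsilon$ subsequently adjusted per $j$. Everything else reduces to routine manipulations: log-superadditivity of $f_1$ controls the tail sum, and shrinking the exponent of a positive $\mathcal{S}$-function only eases both defining conditions of $\mathcal{S}_\zeta^R$, so the inductive step is self-sustaining.
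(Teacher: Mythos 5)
Your construction is essentially the one in the paper: take geometric powers $f_j = f^{c^{j-1}}$ of a single seed $f$ with a sufficiently small fixed ratio $c$ (the paper takes $c = 1/(5R^2)$, which lies in your range $(0,1/(2R^2))$), reduce the quantifier over $(\alpha,\beta)$ to the worst-case pair $(1/R,R)$ by monotonicity of $x\mapsto x^t$ on $(0,1]$, use log-superadditivity to pass from $f(\lfloor N/2\rfloor)$ to $f(N)$, and split off a convergent tail (the paper splits at the fixed exponent $1/(2R)$ relative to $f_j$; you split at a small $j$-dependent $\epsilon$ relative to $f_1$; both leave a strictly positive residual exponent). The check that positive powers of $f$ remain in $\mathcal{S}_\zeta^R$ is also as in the paper.

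There is one genuine gap: you cannot dismiss the case $\mathcal{S}_\zeta^R=\emptyset$ as making the claim ``vacuous.'' The lemma asserts, unconditionally, that an infinite sequence in $\mathcal{S}_\zeta^R$ exists; if $\mathcal{S}_\zeta^R$ were empty, that assertion would simply be false. Establishing the existence of a seed is therefore part of the proof, and it is not automatic: the two defining conditions pull in opposite directions, since $f$ must decay slowly enough for the $w_g$-integral to converge but fast enough relative to the given $\zeta$ for the second ratio to vanish, so a valid seed depends on $\zeta$. The paper handles this by citing Lemma~A.1 of \cite{monaco2017adiabatic}; your proof needs to either invoke such a result or construct a seed explicitly.
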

\begin{proof}
According to Lemma~A.1 from \cite{monaco2017adiabatic}, $\mathcal{S}_\zeta^R $ is non-empty. So, pick any $f \in \mathcal{S}_\zeta^R$ and set
\begin{equation*}
f_j(s) :=  f(s) ^{\frac{1}{ (5R^2)^j } }
\end{equation*}
for any $j \in \mathbb{N}$. Obviously, $f_j \in \mathcal{S}_\zeta^R$ and we estimate
\begin{align*}
 \frac{\sum_{k=\lfloor \frac{N}{2}\rfloor}^{\infty}(k+1)^{d+1}   f_j(k) ^\alpha}{ f_{j+1}(N) ^\beta}  &
\;\le\; \left( \sum_{k=\lfloor \frac{N}{2}\rfloor}^{\infty}(k+1)^{d+1}  f_j(k) ^{\frac{1}{2R}} \right) \frac{ f_{j}\left(\lfloor N/2 \rfloor \right) ^{\frac{1}{2R}}}{ f_{j+1}(N) ^R} \\
&\;\le\; C_j \, \frac{ f_{j}\left(N/2 \right) ^{\frac{1}{2R}}}{ f_{j+1}(N/2) ^{2R}}\; =\; C_j \, f(N/2)^{\frac{1}{(5R^2)^j} { \frac{1}{10R}}}\xrightarrow{N \to \infty} 0
\end{align*}
uniformly in $\alpha, \beta \in \left[\tfrac{1}{R},R\right]$ and for any $j \in \mathbb{N}$. In the second inequality, we used that  $f_{j+1}$ is logarithmically superadditive  and $ f_j(1/2) ^{\frac{1}{2R}} > 0$. 
\end{proof}

\section{Operations preserving the rapid thermodynamic limit}\label{app:C}
In this appendix, we verify that operator families that are obtained by taking commutators and inverse Liouvillians (see Appendix~\ref{invliouappendix}) of operator families and Lipschitz potentials having a rapid thermodynamic limit, again have a rapid thermodynamic limit with the same exponent. The following statements are  slight modifications of technical lemmata from~\cite{henheikteufel20202}. 
\begin{lem}{\rm (Alternative characterisation of having a rapid thermodynamic limit)} \label{lem:equivalentchar} \\ 
	The time-dependent interaction $\Phi \in \mathcal{B}_{I,\zeta,n}$ has a rapid thermodynamic limit if and only if it satisfies the following Cauchy property:
	\begin{align*}
\forall i \in \mathbb{N}_0 \ \ &\exists \lambda,  C>0 \ \ \forall M\in \N\ \ \forall \,   k,l \, \ge \, M + \lambda M^{\gamma} : \\ &\sup_{t \in I}\left\Vert  \frac{\D^i}{\D t^i}\left( \Phi^{\Lambda_l}- \Phi^{ \Lambda_k}\right) (t) \, \right\Vert_{\zeta^{(i)}, n,\Lambda_M}  \le C \, \zeta^{(i)}(M^{\gamma}) =: C \zeta^{(i)}_{\gamma}(M). \nonumber
\end{align*}
\end{lem}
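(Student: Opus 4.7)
The forward implication is essentially the triangle inequality. If $\Phi \stackrel{\rm r.t.d.}{\rightarrow} \Psi$ with constants $\lambda, C, \zeta^{(i)}$ as in Definition~\ref{cauchydefinition}(a), then for any $k,l \geq M + \lambda M^\gamma$ one estimates
\[
\Bigl\| \tfrac{\D^i}{\D t^i}(\Phi^{\Lambda_l}-\Phi^{\Lambda_k})(t) \Bigr\|_{\zeta^{(i)},n,\Lambda_M} \leq \Bigl\| \tfrac{\D^i}{\D t^i}(\Psi-\Phi^{\Lambda_l})(t) \Bigr\|_{\zeta^{(i)},n,\Lambda_M} + \Bigl\| \tfrac{\D^i}{\D t^i}(\Psi-\Phi^{\Lambda_k})(t) \Bigr\|_{\zeta^{(i)},n,\Lambda_M} \leq 2C\,\zeta^{(i)}_\gamma(M),
\]
uniformly in $t \in I$, so the Cauchy condition holds with constant $2C$ and the same $\lambda$.

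For the reverse implication, the plan is to produce $\Psi$ as a pointwise limit of the $\Phi^{\Lambda_k}(X)$ and then to transfer the Cauchy estimate to an estimate against $\Psi$ by Fatou's lemma. Fix $X\in\mathcal{P}_0(\Gamma)$, pick $M$ with $X\subset\Lambda_M$, and observe that for any $x,y\in X\subset\Lambda_M$ and any $k,l\geq M+\lambda M^\gamma$ the Cauchy hypothesis yields, for every $i\in\N_0$,
\[
\Bigl\| \tfrac{\D^i}{\D t^i}(\Phi^{\Lambda_l}-\Phi^{\Lambda_k})(t)(X) \Bigr\| \;\leq\; \frac{F_{\zeta^{(i)}}(d(x,y))}{\mathrm{diam}(X)^n}\, C\,\zeta^{(i)}_\gamma(M),
\]
uniformly in $t\in I$. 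Since $\zeta^{(i)}_\gamma(M)\to 0$ as $M\to\infty$, the sequences $\bigl(\tfrac{\D^i}{\D t^i}\Phi^{\Lambda_k}(t)(X)\bigr)_k$ are Cauchy in operator norm, uniformly in $t\in I$. This defines limits $\Psi_i(t)(X) := \lim_{k\to\infty}\tfrac{\D^i}{\D t^i}\Phi^{\Lambda_k}(t)(X)$, and by the standard theorem on uniform limits of derivatives each $t\mapsto \Psi_0(t)(X)$ is smooth with $\tfrac{\D^i}{\D t^i}\Psi_0(t)(X) = \Psi_i(t)(X)$. Write $\Psi := \Psi_0$. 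The values $\Psi(t)(X)$ are self-adjoint and lie in $\mathcal{A}_X^N$ since each of these properties is closed under the operator-norm limit.

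To obtain the rapid-thermodynamic-limit bound, fix $i$ and $M$, and for any $x,y\in\Lambda_M$ and $k \geq M+\lambda M^\gamma$, apply Fatou's lemma (all summands are non-negative) to the Cauchy inequality with $l\to\infty$:
\[
\sum_{\substack{X\subset\Lambda_M \\ x,y\in X}} \mathrm{diam}(X)^n \frac{\bigl\|\tfrac{\D^i}{\D t^i}(\Psi-\Phi^{\Lambda_k})(t)(X)\bigr\|}{F_{\zeta^{(i)}}(d(x,y))} \;\leq\; \liminf_{l\to\infty}\Bigl\|\tfrac{\D^i}{\D t^i}(\Phi^{\Lambda_l}-\Phi^{\Lambda_k})(t)\Bigr\|_{\zeta^{(i)},n,\Lambda_M} \leq C\,\zeta^{(i)}_\gamma(M),
\]
which, after taking the supremum over $x,y\in\Lambda_M$ and $t\in I$, is exactly \eqref{rapidtdl}. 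A similar Fatou argument, this time also letting $M\to\infty$ and using the bulk compatibility $d^{\Lambda_k}(x,y)=d(x,y)$ for $k$ sufficiently large (together with the uniform bound $\|\Phi\|_{I,\zeta,n} <\infty$ and boundedness of $\zeta$), gives $\Psi\in\mathcal{B}_{I,\zeta,n}^\circ$ and, analogously, the corresponding bound for the derivatives.

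The main technical subtlety, and the only place where more than bookkeeping is required, is the passage from the Cauchy bound to the bound against $\Psi$: one must ensure that convergence of $\Phi^{\Lambda_l}(X)$ to $\Psi(X)$ in operator norm (which is immediate for each fixed $X$) can be upgraded to convergence inside the summed $\|\cdot\|_{\zeta^{(i)},n,\Lambda_M}$-expression. Fortunately the summand is non-negative, so no uniform integrability is needed and Fatou suffices; the same monotone/Fatou argument handles the extension to all of $\Gamma$ needed for $\Psi\in\mathcal{B}_{I,\zeta,n}^\circ$. The compatibility condition on the metrics $d^{\Lambda_k}$ is used only to identify $F_{\zeta^{(i)}}(d^{\Lambda_k}(x,y))$ with $F_{\zeta^{(i)}}(d(x,y))$ for $x,y$ in a fixed $\Lambda_M$ and $k$ large.
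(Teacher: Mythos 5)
Your proposal is correct and, as far as can be told from the paper's one-line reference to Lemma~D.1 of the companion paper, it follows the natural (and presumably the same) route: the forward direction is the triangle inequality, and for the reverse direction you extract operator-norm Cauchy sequences term by term, assemble the limiting infinite volume interaction $\Psi$, and then pass to the limit in the Cauchy bound to get the rapid--thermodynamic--limit estimate against $\Psi$. A small cosmetic remark: the invocation of Fatou's lemma for the sum over $X\subset\Lambda_M$ is overkill, since $\Lambda_M$ is finite and hence that sum has finitely many terms, so ordinary continuity of a finite sum under $l\to\infty$ suffices; Fatou (or monotone convergence) is genuinely needed only at the later step where you let $M\to\infty$ to conclude $\Psi\in\mathcal{B}^\circ_{I,\zeta,n}$, and there it is correctly paired with the bulk-compatibility identity $d^{\Lambda_k}=d$ on $\Lambda_M$ for $k$ large, which is precisely what lets you transfer the uniform bound $\Vert\Phi\Vert_{I,\zeta,n}<\infty$ (periodic metric) to the open-boundary norm of $\Psi$.
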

\begin{proof} 
The proof is completely analogous to the one of Lemma D.1 in \cite{henheikteufel20202}. 
\end{proof}
\begin{lem} {\rm (Commutator of SLT-operators)} \label{cauchy1}~\\
	Let $A, B \in \mathcal{L}_{I,\zeta, n+d }$ have a rapid thermodynamic limit with exponent $\gamma \in (0,1)$. Then the commutator $[A,B] \in \mathcal{L}_{I,\zeta, n }$ also has a rapid thermodynamic limit with exponent $\gamma $. In particular, if $A,B \in \mathcal{L}_{I,\mathcal{S}, \infty}$ both have a thermodynamic limit with exponent $\gamma $, then $[A,B] \in \mathcal{L}_{I,\mathcal{S}, \infty }$ also has a thermodynamic limit with exponent $\gamma $. 
\end{lem}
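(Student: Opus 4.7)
The plan is to write down the commutator's interaction explicitly, decompose the deviation from the infinite-volume limit via an algebraic telescoping identity, and then bound each piece by the standard \SLT\ commutator estimate carried through the $\Lambda_M$-restriction. Define the candidate infinite-volume interaction for $[A,B]$ and its finite-volume counterpart by
$$\Psi_{[A,B]}(Z) := \sum_{\substack{X \cup Y = Z \\ X \cap Y \neq \emptyset}} \bigl[\Psi_A(X), \Psi_B(Y)\bigr]\,,\qquad \Phi_{[A,B]}^{\Lambda_k}(Z) := \sum_{\substack{X,Y \subset \Lambda_k \\ X \cup Y = Z,\ X \cap Y \neq \emptyset}} \bigl[\Phi_A^{\Lambda_k}(X),\Phi_B^{\Lambda_k}(Y)\bigr]\,.$$
For $Z \subset \Lambda_M$ the constraint $X \cup Y = Z$ forces $X, Y \subset \Lambda_M$, and the identity $[a,b]-[c,d]=[a-c,b]+[c,b-d]$ yields the term-by-term decomposition
$$\Psi_{[A,B]}(Z) - \Phi_{[A,B]}^{\Lambda_k}(Z) = \sum_{\substack{X \cup Y = Z \\ X \cap Y \neq \emptyset}} \Bigl(\bigl[\Psi_A(X) - \Phi_A^{\Lambda_k}(X),\,\Psi_B(Y)\bigr] + \bigl[\Phi_A^{\Lambda_k}(X),\,\Psi_B(Y) - \Phi_B^{\Lambda_k}(Y)\bigr]\Bigr)\,.$$

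Next I will take the $\Vert \cdot \Vert_{\zeta, n, \Lambda_M}$-norm of this difference and invoke the standard commutator-of-interactions estimate. Fixing $x,y \in \Lambda_M$, the double sum is split by whether $x,y \in X$, $x,y \in Y$, or one lies in each; in the mixed case an intermediate site $u \in X \cap Y$ is inserted to exploit the convolution bound $\sum_u F_\zeta(d(x,u))\,F_\zeta(d(u,y)) \le C_\zeta F_\zeta(d(x,y))$. Bulk-compatibility of the metrics gives $d^{\Lambda_k} = d$ on $\Lambda_M$ once $k \ge M + \lambda M^\gamma$, so all relevant norms are consistent. The inequality $\mathrm{diam}(X \cup Y)^n \le 2^n(\mathrm{diam}(X)^n + \mathrm{diam}(Y)^n)$, valid since $X \cap Y \neq \emptyset$, absorbs the diameter weight, and the conversion of $\mathrm{diam}(X)^d$-factors against $F_\zeta$ produces the remaining site-sums. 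Routine bookkeeping then gives
$$\bigl\Vert\Psi_{[A,B]} - \Phi_{[A,B]}^{\Lambda_k}\bigr\Vert_{\zeta, n, \Lambda_M} \le C_{d,\zeta}\Bigl(\bigl\Vert\Psi_A - \Phi_A^{\Lambda_k}\bigr\Vert_{\zeta, n+d, \Lambda_M}\, \Vert\Psi_B\Vert^\circ_{\zeta, n+d} + \Vert\Phi_A\Vert_{\zeta, n+d}\, \bigl\Vert\Psi_B - \Phi_B^{\Lambda_k}\bigr\Vert_{\zeta, n+d, \Lambda_M}\Bigr)\,.$$
By the rapid thermodynamic-limit hypothesis on $A$ and $B$, the two $\Lambda_M$-restricted differences are $O(\zeta_\gamma(M))$ uniformly in $t$, while the remaining factors are finite by $A, B \in \mathcal{L}_{I, \zeta, n+d}$. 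This settles the case $i=0$.

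For general $i \ge 1$ the Leibniz rule $\tfrac{\D^i}{\D t^i}[A,B] = \sum_{j=0}^{i}\binom{i}{j}[A^{(j)}, B^{(i-j)}]$ passes through the interaction formula, and each summand is again a commutator of interactions whose derivatives of order up to $i$ have a rapid thermodynamic limit with some weight function $\zeta^{(i)}$; applying the $i = 0$ estimate to each summand yields the required bound $C\,\zeta^{(i)}_\gamma(M)$. The statement for $\mathcal{L}_{I,\mathcal{S},\infty}$ then follows by choosing, for each $n$, a single $\zeta_n \in \mathcal{S}$ dominating (in the sense of $\prec$) the finitely many weight functions arising at levels $n$ and $n+d$ and the relevant time-derivative orders, which exists by Lemma~A.1 in \cite{monaco2017adiabatic}.

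The main obstacle is the commutator norm estimate in the second step. The combinatorial bookkeeping — the three-case split on the location of $x,y$, the use of the $F_\zeta$-convolution constant in the mixed case, and the conversion of $\ell^1$-diameter powers into $\zeta$-decay — is standard in the \SLT-calculus, but has to be carried out with care to verify that the $\Lambda_M$-restriction propagates through every partial sum. This is ultimately ensured by the condition $k \ge M + \lambda M^\gamma$ together with the bulk-compatibility of the metrics $d^{\Lambda_k}$. The algebraic telescoping in step one and the Leibniz rule in step three are routine.
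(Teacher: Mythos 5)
Your proof is correct and uses the same core argument as the paper: telescope the commutator's deviation from its limit via $[a,b]-[c,d]=[a-c,b]+[c,b-d]$, then apply the standard SLT-commutator estimate carried through the $\Lambda_M$-restricted norm, and handle higher time-derivatives by the Leibniz rule. The only difference in route is a formulational one: you compare $\Phi^{\Lambda_k}_{[A,B]}$ directly with the constructed infinite-volume interaction $\Psi_{[A,B]}$, whereas the paper invokes the equivalent Cauchy characterisation of Lemma~\ref{lem:equivalentchar} and compares two finite volumes $\Phi^{\Lambda_l}_{[A,B]}$ and $\Phi^{\Lambda_k}_{[A,B]}$ with each other, reducing to the argument of Lemma~D.2 in \cite{henheikteufel20202}. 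The Cauchy route is slightly tidier because it sidesteps the step you leave implicit, namely verifying $\Psi_{[A,B]}\in\mathcal{B}^\circ_{I,\zeta,n}$ (including smoothness and boundedness of $t\mapsto\Psi_{[A,B]}(t)$), which Definition~\ref{cauchydefinition} requires before the convergence estimate is even meaningful; your direct route works too, but you should state this verification explicitly rather than folding it into ``the remaining factors are finite.'' One further small inaccuracy: the claim that ``$d^{\Lambda_k}=d$ on $\Lambda_M$ once $k\ge M+\lambda M^\gamma$'' is not quite what bulk-compatibility gives (it only guarantees $d^{\Lambda_k}(x,y)=d(x,y)$ when $d(x,y)\le k$, which need not hold for all $x,y\in\Lambda_M$ if $d>1$); however, the $\Lambda_M$-restricted norm is defined using the $\ell^1$-metric $d$ directly, so this remark is not actually needed for your argument and the estimate goes through unaffected.
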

\begin{proof} 
	Using Lemma \ref{lem:equivalentchar}, the proof is completely analogous to the one of Lemma D.2 in \cite{henheikteufel20202}. 
\end{proof}
\begin{lem} {\rm (Commutator with Lipschitz potential)}  \label{cauchy2} ~\\
	Let $A \in \mathcal{L}_{I,\zeta, n+d+1 }$ and $v \in \mathcal{V}_I$ both have a rapid thermodynamic limit with exponent $\gamma \in (0,1)$. Then the commutator $[A,V_v] \in \mathcal{L}_{I,\zeta, n}$ also has a rapid thermodynamic limit with exponent $\gamma $. In particular, if $A \in \mathcal{L}_{I,\mathcal{S}, \infty }$ has a rapid thermodynamic limit with exponent $\gamma $, then $[A,V_v] \in \mathcal{L}_{I,\mathcal{S}, \infty }$ also has a rapid thermodynamic limit with exponent~$\gamma $. 
\end{lem}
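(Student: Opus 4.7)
The plan is to verify the Cauchy characterisation of the rapid thermodynamic limit (Lemma~\ref{lem:equivalentchar}) for $[A, V_v]$. The natural interaction associated with the commutator is
\begin{equation*}
\Phi_{[A, V_v]}^{\Lambda_k}(Y) := \left[\Phi_A^{\Lambda_k}(Y),\; \sum_{x \in Y} v^{\Lambda_k}(x)\, a_x^* a_x\right],
\end{equation*}
since $V_v^{\Lambda_k}$ commutes with $\Phi_A^{\Lambda_k}(Y)$ except through the terms supported inside $Y$. That $[A, V_v]$ already lies in $\mathcal{L}_{I, \zeta, n}$ is a direct consequence of Lemma~\ref{lipschitzcomm}, so the only remaining task is to control the bulk norms of the corresponding differences of interactions (and of their time derivatives).

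Fix $M \in \N$ and choose $\lambda > 0$ large enough to satisfy the rapid TDL requirements for both $A$ (at level $n+d+1$) and $v$. For $k, l \ge M + \lambda M^\gamma$ and $Y \subset \Lambda_M$, I split
\begin{equation*}
\Phi_{[A, V_v]}^{\Lambda_l}(Y) - \Phi_{[A, V_v]}^{\Lambda_k}(Y) \;=\; T_1(Y) + T_2(Y),
\end{equation*}
where $T_1(Y) := \bigl[\Phi_A^{\Lambda_l}(Y) - \Phi_A^{\Lambda_k}(Y),\; \sum_{x \in Y} v^{\Lambda_l}(x)\, a_x^* a_x\bigr]$ and $T_2(Y) := \bigl[\Phi_A^{\Lambda_k}(Y),\; \sum_{x \in Y} (v^{\Lambda_l}(x) - v^{\Lambda_k}(x))\, a_x^* a_x\bigr]$. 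The key observation is that $T_2(Y) \equiv 0$: the rapid TDL of $v$ (Definition~\ref{cauchydefinition}(b)) forces $v^{\Lambda_l}|_{\Lambda_M} = v^{\Lambda_k}|_{\Lambda_M}$ exactly, so every coefficient of $a_x^* a_x$ with $x \in Y \subset \Lambda_M$ vanishes. This sharp cancellation is the only place the rapid TDL of $v$ enters the argument.

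For $T_1(Y)$ I apply the term-wise Lipschitz commutator estimate underlying Lemma~\ref{lipschitzcomm}: with $\Phi_A^{\Lambda_l}(Y) - \Phi_A^{\Lambda_k}(Y) \in \mathcal{A}_Y$ one obtains $\Vert T_1(Y)\Vert \le \tfrac{1}{2} C_v r\, \mathrm{diam}(Y)^{d+1}\, \Vert \Phi_A^{\Lambda_l}(Y) - \Phi_A^{\Lambda_k}(Y)\Vert$. Summing this against the bulk-norm kernel and absorbing the $\mathrm{diam}(Y)^{d+1}$ factor into the polynomial weight produces
\begin{equation*}
\left\Vert \Phi_{[A, V_v]}^{\Lambda_l} - \Phi_{[A, V_v]}^{\Lambda_k} \right\Vert_{\zeta, n, \Lambda_M} \;\le\; \tfrac{1}{2} C_v\, r \; \left\Vert \Phi_A^{\Lambda_l} - \Phi_A^{\Lambda_k} \right\Vert_{\zeta, n+d+1, \Lambda_M},
\end{equation*}
and the Cauchy hypothesis for $A$ at exponent $n+d+1$ closes the required bound of order $C\,\zeta(M^\gamma)$ at exponent $n$.

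For the higher time derivatives demanded by Definition~\ref{cauchydefinition}, I expand $\frac{\D^i}{\D t^i}[A(t), V_{v(t)}]$ by the Leibniz rule into a finite sum of commutators of the form $[\frac{\D^j}{\D t^j} A(t),\; V_{\frac{\D^{i-j}}{\D t^{i-j}} v(t)}]$, and handle each summand by the same decomposition, with the weight function $\zeta^{(i)}$ inherited from the rapid TDLs of $A$ and $v$ at order $i$. The statement for $\mathcal{L}_{I, \mathcal{S}, \infty}$ then follows by selecting the appropriate weight sequence $(\zeta_n)$ level by level, exactly as in Lemma~\ref{cauchy1}. There is essentially no genuine obstacle here; the only care required is the bookkeeping of weight functions, because the rapid TDL of $v$ is a \emph{sharp equality} in the bulk rather than merely a quantitative approximation, which removes what would otherwise be the most delicate error term.
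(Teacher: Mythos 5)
Your argument is correct and follows the route the paper intends: it verifies the Cauchy characterisation from Lemma~\ref{lem:equivalentchar} for the natural commutator interaction, splits the difference into an $A$-variation term and a $v$-variation term, kills the latter exactly by the sharp bulk equality in the rapid TDL of the Lipschitz potential, and bounds the former with the term-wise Lipschitz commutator estimate of Lemma~\ref{lipschitzcomm} at exponent $n+d+1$. This is exactly the structure of the proof of Lemma~D.3 in \cite{henheikteufel20202}, which the paper cites as ``completely analogous,'' so there is no substantive difference.
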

\begin{proof} 
	Using Lemma \ref{lem:equivalentchar}, the proof is completely analogous to the one of Lemma D.3 in \cite{henheikteufel20202}. 
\end{proof}
\begin{lem} {\rm (Inverse Liouvillian)}  \label{cauchy3} ~\\
  Let $H \in \mathcal{L}_{I,\exp(-a \, \cdot), \infty}$ and $B$ either an SLT-operator in $\mathcal{L}_{I,\mathcal{S}, \infty }$ or a Lipschitz potential. Assume that $H$ and  $[H,B] \in \mathcal{L}_{I,\mathcal{S}, \infty }$ both have a rapid thermodynamic limit with exponent $\gamma \in (0,1)$.   Then $\mathcal{I}_{H}(B) \in \mathcal{L}_{I,\mathcal{S}, \infty }$ also has a rapid thermodynamic limit with exponent $\gamma$.
\end{lem}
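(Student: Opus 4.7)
The plan is to verify the Cauchy characterisation of Lemma~\ref{lem:equivalentchar} for the interaction associated with $\mathcal{I}_H(B)$. The central observation is that, since $w$ is even (so $\int_{\R} s\,w(s)\,\D s = 0$), applying the Duhamel identity to the Heisenberg evolution of $B$ and interchanging orders of integration yields
\[
\mathcal{I}_{H(t)}^{\Lambda_k}\Ab{B^{\Lambda_k}(t)} \;=\; \I \int_{\R} \D s\, w(s) \int_0^s \D r\, (s-r)\, \mathrm{e}^{\I r \mathcal{L}_{H(t)}^{\Lambda_k}}\Ab{C^{\Lambda_k}(t)}\,,
\]
where $C(t) := [H(t), B(t)]$. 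By Lemma~\ref{lipschitzcomm} (in the Lipschitz case) or directly (in the SLT case), $C$ is an SLT operator in $\mathcal{L}_{I,\mathcal{S},\infty}$, and by hypothesis $C$ has a rapid thermodynamic limit. Thus the a-priori-problematic constant-in-$r$ part of $B$, which is the only part of $B$ not manifestly SLT, has dropped out of the representation, and both cases of the lemma reduce to a single SLT computation. The interaction $\Phi_{\mathcal{I}_H(B)}$ is then obtained from this representation by the standard conditional-expectation construction of Appendix~D in \cite{henheikteufel20202}, applied under the $r$- and $s$-integrals.

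To check the Cauchy condition on $\Lambda_M$ for $k,l \ge M + \lambda M^\gamma$, I would split
\[
\bigl(\mathrm{e}^{\I r \mathcal{L}_H^{\Lambda_l}}\Ab{C^{\Lambda_l}} - \mathrm{e}^{\I r \mathcal{L}_H^{\Lambda_k}}\Ab{C^{\Lambda_k}}\bigr)
\;=\; \mathrm{e}^{\I r \mathcal{L}_H^{\Lambda_l}}\Ab{C^{\Lambda_l}-C^{\Lambda_k}} + \I \int_0^r \D\sigma\, \mathrm{e}^{\I (r-\sigma) \mathcal{L}_H^{\Lambda_l}}\circ(\mathcal{L}_H^{\Lambda_l}-\mathcal{L}_H^{\Lambda_k})\circ \mathrm{e}^{\I\sigma\mathcal{L}_H^{\Lambda_k}}\Ab{C^{\Lambda_k}}
\]
and split the outer integral in the representation above at $|s|\le S_M := M^\gamma/(2v_a)$. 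On the bounded range, the Lieb--Robinson light-cone emanating from any $X \subset \Lambda_M$ stays inside $\Lambda_{M+\lambda M^\gamma}$, so the rapid thermodynamic limit of $H$ (via Corollary~\ref{tdlofderivationscorr}) and of $C$ contribute a bound proportional to $\zeta_\gamma(M)$ times a polynomial prefactor in $|s|$; on the complementary range $|s|>S_M$, the super-polynomial decay $w \in \mathcal{S}$ absorbs any polynomial prefactor and yields a further remainder in $\mathcal{S}$ evaluated at $M^\gamma$.

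The upgrade from operator-norm to the interaction-norm $\|\cdot\|_{\zeta,n,\Lambda_M}$ is carried out as in Appendix~D of \cite{henheikteufel20202}: each diameter factor $\mathrm{diam}(X)^n$ required by the norm is paid for by an additional power of $|s|$ arising from the conditional-expectation localisation of the Heisenberg-evolved integrand, which is again absorbed by $w$ at the price of slightly degrading $\zeta$ to some $\xi \in \mathcal{S}$ with $\zeta \prec \xi$. Time derivatives are handled by differentiating under the integrals, using that $\dot H$ and $\dot C$ inherit the rapid thermodynamic limit from $H$ and $C$. The main obstacle will be the bookkeeping of weight functions: one must pick a sequence of weights in $\mathcal{S}$ inductively so that at each order $(n,i)$ the combined degradations remain in $\mathcal{S}$, which is exactly the role of Lemma~\ref{SeqLem}.
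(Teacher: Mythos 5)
Your top-level strategy matches the paper's: rewrite $\mathcal{I}_H(B)$ via Duhamel as an integral over Heisenberg-evolved copies of $C=[H,B]$ (using $\int s\,w(s)\,\D s = 0$), verify the Cauchy characterisation of Lemma~\ref{lem:equivalentchar}, and control the time integral with a cutoff $S_M\propto M^\gamma$ together with Lieb--Robinson light-cone reasoning. This is indeed the route taken — the paper's proof is also brief and refers to Lemma~D.4 in \cite{henheikteufel20202} with the modification that the truncated norms $\Vert\cdot\Vert_{\zeta,n,\Lambda_M}$ and the Cauchy characterisation replace the convergence-in-norm argument of the uniform-gap case.

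There are, however, two points where your sketch glosses over what the paper flags as ``one crucial point''. First, the interaction of $\mathcal{I}_H(B)$ is built by applying shell differences $\Delta_m = \mathbb{E}_{\Lambda_m}-\mathbb{E}_{\Lambda_{m-1}}$ to the evolved local terms, so for the truncated norm you must simultaneously produce decay in the shell index $m$ (to pay the $F_\zeta(d(x,y))^{-1}$ factor) and decay in $M$ (the Cauchy rate $\zeta_\gamma(M)$). Your split at $|s|\le S_M$ alone gives the $\zeta_\gamma(M)$ rate on the bounded range but not obviously the $m$-decay there, and vice versa on the unbounded range. The paper resolves this by a Cauchy--Schwarz interpolation, bounding $\Vert(\Delta_m^{\Lambda_l}-\Delta_m^{\Lambda_k})\Ab{\cdot}\Vert$ by $\sqrt{\mathrm{I}\cdot\mathrm{II}}$ so that a factor $\sqrt{\hat\zeta(m)}$ (shell decay) is carried alongside a factor decaying as $M\to\infty$ coming from the rapid thermodynamic limit of $H$ through the truncated norm $\Vert\Phi_H^{\Lambda_l}-\Phi_H^{\Lambda_k}\Vert_{\exp(-a\cdot),0,\Lambda_{M'}}$, with $M'=M+M^\gamma$ and $T\propto M^\gamma$. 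The price is a degradation $\zeta\mapsto\sqrt{\zeta}$, which is why $\gamma$ must remain in $(0,1)$. You would need some version of this geometric-mean trick (or the interpolation inequality \eqref{trivial}) to make your estimate close; as stated, the bookkeeping does not obviously produce a bound that is summable over shells and simultaneously $\mathcal{O}(\zeta_\gamma(M))$.

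Second, your final appeal to Lemma~\ref{SeqLem} is misplaced: that lemma constructs chains of decay functions $f_j$ for the $\Vert\cdot\Vert_f$-norms on quasi-local observables, not weight functions $\zeta$ for interaction norms. The legal way to ``slightly degrade $\zeta$'' in $\mathcal{S}$, as the paper does, is Lemma~A.1 of \cite{monaco2017adiabatic}, which guarantees that for any pair $\zeta,\xi\in\mathcal{S}$ one dominates the other, and hence that the finitely many weight degradations along the construction can be tracked inside $\mathcal{S}$.
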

\begin{proof}
	Using Lemma \ref{lem:equivalentchar}, the proof is analogous to the one of Lemma D.4 in \cite{henheikteufel20202}. But, there is one crucial point to take care of: One arrives at
	\begin{align*}
	&\hspace{-10pt}\left\Vert\left( \Delta_m^{\Lambda_l}- \Delta_m^{\Lambda_k}\right)\Ab{\Phi_B^{\Lambda_k}(Y)}\right\Vert \;\le \;\sqrt{\text{I} \cdot \text{II}} \\ 
	&\; \le \;  4 \ \Lambda_M\text{-diam}(Y)^d \ \big\Vert \Phi_B^{\Lambda_k}(Y) \big\Vert \ \sqrt{\hat{\zeta}(m)} \ \bigg[C(T)+\tilde{C}(T) \ \cdot \bigg( \Vert \Phi_H\Vert_{I,\exp(-a \, \cdot),0 } \times \\ & \quad \times  \bigg(\sup_{y\in \Lambda_M}\sum_{x \in \Lambda_l \setminus\Lambda_{M'}}F_{\exp(-a \, \cdot)}\left(d^{\Lambda_l}(x,y)\right) + \sup_{y\in \Lambda_M}\sum_{x \in \Lambda_k \setminus\Lambda_{M'}}F_{\exp(-a \, \cdot)}\left(d^{\Lambda_k}(x,y)\right) \ \bigg) \\ &  \quad  + \  \Vert F_{\exp(-a \, \cdot)} \Vert \  \left\Vert \Phi_H^{\Lambda_l}-\Phi_H^{\Lambda_k} \right\Vert_{\exp(-a \, \cdot),0,\Lambda_{M'}} \bigg)\bigg]^{\frac{1}{2}}, 
	\end{align*}
	where we choose $M' = M + M^{\gamma}$ (i.e. $\lambda \to \lambda + 1$ in Definition~\ref{cauchydefinition} resp.~Lemma \ref{lem:equivalentchar}) and $T = \frac{aM^{\gamma}}{8 \Vert \Phi_H\Vert }$, such that the term in the square brackets decays faster than any polynomial as $M \to \infty$. So, after a possible adjustment of $\zeta \in \mathcal{S}$, which is legal by Lemma A.1 in \cite{monaco2017adiabatic}, the proof comes to an end in the same manner as in Lemma~D.4 in \cite{henheikteufel20202}. 
\end{proof}

 \end{document}